\newcommand*{\QUANTUM}{}%
\newtheorem{thm}{\protect\theoremname}
\theoremstyle{plain}
\newtheorem{lem}[thm]{\protect\lemmaname}
\theoremstyle{plain}
\newtheorem{rem}[thm]{\protect\remarkname}
\theoremstyle{plain}
\newtheorem*{lem*}{\protect\lemmaname}
\theoremstyle{plain}
\theoremstyle{plain}
\newtheorem{cor}[thm]{\protect\corollaryname}
\newenvironment{manualtheorem}[1]{%
  \manualtheoreminner
}{\endmanualtheoreminner}
  \providecommand{\corollaryname}{Corollary}
  \providecommand{\lemmaname}{Lemma}
  \providecommand{\propositionname}{Proposition}
  \providecommand{\remarkname}{Remark}
\providecommand{\theoremname}{Theorem}
\begin{document}

\global\long\def\ve{\varepsilon}
\global\long\def\R{\mathbb{R}}
\global\long\def\Rn{{\mathbb{R}^{n}}}
\global\long\def\Rd{{\mathbb{R}^{d}}}
\global\long\def\E{\mathbb{E}}
\global\long\def\P{\mathbb{P}}
\global\long\def\bx{\mathbf{x}}
\global\long\def\vp{\varphi}
\global\long\def\ra{\rightarrow}
\global\long\def\smooth{C^{\infty}}
\global\long\def\symm{\mathcal{S}^n}
\global\long\def\psd{\mathcal{S}^n_{+}}
\global\long\def\pd{\mathcal{S}^n_{++}}
\global\long\def\dom{\mathrm{dom}\,}
\global\long\def\intdom{\mathrm{int}\,\mathrm{dom}\,}
\global\long\def\Tr{\mathrm{Tr}}

\newcommand{\LL}[1]{\textcolor{blue}{[LL:#1]}}
\newcommand{\YT}[1]{\textcolor{cyan}{[YT:#1]}}
\newcommand{\REV}[1]{#1}

\newcommand{\bvec}[1]{\mathbf{#1}}
\renewcommand{\Re}{\mathrm{Re}}
\renewcommand{\Im}{\mathrm{Im}}
\newcommand{\textred}[1]{\textcolor{red}{#1}}

\newcommand{\mc}[1]{\mathcal{#1}}
\newcommand{\mf}[1]{\mathfrak{#1}}
\newcommand{\mcV}{\mathcal{V}}
\newcommand{\Vin}{V_{\mathrm{in}}}
\newcommand{\Vstar}{V^{\ast}}
\newcommand{\Jstar}{J_{\ast}}
\newcommand{\tJstar}{\wt{J}_{\ast}}
\newcommand{\Vout}{V_{\mathrm{out}}}
\newcommand{\RPA}{\mathrm{RPA}}
\newcommand{\xc}{\mathrm{xc}}
\newcommand{\vr}{\bvec{r}}
\newcommand{\vF}{\bvec{F}}
\newcommand{\vg}{\bvec{g}}
\newcommand{\vR}{\bvec{R}}
\newcommand{\vq}{\bvec{q}}
\newcommand{\vx}{\bvec{x}}
\newcommand{\ud}{\,\mathrm{d}}
\newcommand{\ext}{\mathrm{ext}}
\newcommand{\KS}{\mathrm{KS}}
\newcommand{\Exc}{E_{\mathrm{xc}}}
\newcommand{\Vxc}{\hat{V}_{\mathrm{xc}}}
\newcommand{\Vion}{\hat{V}_{\mathrm{ion}}}
\newcommand{\abs}[1]{\lvert#1\rvert}
\newcommand{\norm}[1]{\lVert#1\rVert}
\newcommand{\average}[1]{\left\langle#1\right\rangle}
\newcommand{\wt}[1]{\widetilde{#1}}
\newcommand{\hxc}{\mathrm{hxc}}

\newcommand{\etc}{\textit{etc.}~}
\newcommand{\etal}{\textit{et al}}  
\newcommand{\ie}{\textit{i.e.}~}
\newcommand{\eg}{\textit{e.g.}~}
\newcommand{\Or}{\mathcal{O}}
\newcommand{\mcF}{\mathcal{F}}
\newcommand{\lmin}{\lambda_{\min}}
\newcommand{\lmax}{\lambda_{\max}}
\newcommand{\Ran}{\text{Ran}}
\newcommand{\I}{\mathrm{i}} 
\newcommand{\EE}{\mathbb{E}}
\newcommand{\NN}{\mathbb{N}}
\newcommand{\RR}{\mathbb{R}}
\newcommand{\CC}{\mathbb{C}}
\newcommand{\ZZ}{\mathbb{Z}}
\newcommand{\Hper}{H^1_\#(\Omega)}
\newcommand{\jmp}[1]{\jl#1\jr}
\newcommand{\al}{\{\hspace{-3.5pt}\{}
\newcommand{\avg}[1]{\al#1\ar}
\newcommand{\jl}{[\![}
\newcommand{\jr}{]\!]}
\newcommand{\VN}{\mathbb V_N}
\newcommand{\angstrom}{\mbox{\normalfont\AA}~}
\newcommand{\psucc}{p_\text{success}}

\title{Optimal \REV{polynomial based} quantum eigenstate filtering with application to solving quantum linear systems}

\author{Lin Lin}
        \affiliation{Department of Mathematics, University of California, Berkeley,  CA 94720, USA}
        \affiliation{Computational Research Division, Lawrence Berkeley National Laboratory, Berkeley, CA 94720, USA}
        \orcid{0000-0001-6860-9566}

\author{Yu Tong}
        \affiliation{Department of Mathematics, University of California, Berkeley,  CA 94720, USA}
        \orcid{0000-0002-7555-9373}

\begin{abstract}

We present a quantum eigenstate filtering algorithm based on quantum signal processing (QSP) and minimax polynomials. The algorithm allows us to efficiently prepare a target eigenstate of a given Hamiltonian, if we have access to an initial state with non-trivial overlap with the target eigenstate and have a reasonable lower bound for the spectral gap. We apply this algorithm to the quantum linear system problem (QLSP), and present two algorithms based on quantum adiabatic computing (AQC) and quantum Zeno effect respectively. Both algorithms prepare the final solution as a pure state, and achieves the near optimal $\mathcal{\widetilde{O}}(d\kappa\log(1/\epsilon))$ query complexity for a $d$-sparse matrix,
where $\kappa$ is the condition number, and $\epsilon$ is the desired precision.  Neither algorithm uses phase estimation or amplitude amplification.

\end{abstract}

\maketitle

\section{Introduction}

Eigenvalue problems have a wide range of applications in scientific and engineering computing. Finding ground states and excited states of quantum many-body  Hamiltonian operators, Google's PageRank algorithm, and principle component analysis are just a few prominent examples.  Some problems that are not apparently eigenvalue problems may benefit from a reformulation into eigenvalue problems. One noticeable example is the quantum linear systems problem (QLSP), which aims at preparing a state that is proportional to the solution of a given linear system, i.e. $\ket{x}=A^{-1}\ket{b}/\norm{A^{-1}\ket{b}}$ 
on a quantum computer \REV{($\|\cdot\|$ denotes the vector 2-norm)}. Here $A\in\CC^{N\times N}$, and $\ket{b}\in\CC^N$. \REV{We give a more detailed definition of the QLSP in Section~\ref{sec:qlsp_aqc}.}
All QLSP solvers share the desirable property that  the complexity with respect to the matrix dimension can be as low as $\Or(\operatorname{polylog}(N))$, which is exponentially faster compared to  known classical solvers. Due to the wide applications of linear systems, the efficient solution of QLSP has received significant attention in recent years 
\cite{HarrowHassidimLloyd2009,ChildsKothariSomma2017,GilyenSuLowEtAl2019,SubasiSommaOrsucci2019,AnLin2019,chakraborty2018power,WossnigZhaoPrakash2018,CaoPapageorgiouPetrasEtAl2013,XuSunEndoEtAl2019,Bravo-PrietoLaRoseCerezoEtAl2019}. By reformulating QLSP into an eigenvalue problem, recent developments have yielded near-optimal query-complexity with respect to $\kappa$ (the condition number of $A$\REV{, defined as the ratio between the largest and the smallest singular value of $A$, or $\kappa=\norm{A}\norm{A^{-1}}$}) \cite{SubasiSommaOrsucci2019,AnLin2019}, which is so far difficult to achieve using alternative methods.

Consider a Hermitian matrix $H\in\CC^{N\times N}$, which has a known interior eigenvalue $\lambda$ separated from the rest of the spectrum by a gap (or a lower bound of the gap) denoted by $\Delta$. Let $P_{\lambda}$ be the spectral projector associated with the eigenvalue $\lambda$. The goal of the  quantum eigenstate filtering problem is to find a certain smooth function $f(\cdot)$, so that $\norm{f(H-\lambda I)-P_{\lambda}}$ is as small as possible, and  there should be a unitary quantum circuit $U$ that efficiently implements $f(H-\lambda I)$.
Then given an initial state $\ket{x_0}$ so that $\norm{P_{\lambda} \ket{x_0}}=\gamma>0$, $f(H-\lambda I)\ket{x_0}$ filters out the unwanted 
spectral components in $\ket{x_0}$ and is approximately an eigenstate of $H$ corresponding to $\lambda$. We assume that  $H$ can be block-encoded into a unitary matrix $U_H$~\cite{GilyenArunachalamWiebe2019}, which is our input model for $H$ and requires a certain amount of ancilla qubits. 
The initial state is prepared by an oracle $U_{x_0}$. In this paper when comparing the number of qubits needed, we focus on the \textit{extra} ancilla qubits introduced by the various methods used, which exclude the ancilla qubits used in the block-encoding of $H$.

In this paper, we develop a polynomial-based filtering method, which chooses $f=P_{\ell}$ to be a $\ell$-th degree polynomial. We prove that our choice yields the \textit{optimal} compression ratio among all polynomials.  Assume that the information of $H$ can be accessed through  its block-encoding. Then we  demonstrate that the optimal eigenstate filtering polynomial can be efficiently implemented using the recently developed quantum signal processing (QSP) \cite{GilyenSuLowEtAl2019,LowChuang2017}, which allows us to implement a general matrix polynomial with a minimal number of ancilla qubits.
More specifically, the query complexity of our method is  $\wt{\mathcal{O}}(1/(\gamma\Delta)\log(1/\epsilon))$ for the block-encoding of the Hamiltonian and $\Or(1/\gamma)$ for initial state preparation, when using amplitude amplification. The number of extra ancilla qubits is merely $3$ when using amplitude amplification, and $2$ when we do not (in this case the $1/\gamma$ factor in both query complexities become $1/\gamma^2$). However in the application to QLSP we can always guarantee $\gamma=\Omega(1)$, and thus not using amplitude amplification only changes the complexity by a constant factor. 

Using the quantum eigenstate filtering algorithm, we present two algorithms to solve QLSP, both achieving a query complexity $\wt{\Or}(\kappa\log(1/\epsilon))$, with constant success probability \REV{(success is indicated by the outcome of measuring the ancilla qubits)}.
For any $\delta>0$, a quantum algorithm that is able to solve a generic QLSP with cost $\Or(\kappa^{1-\delta})$ would imply 
\textsf{BQP}=\textsf{PSPACE}~\cite{HarrowHassidimLloyd2009}. Therefore our algorithm is  near-optimal with respect to $\kappa$ up to a logarithmic factor,  and is optimal with respect to $\epsilon$. The first algorithm (Theorem~\ref{thm:qlsp}) combines quantum eigenstate filtering  with the time-optimal adiabatic quantum computing (AQC) approach \cite{AnLin2019}. We use the time-optimal AQC to prepare an initial state $\ket{x_0}$, which achieves a nontrivial overlap with the true solution as $\gamma=\abs{\braket{x_0|x}}\sim \Omega(1)$. 
Then we apply the eigenstate filtering to $\ket{x_0}$ once, and  the filtered state is  $\epsilon$-close to $\ket{x}$\ upon measurement.
The second algorithm (Theorem~\ref{thm:qlsp_zeno}) combines quantum eigenstate filtering  with the time-optimal version of the approach based on the quantum Zeno effect (QZE)  \cite{BoixoKnillSomma2009,SubasiSommaOrsucci2019}. Instead of preparing one initial vector satisfying $\gamma\sim \Omega(1)$, a sequence of quantum eigenstate filtering algorithm are applied to obtain to the instantaneous eigenstate of interest along an eigenpath. The final state is again  $\epsilon$-close to $\ket{x}$\ upon measurement. Neither algorithm involves phase estimation or any form of amplitude amplification. 
\REV{The first algorithm achieves slightly better dependence on $\kappa$ than the second algorithm, but this comes at the expense of using a time-dependent Hamiltonian simulation procedure \cite{LowWiebe2018} resulting in this algorithm using more ancilla qubits than the second QZE-based algorithm.}
\REV{For both algorithms, because the success is indicated by the outcome of measuring the ancilla qubits, we can repeat the algorithms $\Or(\log(1/\delta))$ times to boost the final success probability from $\Omega(1)$ to $1-\delta$ for arbitrarily small $\delta$.}



\vspace{1em}
\noindent\textbf{Related works:}

A well-known quantum eigenstate filtering algorithm is phase estimation \cite{Kitaev1995}, which relies on Hamiltonian simulation \cite{lloyd1996universal,BerryChildsCleveEtAl2015,LowWiebe2018,LowChuang2017,LowChuang2019,BerryChildsKothari2015} and the quantum Fourier transform. 
We treat the Hamiltonian simulation $e^{-\I H \tau}$ with some fixed $\tau$ as an oracle called $U_{\mathrm{sim}}$, \REV{where $\tau$ satisfies $\tau\|H\|<\pi$}. \REV{Ref. \cite[Appendix B]{GeTuraCirac2019} contains a very detailed analysis of the complexities of using phase estimation together with amplitude amplification. From the analysis in Ref.~\cite{GeTuraCirac2019}}, this approach requires $\wt{\mathcal{O}}(1/(\gamma^2\Delta\epsilon))$ times of queries for $U_{\mathrm{sim}}$, where $\epsilon$ is the target accuracy (the complexity is the same up to logarithmic factors if we use the block-encoding $U_H$ instead of its time-evolution as an oracle);
\REV{the number of queries to the circuit $U_{x_0}$ that prepares the initial trial state is $\wt{\Or}(1/\gamma)$;}
and the number of extra ancilla qubits is  $\Or(\log(1/(\epsilon \Delta)) $. This is non-optimal with respect to both $\gamma$ and $\epsilon$. 

\REV{Several variants of phase estimation are developed to achieve better dependence on the parameters $\gamma$ and $\epsilon$ \cite{PoulinWocjan2009,poulin2009sampling,GeTuraCirac2019}.} The filtering method developed by Poulin and Wocjan \cite{PoulinWocjan2009} (for a task related to eigenstate filtering) improves the query complexities of $U_{\mathrm{sim}}$ and $U_{x_0}$ with respect to $\gamma$ from $\wt{\Or}(1/\gamma^2)$ to $\wt{\Or}(1/\gamma)$.
Ge \etal. \cite[Appendix C]{GeTuraCirac2019}  shows that the method by Poulin and Wocjan can be adapted to the ground state preparation problem so that the  query complexity of $U_{\mathrm{sim}}$ becomes  $\wt{\mathcal{O}}(1/(\gamma\Delta)\log(1/\epsilon)),$ while the complexity of $U_{x_0}$\ remains $\wt{\Or}(1/\gamma)$. The number of extra ancilla qubits is $\Or(\log(1/(\epsilon\Delta))$. \REV{Similar logarithmic dependence on the accuracy in the query complexity has also been achieved in Ref.~\cite{poulin2009sampling}.}

Ge \etal. \cite{GeTuraCirac2019} also proposed two eigenstate filtering algorithms using linear combination of unitaries (LCU) \cite{ChildsKothariSomma2017,BerryChildsKothari2015}, which uses the Fourier basis and the \REV{Chebyshev} polynomial basis, respectively. For both methods, the query complexities for $U_H$ and $U_{x_0}$ are  $\wt{\mathcal{O}}(1/(\gamma\Delta)\log(1/\epsilon))$ and $\wt{\Or}(1/\gamma)$ respectively, and the number of extra ancilla qubits can be reduced to  $\Or(\log\log(1/\epsilon)+\log(1/\Delta))$. The $\log\log(1/\epsilon)$ factor comes from the use of LCU. We remark that these methods were developed for finding the ground state, but can be adapted to compute interior eigenstates as well. Our filtering method has the same query complexity up to polylogarithmic factors. The number of extra ancilla qubits is significantly fewer and does not depend on either $\epsilon$ or $\Delta$, due to the use of QSP.   Our method also uses the optimal filtering polynomial, which solves a minimax problem as recorded in Lemma~\ref{lem:minimax_poly}. There are several other hybrid quantum-classical algorithms to compute ground state energy and to prepare the ground state \cite{StairHuangEvangelista2019, ParrishMcMahon2019}, whose computational complexities are not yet analyzed and therefore we do not make comparisons here.

For solving QLSP, the query complexity of the original Harrow, Hassidim, and Lloyd (HHL) algorithm \cite{HarrowHassidimLloyd2009} scales as $\wt{\Or}(\kappa^2/\epsilon)$, where $\kappa$ is the condition number of $A$, and $\epsilon$ is the target accuracy. Despite the exponential speedup with respect to the matrix dimension, the scaling with respect to $\kappa$ and $\epsilon$ is significantly weaker compared to that in classical methods.  For instance, for positive definite matrices, the complexity of steepest descent (SD) and conjugate gradient (CG) (with respect to both $\kappa$ and $\epsilon$) are only $\Or( {\kappa}\log(1/\epsilon))$ and $\Or( \sqrt{\kappa}\log(1/\epsilon))$, respectively \cite{Saad2003}.

In the past few years, there have been significant progresses towards reducing the pre-constants for quantum linear solvers. In particular, the linear combination of unitary (LCU) \cite{BerryChildsKothari2015,ChildsKothariSomma2017} and quantum signal processing (QSP) or quantum singular value transformation (QSVT) \cite{LowChuang2017,GilyenSuLowEtAl2019} techniques  can reduce the query complexity to $\Or(\kappa^2 \operatorname{polylog}(\kappa/\epsilon))$. Therefore the algorithm is \REV{almost} optimal with respect to $\epsilon$, but is still suboptimal with respect to $\kappa$. The scaling with respect to $\kappa$ can be reduced by the variable-time amplitude amplification (VTAA) \cite{Ambainis2012} technique, and the resulting query complexity for solving QLSP is $\Or(\kappa\operatorname{polylog}(\kappa/\epsilon)))$ \cite{ChildsKothariSomma2017,chakraborty2018power}. However, VTAA requires considerable modification of the LCU or QSP algorithm, and has significant overhead itself. 
To the extent of our knowledge, the performance of VTAA for solving QLSP has not been quantitatively reported in the literature.

  \begin{table}[ht]
  \makegapedcells
    \begin{center}
    \REV{
      \begin{adjustbox}{width=\textwidth} 
        \begin{tabular}{p{6cm}|p{3cm}|p{6cm}}
          \hline
          \hline
          Algorithm &  Query complexity & Remark\\
          \hline
          HHL  \cite{HarrowHassidimLloyd2009}  & $\wt{\Or}(\kappa^2/\epsilon)$ & w. VTAA, complexity becomes $\wt{\Or}(\kappa/\epsilon^3)$ \cite{ambainis2010variable}\\
          \hline
          Linear combination of unitaries (LCU) \cite{ChildsKothariSomma2017,chakraborty2018power}  & $\wt{\Or}(\kappa^2\operatorname{polylog}(1/\epsilon))$ & w. VTAA, complexity becomes $\wt{\Or}(\kappa\operatorname{polylog}(1/\epsilon))$\\
          \hline
          Quantum singular value transformation (QSVT) \cite{GilyenSuLowEtAl2019}  & $\wt{\Or}(\kappa^2\log(1/\epsilon))$ & \\
          \hline
          Randomization method (RM) \cite{SubasiSommaOrsucci2019}   & $\wt{\Or}(\kappa/\epsilon)$ & w. repeated phase estimation, complexity becomes $\wt{\Or}(\kappa\operatorname{polylog}(1/\epsilon))$ \\
          \hline
          Time-optimal adiabatic quantum computing (AQC(exp)) \cite{AnLin2019} & $\wt{\Or}(\kappa\operatorname{polylog}(1/\epsilon))$ & No need for any amplitude amplification. Use time-dependent Hamiltonian simulation.\\
          \hline
          Eigenstate filtering+AQC (Theorem~\ref{thm:qlsp})& $\wt{\Or}(\kappa\log(1/\epsilon))$  & No need for any amplitude amplification. \\
          \hline
          Eigenstate filtering+QZE (Theorem~\ref{thm:qlsp_zeno})& $\wt{\Or}(\kappa\log(1/\epsilon))$  & No need for any amplitude amplification. Does not rely on any complex subroutines.\\
          \hline
          \hline
        \end{tabular}
      \end{adjustbox}
    }
    \end{center}
    \caption{\REV{The number of queries to the block-encoding of the coefficient matrix $A$ for solving QLSP. Some algorithms were not originally formulated using block-encoding as the input model, but can be converted to use the block-encoding model instead. In the HHL algorithm it is assumed that we have access to time-evolution under the Hermitian coefficient matrix as the Hamiltonian. This assumption can be met when we have the block-encoding of $A$ using Hamiltonian simulation technique that results in small overhead \cite{BerryChildsCleveEtAl2015,LowWiebe2018,LowChuang2017,LowChuang2019,BerryChildsKothari2015}. The LCU method \cite{ChildsKothariSomma2017} and the gate-based implementation of the RM method \cite{SubasiSommaOrsucci2019} both assume oracles to access elements of $A$. However in both cases the oracles lead to a block-encoding $A$ which can be used in the algorithms. The same can be said of the sparse-access input model in Ref.~\cite{chakraborty2018power}. Time complexities and gate complexities are converted to query complexities with respect to the oracles in this paper. \REV{\cite[Thereom 41]{GilyenSuLowWiebe2018Long} gives the implementation of the pseudoinverse using QSVT. This can be used to solve the QLSP by applying this pseudoinverse to the quantum state representing the right-hand side.} }  }
    \label{tab:compare_qlsp_algs}
  \end{table}

The recently developed randomization method (RM) \cite{SubasiSommaOrsucci2019} is the first algorithm that yields near-optimal scaling with respect to $\kappa$, without using techniques such as VTAA. RM was inspired by adiabatic quantum computation (AQC) \cite{FarhiGoldstoneGutmannEtAl2000,AlbashLidar2018,JansenRuskaiSeiler2007}, but relies on the quantum Zeno effect.  Both RM and AQC reformulate QLSP into an eigenvalue problem. The runtime complexity of RM is  $\Or(\kappa\log (\kappa)/\epsilon)$. The recently developed time-optimal AQC(p) and AQC(exp) approaches \cite{AnLin2019} reduces the runtime complexity to $\Or(\kappa/\epsilon)$ and $\Or(\kappa \operatorname{polylog}(\kappa/\epsilon))$, respectively. In particular, AQC(exp) achieves the near-optimal complexity with respect to both $\kappa$ and $\epsilon$, without relying on any amplification procedure. We also remark that numerical observation indicate that the time complexity of the quantum approximate optimization algorithm (QAOA) \cite{farhi2014quantum} can be \REV{as low as} $\Or(\kappa \operatorname{polylog}(1/\epsilon))$ \cite{AnLin2019}. The direct analysis of the complexity of QAOA without relying on the complexity of adiabatic computing (such as AQC(exp)) remains an open question.
We demonstrate that quantum eigenstate filtering provides a more versatile approach to obtain the near optimal complexity for solving QLSP. In particular, it can be used to reduce the complexity with respect to $\epsilon$ for both adiabatic computing and quantum Zeno effect based methods. 
\REV{In Table~\ref{tab:compare_qlsp_algs} we compare these aforementioned algorithms in terms of the number of queries to the block-encoding of $A$. We note that these algorithms rely on different input models but they can all be slightly modified to use the block-encoding assumed in this work.}

\REV{
Recently quantum-inspired classical algorithms based on $\ell^2$-norm sampling assumptions \cite{Tang2018quantum, Tang2019quantum} have been developed that are only up to polynomially slower than the corresponding quantum algorithms. Similar techniques have been applied to solve low-rank linear systems \cite{ChiaLinWang2018, GilyenLloydTang2018}, which achieve exponential speedup in the dependence on the problem size compared to the traditional classical algorithms for the same problem. However, it is unclear whether the classical $\ell^2$-norm sampling can be done efficiently without access to a quantum computer in the setting of this work. The quantum-inspired classical algorithms also suffer from many practical issues making their application limited to highly specialized problems \cite{ArrazolaEtAl2019}. Most importantly, the assumption of low-rankness is crucial in these algorithms. Our work is based on the block-encoding model, which could be used to efficiently represent low-rank as well as full-rank matrices on a quantum computer.}

 
\vspace{1em}
\noindent\textbf{Notations:} 
\REV{In this paper we use the following asymptotic notations besides the usual $\Or$ notation: we write $f=\Omega(g)$ if $g=\Or(f)$; $f=\Theta(g)$ if $f=\Or(g)$ and $g=\Or(f)$; $f=\wt{\Or}(g)$ if $f=\Or(g\operatorname{polylog}(g))$. }

\REV{We use $\|\cdot\|$ to denote vector or matrix 2-norm: when $v$ is a vector we denote by $\|v\|$ its 2-norm, and when $A$ is matrix we denote by $\|A\|$ its operator norm. For two quantum states $\ket{x}$ and $\ket{y}$, we sometimes write $\ket{x,y}$ to denote $\ket{x}\ket{y}$. We use fidelity to measure how close to each other two quantum states are. Note there are two common definitions for the fidelity between two pure states $\ket{\phi}$ and $\ket{\varphi}$: it is either $|\braket{\phi|\varphi}|$ or $|\braket{\phi|\varphi}|^2$. Throughout the paper we use the former definition.}

\vspace{1em}
\noindent\textbf{Organization:}
The rest of the paper is organized as follows. In Section~\ref{sec:block_encode_qsp} we briefly review block-encoding and QSP, as well as using QSP to directly solve QLSP with a non-optimal complexity. In Section~\ref{sec:eigstate_filtering} we introduce the minimax polynomial we are using and our eigenstate filtering method based on it. In Section~\ref{sec:qlsp_aqc} we combine eigenstate filtering with AQC to solve the QLSP. In Section~\ref{sec:quantum_zeno} we present another method to solve the QLSP using QZE and eigenstate filtering. In Section~\ref{sec:conclusion} we discuss some practical aspects of our algorithms and future work.


\section{Block-encoding and quantum signal processing}
\label{sec:block_encode_qsp}

For simplicity we assume  $N=2^n$.
An $(m+n)$-qubit unitary operator $U$ is called an $(\alpha, m, \epsilon)$-block-encoding of an $n$-qubit operator $A$, if 
\begin{equation}
\norm{A-\alpha(\bra{0^m}\otimes I) U (\ket{0^m}\otimes I)}\leq \epsilon.
\label{eqn:block_encoding}
\end{equation}
\REV{Another way to express \cref{eqn:block_encoding} is 
\[
U=\begin{pmatrix}
\wt{A}/\alpha & * \\
* & *
\end{pmatrix},
\]
where $*$ can be any block matrices of the correct size and $\|\wt{A}-A\|\leq \epsilon$. For instance, when $m=1$, $\wt{A}/\alpha$ is an $n$-qubit matrix at the upper-left diagonal block of the $(n+1)$-qubit unitary matrix $U$.} \REV{Note that the fact $\wt{A}/\alpha$ is the upper-left block of a unitary matrix implies $\|\wt{A}/\alpha\|\leq\|U\|=1$. Therefore $\|\wt{A}\|\leq\alpha$.}
Many matrices used in practice can be efficiently block-encoded. For instance, if all entries of $A$ satisfies $\abs{A_{ij}}\le 1$, and $A$ is Hermitian and $d$-sparse (i.e. each row / column of $A$ has no more than $d$ nonzero entries), then $A$ has a $(d,n+2,0)$-encoding $U$. \REV{See \cite[Section~4.1]{ChildsKothariSomma2017} and \cite[Lemma 10]{BerryChildsKothari2015} for details, as well as \cite[Lemma 48]{GilyenSuLowWiebe2018Long} for a more general treatment of sparse matrices.}

With a block-encoding available, QSP allows us to construct a block-encoding for an arbitrary polynomial eigenvalue transformation of $A$.  

\begin{thm}
\label{thm:qsp}
 \textbf{\REV{(Polynomial eigenvalue transformation via quantum signal processing\footnote{\REV{Throughout the paper we use the term QSP to refer to this type of polynomial eigenvalue transformation as well.} } \cite[Theorem 31]{GilyenSuLowEtAl2019})}}: Let $U$ be an $(\alpha,m,\epsilon)$-block-encoding of a Hermitian matrix $A$. Let $P\in\RR[x]$ be a degree-$\ell$ real polynomial and $\abs{P(x)}\le 1/2$ for any $x\in[-1,1]$. Then there exists a $(1, m+2,4 \ell \sqrt{\epsilon} / \alpha)$-block-encoding $\wt{U}$ of $P(A/\alpha)$ using $\ell$ queries of $U$, $U^{\dagger}$, and $\Or((m+1)\ell)$ other primitive quantum gates.
\end{thm}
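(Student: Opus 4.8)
The plan is to prove this in three stages: first reduce the high-dimensional block-encoding to a direct sum of two-dimensional rotation blocks (the \emph{qubitization} step), then apply the scalar quantum signal processing identity inside each block, and finally control the error incurred because $U$ encodes $A$ only up to $\epsilon$. I will assume throughout that $\norm{\wt A/\alpha}\le 1$, which holds because the encoded block sits inside the unitary $U$ (as already noted in the excerpt).

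I would begin by diagonalizing the exactly-encoded Hermitian matrix. Writing $\Pi=\ket{0^m}\bra{0^m}\otimes I$ for the projector onto the encoding block and $\wt A=\alpha\,\Pi U\Pi$, I work in the eigenbasis $\wt A/\alpha=\sum_j\lambda_j\ket{\psi_j}\bra{\psi_j}$ with $\lambda_j\in[-1,1]$, so the target is $P(\wt A/\alpha)=\sum_j P(\lambda_j)\ket{\psi_j}\bra{\psi_j}$. The structural fact to establish is that for each $j$ the vector $\ket{0^m,\psi_j}$ together with its image under $U$ spans a two-dimensional subspace invariant under both $U$ and the reflection $2\Pi-I$, and that within this subspace $U$ acts as a rotation by angle $\arccos(\lambda_j)$. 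Proving this decomposition of the whole unitary into $2\times2$ rotation blocks indexed by the $\lambda_j$ is the technical heart of the argument and the main obstacle; it rests on the Jordan (CS) decomposition of the pair of projectors $\Pi$ and $U\Pi U^{\dagger}$, and it is what lets every subsequent step be analyzed one scalar $\lambda_j$ at a time.

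With the qubitized form available, I would interleave the $\ell$ queries to $U$ and $U^{\dagger}$ with single-qubit phased reflections $e^{\I\phi_k(2\Pi-I)}$, each realized with one extra ancilla and $\Or(m)$ elementary gates. In each $2\times2$ block this product reproduces exactly the scalar QSP recursion, so the resulting corner entry is a polynomial in $\lambda_j$ fixed by the phases $\{\phi_k\}_{k=0}^{\ell}$. The remaining algebraic step is to show that for the given real $P$ of degree $\ell$ with $\abs{P(x)}\le 1/2$ there exist phases realizing it. Here the slack between $1/2$ and $1$ is essential: it lets me build a complex polynomial whose real part equals $P$ and which admits a complementary polynomial $Q$ with $\abs{P}^2+(1-x^2)\abs{Q}^2\le 1$, so a valid unitary completion—and hence the phase sequence—exists. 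Recombining the even and odd parity components of $P$ through a linear-combination step is what forces the second extra ancilla, giving $m+2$ in total while keeping the query count at $\ell$.

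Finally I would account for the block-encoding error. Since $U$ realizes $\wt A$ rather than $A$, with $\norm{\wt A-A}\le\epsilon$ and hence $\norm{\wt A/\alpha-A/\alpha}\le\epsilon/\alpha$, the construction above produces $P(\wt A/\alpha)$ exactly, and it remains to bound $\norm{P(\wt A/\alpha)-P(A/\alpha)}$. A naive Lipschitz estimate would bring in a Markov factor $\ell^{2}$ and degenerate near coincident eigenvalues, so instead I would invoke the QSVT robustness bound $\norm{P(M)-P(M')}\le 4\ell\sqrt{\norm{M-M'}}$ for degree-$\ell$ polynomials bounded by $1$ on $[-1,1]$; feeding in $\norm{M-M'}=\Or(\epsilon/\alpha)$ yields the stated error, and this square-root behaviour—rather than ordinary Lipschitz continuity—is the origin of the $\sqrt{\epsilon}$ factor and is the last nontrivial estimate. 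Assembling the three stages gives the claimed $(1,m+2,4\ell\sqrt{\epsilon}/\alpha)$-block-encoding of $P(A/\alpha)$.
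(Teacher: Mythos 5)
The paper does not actually prove this theorem---it is imported verbatim as Theorem 31 of Gily\'en--Su--Low--Wiebe, so the only comparison available is with that reference's proof, and your sketch reconstructs it faithfully: the CS/Jordan decomposition of the projector pair into $2\times 2$ blocks, the alternating $U,U^{\dagger}$ sequence interleaved with phased projector reflections, the parity splitting plus linear-combination step that consumes the second extra ancilla and is the actual reason for the $\abs{P(x)}\le 1/2$ hypothesis, and the $4\ell\sqrt{\cdot}$ robustness lemma that produces the $\sqrt{\epsilon}$ in the error bound. One small imprecision worth fixing: for a general (non-Hermitian) block-encoding $U$ the subspace $\mathrm{span}\{\ket{0^m,\psi_j},\,U\ket{0^m,\psi_j}\}$ is not invariant under $U$ itself, so $U$ does not literally act as a rotation by $\arccos(\lambda_j)$ there; the correct statement extracted from the CS decomposition concerns the alternating product of $U$ and $U^{\dagger}$ with the reflections, which is what your circuit in the second stage actually uses.
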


We remark that Theorem~\ref{thm:qsp} does not meet all our needs because of the constraint $\abs{P(x)}\leq 1/2$. This requirement comes from decomposing the polynomial into the sum of an even and an odd polynomial and then summing them up. When $P(x)$ naturally has a parity this requirement becomes redundant. This enables us to get rid of 1 ancilla qubit. Also for simplicity we assume the block-encoding of $A$ is exact. Therefore we have the following theorem, which can be proved directly from \cite[Theorem~2 and Corollary~11]{GilyenSuLowEtAl2019}. 

\begin{manualtheorem}{1'}\REV{\textbf{(Polynomial eigenvalue transformation with definite parity via quantum signal processing)}}
\label{thm:qsp_parity}
Let $U$ be an $(\alpha,m,0)$-block-encoding of a Hermitian matrix $A$. Let $P\in\RR[x]$ be a degree-$\ell$ even or odd real polynomial and $\abs{P(x)}\le 1$ for any $x\in[-1,1]$. Then there exists a $(1, m+1,0)$-block-encoding $\wt{U}$ of $P(A/\alpha)$ using $\ell$ queries of $U$, $U^{\dagger}$, and $\Or((m+1)\ell)$ other primitive quantum gates. 
\end{manualtheorem}


Compared to methods such as LCU, one distinct advantage of QSP is that the number of extra ancilla qubits needed is only $1$ as shown in Theorem~\ref{thm:qsp_parity}. Hence QSP may be possibly carried out efficiently on intermediate-term devices. Furthermore, a polynomial can be expanded into different basis functions as $P(x)=\sum_{k=0}^{\ell} c_k f_k(x)$, where $f_k$ can be the monomial $x^k$, the Chebyshev polynomial $T_k(x)$, or any other polynomial. The performance of LCU crucially
depends on the 1-norm $\norm{c}_1=\sum_{k=0}^\ell |c_k|$, which can be very different depending on the expansion~\cite{ChildsKothariSomma2017}. The block encoding $\wt{U}$ in QSP\ is independent of such a choice, and therefore provides a more \textit{intrinsic} representation of matrix function.
We also remark that in QSP, the construction of the block-encoding $\wt{U}$ involves a sequence of parameters called phase factors. For a given polynomial $P(x)$, the computation of the phase factors can be \REV{efficiently} performed on classical computers \cite{Haah2019,GilyenSuLowWiebe2018Long}.  There are however difficulties in computing such phase factors, which will be discussed in Section~\ref{sec:conclusion}. For simplicity we assume that the phase factors are given and computed without error.

As an example, we demonstrate how to use QSP to solve QLSP \REV{with a Hermitian coefficient matrix $A$, given by its $(\alpha,m,0)$-block-encoding $U_A$. We assume that $A,b$ are normalized as 
\[
\norm{A}=1,\quad \braket{b|b}=1.
\]
We also assume $A$ is Hermitian, \REV{and therefore all the eigenvalues of $A$ are real.} General matrices can be treated using the standard matrix dilation method (see Appendix~\ref{sec:apdx_matrix_dilation}).
Due to the  normalization condition, the block-encoding factor satisfies $\alpha\ge \norm{A}=1$. Furthermore, since $\kappa=\norm{A}\norm{A^{-1}}=\norm{A^{-1}}$,  the smallest singular value of $A$ is $1/\kappa$. Hence the eigenvalues of $A/\alpha$ are contained in the set $[-1/\alpha,-1/(\alpha\kappa)]\cup [1/(\alpha\kappa),1/\alpha]\subseteq \mathcal{D}_{1/(\alpha\kappa)}$, where 
\[
\mathcal{D}_{\delta}:=[-1,-\delta]\cup[\delta,1].
\]
Later we will keep using this notation $\mathcal{D}_{\delta}$ to denote sets of this type.}
 We first find a polynomial $P(x)$ satisfying  $|P(x)|\le 1$ for any $x\in[-1,1]$, and $|P(x)- 1/( c x)|\leq \epsilon'$ on $\mc{D}_{1/(\alpha\kappa)}$ for \REV{$c=4\alpha\kappa/3$}. \REV{Note that $\epsilon'$ is the accuracy of the polynomial approximation, so that the unnormalized state $P(A/\alpha)\ket{b}$ would differ from the desired $(\alpha/c)A^{-1}\ket{b}$ by $\epsilon'$. In order to obtain a normalized solution $P(A/\alpha)\ket{b}/\|P(A/\alpha)\ket{b}\|$ that is $\epsilon$-close to the normalized solution $\ket{x}=A^{-1}\ket{b}/\norm{A^{-1}\ket{b}}$, we first note that $\norm{A^{-1}\ket{b}}\geq 1$. So the normalization would amplify the error by a factor of approximately $c/(\alpha\|A^{-1}\ket{b}\|)\leq 4\kappa/3$. Therefore we may choose $\epsilon'=3\epsilon/4\kappa$. } 
Then we can find an odd polynomial of degree $\Or(\alpha\kappa\log(\kappa/\epsilon))$, where $\epsilon$ is the desired precision, satisfying this by \cite[Corollary 69]{GilyenSuLowWiebe2018Long}.  Then \REV{by Theorem~\ref{thm:qsp_parity}} we have {a circuit $\wt{U}$ satisfying}
$$
\begin{aligned}
\wt{U}\ket{0^{\REV{m+1}}}\ket{b}=&\ket{0^{\REV{m+1}}}(P(A/\alpha)\ket{b})+\ket{\phi}\\
\approx& \ket{0^{\REV{m+1}}}\left(\frac{\alpha}{c}A^{-1}\ket{b}\right)+\ket{\phi},
\end{aligned}
$$
where $\ket{\phi}$ is orthogonal to all states of the form $\ket{0^{\REV{m+1}}}\ket{\psi}$. Measuring the ancilla qubits, we obtain the \REV{a normalized quantum state $P(A/\alpha)\ket{b}/\|P(A/\alpha)\ket{b}\|$ that is \REV{$\epsilon$-}close to the normalized} solution $\ket{x}$ with probability $\Theta\left(\left(\frac{\alpha}{c}\norm{A^{-1}\ket{b}}\right)^2\right)$.

As $\norm{A^{-1}\ket{b}}\geq 1$, the probability of success is $\Omega(1/\kappa^2)$. Using amplitude amplification \cite{BrassardHoyerMoscaEtAl2002}, the number of repetitions needed for success can be improved to $\Or(\kappa)$. Furthermore, the query complexity of application of $\wt{U}$ is $\Or(\alpha\kappa\log(\kappa/\epsilon))$. Therefore the overall query complexity is $\Or(\alpha\kappa^2\log(\kappa/\epsilon))$. 

We observe that the quadratic scaling with respect to $\kappa$ is very much attached to the procedure above: each application of QSP costs $\Or(\kappa)$ queries of $U,U^{\dagger}$, and the other
from that QSP needs to be performed for $\Or(\kappa)$ times. The same argument applies to other techniques such as LCU. To reduce the  $\kappa$ complexity along this line, one must modify the procedure substantially to avoid the multiplication of the two $\kappa$ factors, such as using the modified LCU based on VTAA \cite{ChildsKothariSomma2017}.

\section{Eigenstate filtering using a minimax polynomial}
\label{sec:eigstate_filtering}

Now consider a Hermitian matrix $H$, with a known eigenvalue $\lambda$ that is separated from other eigenvalues by a gap $\Delta$. $H$ is assumed to have an $(\alpha,m,0)$-block-encoding denoted by $U_H$. We want to preserve the $\lambda$-eigenstate while filtering out all other eigenstates. Let $P_\lambda$ denote the projection operator into the $\lambda$-eigenspace of $H$. The basic idea is, suppose we have a polynomial $P$ such that $P(0)=1$ and $|P(x)|$ is small for $x\in\mathcal{D}_{\Delta/(2\alpha)}$, \REV{where we use the notation $\mathcal{D}_\delta=[-1,-\delta]\cup[\delta,1]$ that has been introduced earlier,}  then $P((H-\lambda I)/(\alpha+|\lambda|))\approx P_\lambda$. This is the essence of the algorithm we are going to introduce below.
\REV{The reason we need to introduce the factors $2\alpha$ and $\alpha+|\lambda|$ is that the block-encoding of $H-\lambda I$ will involve a factor $\alpha+|\lambda|$, and this is explained in detail in Appendix~\ref{sec:apdx_block_encoding}. Since $|\lambda|\leq \alpha$ by definition of the operator norm, we have $\alpha+|\lambda|\leq 2\alpha$. Therefore when $\lambda$ is separated from the rest of the spectrum of $H$ by a gap $\Delta$, 0 is separated from the rest of the spectrum of $(H-\lambda I)/(\alpha+|\lambda|)$ by a gap $\Delta/(\alpha+|\lambda|)\geq \Delta/(2\alpha)=\wt{\Delta}$.}

We use the following $2\ell$-degree polynomial
\[
R_{\ell}(x;\Delta)=\frac{T_{\ell}\left(-1+2\frac{x^{2}-\Delta^{2}}{1-\Delta^{2}}\right)}{T_{\ell}\left(-1+2\frac{-\Delta^{2}}{1-\Delta^{2}}\right)},
\]
\REV{where $T_\ell(x)$ is the $\ell$-th Chebysehv polynomial of the first kind. This polynomial is inspired by the shifted and rescaled Chebyshev polynomial discussed in \cite[Theorem 6.25]{Saad2003}. }
A plot of the polynomial is given in Fig.~\ref{minimax_poly}. $R_{\ell}(x;\Delta)$ has several nice properties:

\begin{lem}
\label{lem:minimax_poly}
(i) $R_{\ell}(x;\Delta)$ solves the minimax problem
\[
\underset{p(x)\in\mathbb{P}_{2\ell}[x],p(0)=1}{\mathrm{minimize}}  \max_{x\in\mathcal{D}_{\Delta}}|p(x)|.
\]

(ii) $|R_{\ell}(x;\Delta)|\leq2e^{-\sqrt{2}\ell\Delta}$ for all $x\in\mathcal{D}_{\Delta}$
and $0<\Delta\leq1/\sqrt{12}$. Also $R_{\ell}(0;\Delta)=1$.

(iii) $|R_{\ell}(x;\Delta)|\leq1$ for all $|x|\leq1$.
\end{lem}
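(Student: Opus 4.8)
The plan is to reduce all three parts to classical facts about Chebyshev polynomials on a single interval via the substitution $y=x^2$. As $x$ ranges over $\mathcal{D}_\Delta=[-1,-\Delta]\cup[\Delta,1]$, the variable $y=x^2$ ranges over $[\Delta^2,1]$, and under the affine map $L(y)=-1+2\frac{y-\Delta^2}{1-\Delta^2}$ (which sends $[\Delta^2,1]$ bijectively onto $[-1,1]$) we have $R_\ell(x;\Delta)=q(x^2)$ with $q(y)=T_\ell(L(y))/T_\ell(L(0))$. The single computation that organizes everything is that $L(0)=-1-\tfrac{2\Delta^2}{1-\Delta^2}$ lies strictly to the left of $-1$, so $|T_\ell(L(0))|>1$; write $M:=1/|T_\ell(L(0))|\le 1$.

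For part (i) I would first use the symmetry of the problem to restrict to even competitors. Given any admissible $p\in\mathbb{P}_{2\ell}[x]$ with $p(0)=1$, its even part $\tfrac12(p(x)+p(-x))$ still has value $1$ at $0$, has degree at most $2\ell$, and since $\mathcal{D}_\Delta$ is symmetric its sup-norm on $\mathcal{D}_\Delta$ does not increase; hence a minimizer may be taken even, say $p(x)=q(x^2)$ with $\deg q\le \ell$ and $q(0)=1$, and then $\max_{x\in\mathcal{D}_\Delta}|p(x)|=\max_{y\in[\Delta^2,1]}|q(y)|$. This turns the problem into the classical constrained Chebyshev problem of minimizing the sup-norm on $[\Delta^2,1]$ among degree-$\ell$ polynomials normalized to $1$ at the exterior point $0$, whose solution is exactly $q(y)=T_\ell(L(y))/T_\ell(L(0))$. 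I expect the optimality argument here to be the main obstacle and I would prove it by the standard equioscillation/alternation route: $q$ attains the values $\pm M$ with alternating signs at the $\ell+1$ images under $L^{-1}$ of the Chebyshev extremal points. If some admissible $\tilde q$ had $\max_{[\Delta^2,1]}|\tilde q|<M$, then $q-\tilde q$ would change sign at these $\ell+1$ points, producing $\ell$ roots in $(\Delta^2,1)$; together with the additional root at $0$ coming from $q(0)-\tilde q(0)=0$, this gives $\ell+1$ roots of a polynomial of degree at most $\ell$, forcing $q\equiv\tilde q$, a contradiction.

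For part (iii) I would split $[-1,1]$ into $\mathcal{D}_\Delta$ and $\{|x|<\Delta\}$. On $\mathcal{D}_\Delta$ we have $L(x^2)\in[-1,1]$, so $|T_\ell(L(x^2))|\le 1$ and thus $|R_\ell|\le M\le 1$. For $|x|<\Delta$ we have $x^2\in[0,\Delta^2]$, hence $L(x^2)\in[L(0),-1]$, so $|L(x^2)|\le|L(0)|$; since $|T_\ell|$ is increasing in $|z|$ on $\{|z|\ge 1\}$, this yields $|T_\ell(L(x^2))|\le|T_\ell(L(0))|$ and again $|R_\ell|\le 1$. The normalization $R_\ell(0;\Delta)=1$ is immediate from the definition.

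For part (ii) the key is to lower-bound $|T_\ell(L(0))|$. Using $|L(0)|=\frac{1+\Delta^2}{1-\Delta^2}$ one computes the clean identity
\[
|L(0)|+\sqrt{|L(0)|^2-1}=\frac{1+\Delta^2+2\Delta}{1-\Delta^2}=\frac{1+\Delta}{1-\Delta}.
\]
Combining this with the growth estimate $T_\ell(z)\ge\tfrac12\bigl(z+\sqrt{z^2-1}\bigr)^\ell$ for $z\ge 1$ (applied to $z=|L(0)|$, noting $|T_\ell(L(0))|=T_\ell(|L(0)|)$) gives $|T_\ell(L(0))|\ge\tfrac12\bigl(\tfrac{1+\Delta}{1-\Delta}\bigr)^\ell$, so on $\mathcal{D}_\Delta$,
\[
|R_\ell(x;\Delta)|\le \frac{1}{|T_\ell(L(0))|}\le 2\Bigl(\frac{1-\Delta}{1+\Delta}\Bigr)^\ell=2e^{-2\ell\,\mathrm{arctanh}(\Delta)}.
\]
It then remains to check the elementary inequality $2\,\mathrm{arctanh}(\Delta)\ge\sqrt{2}\,\Delta$, which holds comfortably on $0<\Delta\le 1/\sqrt{12}$ and completes the bound $|R_\ell(x;\Delta)|\le 2e^{-\sqrt{2}\ell\Delta}$; the hypothesis $\Delta\le 1/\sqrt{12}$ supplies the slack needed if one prefers to verify this last step through a cruder Taylor estimate of $\mathrm{arctanh}$ rather than the monotone inequality directly.
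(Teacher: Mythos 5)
Your proof is correct, and it reaches the result by a route that differs from the paper's in both main parts. For part (i) the paper does not symmetrize: it proves a stronger ``dual'' statement (its Lemma~\ref{lem:minimax_poly_2}) that among all $p\in\mathbb{P}_{2\ell}[x]$ with $|p|\le 1$ on $\mathcal{D}_\Delta$, the polynomial $Q_\ell(x;\Delta)=T_\ell\left(-1+2\tfrac{x^2-\Delta^2}{1-\Delta^2}\right)$ dominates at \emph{every} exterior point $y\notin\mathcal{D}_\Delta$, by counting $2\ell$ zeros of $Q_\ell-q\,Q_\ell(y)/q(y)$ coming from the $\ell$ equioscillation subintervals on each of the two components of $\mathcal{D}_\Delta$, plus one more at $y$, against degree $2\ell$. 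Your symmetrization $p\mapsto\tfrac12(p(x)+p(-x))$ followed by $y=x^2$ halves the degree and reduces to the textbook single-interval constrained Chebyshev problem with the standard $(\ell+1)$-point alternation argument; this is arguably cleaner, though it is tied to the normalization point being $0$ (the fixed point of $x\mapsto -x$), whereas the paper's version gives optimality for normalization at any exterior point, a generalization it records after the proof. For part (ii) the paper bounds $|L(0)|\ge 1+2\Delta^2$ and invokes the estimate $T_\ell(1+\delta)\ge\tfrac12 e^{\ell\sqrt{\delta}}$, valid only for $\delta\le 3-2\sqrt{2}$, which is exactly where the hypothesis $\Delta\le 1/\sqrt{12}$ enters; your exact evaluation $|L(0)|+\sqrt{|L(0)|^2-1}=\tfrac{1+\Delta}{1-\Delta}$ yields the sharper bound $2\left(\tfrac{1-\Delta}{1+\Delta}\right)^\ell=2e^{-2\ell\,\mathrm{arctanh}(\Delta)}\le 2e^{-2\ell\Delta}$, which implies the stated inequality for all $\Delta\in(0,1)$ with no restriction at all (since $2\,\mathrm{arctanh}(\Delta)\ge 2\Delta\ge\sqrt{2}\,\Delta$). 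Part (iii) is handled the same way in both arguments.
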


\begin{figure}[ht!]
\centering
\includegraphics[width=0.5\textwidth]{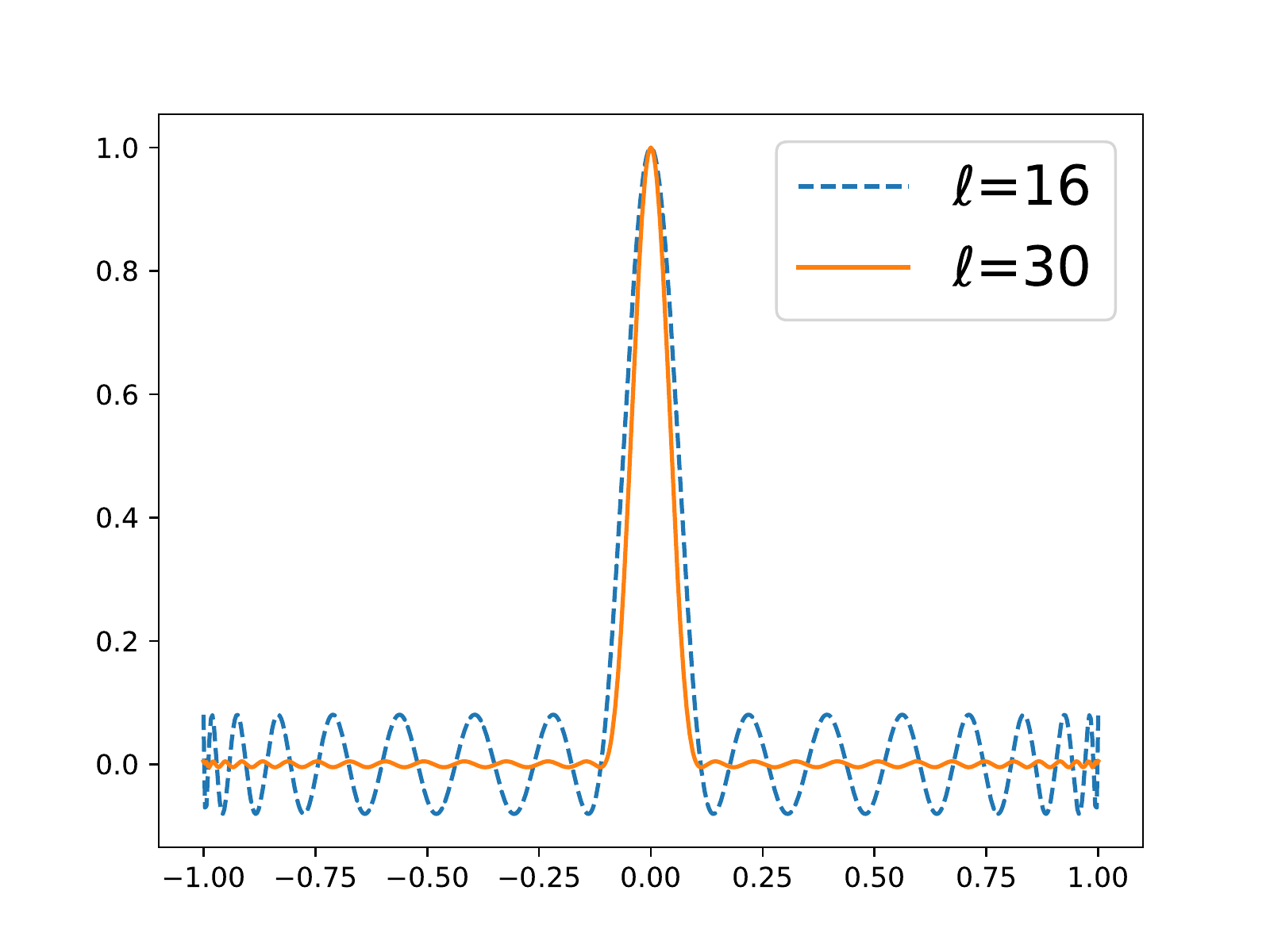}
\caption{The polynomial $R_{\ell}(x,\Delta)$ for $\ell=16$ and $30$, $\Delta=0.1$.}
\label{minimax_poly}
\end{figure}

\REV{A proof of the above lemma is provided in Appendix~\ref{sec:apdx_optimal_poly}.}
If we apply this polynomial to $H-\lambda I$, Lemma~\ref{lem:minimax_poly} (i) states that $R_{\ell}$ achieves the best compression ratio of the unwanted components, among all polynomials of degrees up to $2\ell$. To prepare a quantum circuit, we define $\wt{H}=(H-\lambda I)/(\alpha+|\lambda|)$. \REV{Then} we can also construct a $(1,m+1,0)$-block-encoding for $\wt{H}$ (see Appendix~\ref{sec:apdx_block_encoding}). The gap separating 0 from other eigenvalues of $\wt{H}$ is lower bounded by $\wt{\Delta}=\Delta/2\alpha$, \REV{as explained at the beginning of this section.} \REV{Together with the fact that $\|\wt{H}\|\leq 1$, we find that the spectrum of $\wt{H}$ is contained in $\mathcal{D}_{\wt{\Delta}}\cup\{0\}$.}

\REV{We then apply Lemma~\ref{lem:minimax_poly}. Note that the requirement when $\wt{\Delta}> 1/\sqrt{12}$ might not be satisfied, we can always set $\wt{\Delta}=1/\sqrt{12}$ and this does not affect the asymptotic complexity as $\wt{\Delta}\rightarrow 0$.} Because of (ii) of Lemma~\ref{lem:minimax_poly}, we have 
\[
\|R_\ell(\wt{H},\wt{\Delta})-P_\lambda\|\leq 2e^{-\sqrt{2}\ell\wt{\Delta}}.
\]
Also because of (iii), and the fact that $R_{\ell}(x;\wt{\Delta})$ is even, \REV{we may apply Theorem \ref{thm:qsp_parity} to} implement $R_{\ell}(\wt{H};\wt{\Delta})$ using QSP. This gives the following theorem:

\begin{thm}
\label{thm:eigenstate_filter}
\textbf{(Eigenstate filtering):} Let $H$ be a Hermitian matrix and $U_H$ is an $(\alpha,m,0)$-block-encoding of $H$. If $\lambda$ is an eigenvalue of $H$ that is separated from the rest of the spectrum by a gap $\Delta$, then we can construct a $(1,m+2,\epsilon)$-block-encoding of $P_\lambda$, by $\mathcal{O}((\alpha/\Delta)\log(1/\epsilon))$ applications of (controlled-) $U_H$ and $U^{\dagger}_H$, and $\mathcal{O}((m\alpha/\Delta)\log(1/\epsilon))$ other primitive quantum gates.
\end{thm}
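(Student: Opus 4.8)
The plan is to assemble the theorem directly from the pieces already in place: the exact block-encoding of $\wt{H}=(H-\lambda I)/(\alpha+|\lambda|)$, the minimax polynomial $R_{\ell}(\cdot\,;\wt{\Delta})$ together with its properties from Lemma~\ref{lem:minimax_poly}, and the definite-parity QSP construction of Theorem~\ref{thm:qsp_parity}. The only genuine work is to pick the degree $\ell$ so that the approximation error is at most $\epsilon$, to feed $R_{\ell}$ into QSP, and to verify that the resulting \emph{exact} block-encoding of $R_{\ell}(\wt{H};\wt{\Delta})$ simultaneously serves as an $\epsilon$-accurate block-encoding of $P_\lambda$.

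First I would invoke the Appendix construction to obtain a $(1,m+1,0)$-block-encoding of $\wt{H}$, recording that each use of it costs $\Or(1)$ queries to $U_H$ and $U_H^{\dagger}$ and $\Or(m)$ extra gates. Next I would fix the degree. Using the operator-norm bound $\|R_{\ell}(\wt{H};\wt{\Delta})-P_\lambda\|\leq 2e^{-\sqrt{2}\ell\wt{\Delta}}$ established just before the statement, I solve $2e^{-\sqrt{2}\ell\wt{\Delta}}\leq\epsilon$ for $\ell$, which gives $\ell=\Or((1/\wt{\Delta})\log(1/\epsilon))=\Or((\alpha/\Delta)\log(1/\epsilon))$ after substituting $\wt{\Delta}=\Delta/(2\alpha)$. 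If the gap happens to satisfy $\wt{\Delta}>1/\sqrt{12}$, I replace $\wt{\Delta}$ by $1/\sqrt{12}$ as already noted, which only improves the asymptotics.

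Then I would apply Theorem~\ref{thm:qsp_parity} to the $(1,m+1,0)$-block-encoding of $\wt{H}$ with the even, degree-$2\ell$ polynomial $R_{\ell}(x;\wt{\Delta})$; property (iii) of Lemma~\ref{lem:minimax_poly} supplies the hypothesis $|R_{\ell}(x;\wt{\Delta})|\leq 1$ on $[-1,1]$, so the conditions of Theorem~\ref{thm:qsp_parity} are met. This produces a $(1,(m+1)+1,0)=(1,m+2,0)$-block-encoding of $R_{\ell}(\wt{H};\wt{\Delta})$ using $2\ell$ queries of the block-encoding of $\wt{H}$ and $\Or(m\ell)$ further primitive gates. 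Since each use of the $\wt{H}$ block-encoding costs $\Or(1)$ queries to $U_H,U_H^{\dagger}$, the total is $\Or(\ell)=\Or((\alpha/\Delta)\log(1/\epsilon))$ queries to $U_H,U_H^{\dagger}$ and $\Or(m\ell)=\Or((m\alpha/\Delta)\log(1/\epsilon))$ primitive gates, matching the claimed complexities.

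The last step, and the only one that needs a short argument, is to upgrade the exact block-encoding of $R_{\ell}(\wt{H};\wt{\Delta})$ into an $\epsilon$-accurate block-encoding of $P_\lambda$. By the definition in \cref{eqn:block_encoding}, an exact $(1,m+2,0)$-block-encoding $\wt{U}$ of $R_{\ell}(\wt{H};\wt{\Delta})$ satisfies $\|R_{\ell}(\wt{H};\wt{\Delta})-(\bra{0^{m+2}}\otimes I)\wt{U}(\ket{0^{m+2}}\otimes I)\|=0$, so the triangle inequality together with $\|R_{\ell}(\wt{H};\wt{\Delta})-P_\lambda\|\leq\epsilon$ yields exactly a $(1,m+2,\epsilon)$-block-encoding of $P_\lambda$. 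I do not anticipate a real obstacle: the entire argument is bookkeeping, and the only subtlety worth stating is that the error $2e^{-\sqrt{2}\ell\wt{\Delta}}$ is measured in operator norm and transfers verbatim into the block-encoding error precisely because QSP implements the polynomial exactly.
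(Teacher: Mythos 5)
Your proposal is correct and follows essentially the same route as the paper: shift and rescale to $\wt{H}=(H-\lambda I)/(\alpha+|\lambda|)$ with its $(1,m+1,0)$-block-encoding, choose $\ell=\Or((\alpha/\Delta)\log(1/\epsilon))$ via the bound $\|R_{\ell}(\wt{H};\wt{\Delta})-P_\lambda\|\leq 2e^{-\sqrt{2}\ell\wt{\Delta}}$ from Lemma~\ref{lem:minimax_poly}, and implement the even polynomial exactly with Theorem~\ref{thm:qsp_parity}, after which the triangle inequality converts the exact block-encoding of $R_{\ell}(\wt{H};\wt{\Delta})$ into a $(1,m+2,\epsilon)$-block-encoding of $P_\lambda$. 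The bookkeeping of queries, gates, and ancillas matches the paper's, including the caveat about capping $\wt{\Delta}$ at $1/\sqrt{12}$.
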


Suppose we can prepare a state $\ket{\psi} = \gamma \ket{\psi_\lambda} + \ket{\perp}$ using an oracle $O_\psi$, where  $\ket{\psi_\lambda}$ is a $\lambda$-eigenvector and $\braket{\psi_\lambda|\perp}=0$, for some $0<\gamma\leq 1$. Theorem \ref{thm:eigenstate_filter} states that we can get an $\epsilon$-approximation to $\ket{\psi_{\lambda}}$ with $\mathcal{O}((\alpha/\Delta)\log(\REV{1/(\gamma\epsilon)}))$ queries to $U_H,$ with a successful application of the block-encoding of $P_\lambda$, denoted by $U_{P_\lambda}$. \REV{The fact we have $1/(\gamma\epsilon)$ instead of $1/\epsilon$ in the logarithm is due to the error amplification going from an unnormalized state to a normalized state, similar to that discussed in  the application of QSP to QLSP in Section~\ref{sec:block_encode_qsp}.} The probability of applying this block-encoding successfully, \ie getting all 0's when measuring ancilla qubits, is at least $\gamma^2$. Therefore when $\ket{\psi}$ can be repeatedly prepared by an oracle, we only need to run $U_{P_\lambda}$ and the oracle on average $\mathcal{O}(1/\gamma^2)$ times to obtain $\ket{\psi_{\lambda}}$ successfully. With amplitude amplification we can reduce this number to $\Or(1/\gamma)$. However this is not necessary when $\gamma=\Omega(1)$, when without using amplitude amplification we can already obtain $\ket{\psi_\lambda}$ by using the oracle for initial state and $U_{P_\lambda}$ $\Or(1)$ times.

We  remark that the eigenstate filtering procedure can also be  implemented by alternative methods such as LCU. The polynomial $R_{\ell}(\cdot,\wt{\Delta})$ can be expanded exactly into a linear combination of the first $2\ell+1$ Chebyshev polynomials. The 1-norm of the expansion coefficients is upper bounded by $2 \ell+2$ 
because $|R_{\ell}(x,\wt{\Delta})|\leq 1$. However, this comes at the expense of additional $\Or(\log \ell)$ qubits needed for the LCU expansion \cite{ChildsKothariSomma2017}.


\REV{Besides the projection operator, we can use this filtering procedure to implement many other related operators. First we consider implementing the reflection operator about the target $\lambda$-eigenstate (or $\lambda$-eigenspace if there is degeneracy), $2P_\lambda-I$, which is useful in the amplitude amplification procedure \cite{grover1996fast,BrassardHoyerMoscaEtAl2002}. This problem has been considered in Ref.~\cite{chowdhury2018improved}.}

\REV{For a given Hamiltonian $H$, with the same assumptions as in Theorem~\ref{thm:eigenstate_filter}, and $\wt{H}=(H-\lambda I)/(\alpha+|\lambda|)$ as constructed above, we define
\[
R_\lambda = 2P_\lambda-I,
\]
where $P_\lambda$ is the projection operator into the $\lambda$-eigenspace of $H$. Using a polynomial $S_\ell(x;\Delta)$ constructed from $R_\ell(x;\Delta)$ as introduced in Appendix~\ref{sec:apdx_ref_and_theta_ref}, we can implement the reflection operator $R_\lambda$ through QSP. The cost is summarized as follows:
\begin{thm}
\label{thm:reflection}
Under the same assumption as Theorem~\ref{thm:eigenstate_filter}, a $(1,m+2,\epsilon)$-block-encoding of $R_\lambda$, the reflection operator about the $\lambda$-eigenspace of $H$, can be constructed using $\mathcal{O}((\alpha/\Delta)\log(1/\epsilon))$ applications of (controlled-) $U_H$ and $U^{\dagger}_H$, and $\mathcal{O}((m\alpha/\Delta)\log(1/\epsilon))$ other primitive quantum gates.
\end{thm}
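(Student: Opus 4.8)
The plan is to mirror the proof of Theorem~\ref{thm:eigenstate_filter}, replacing the filtering polynomial $R_\ell$ by a closely related even polynomial $S_\ell$ that approximates the sign pattern of the reflection $R_\lambda$ rather than that of the projector $P_\lambda$. Writing $\wt{H}=(H-\lambda I)/(\alpha+|\lambda|)$ with its $(1,m+1,0)$-block-encoding and gap $\wt{\Delta}=\Delta/(2\alpha)$ as in Section~\ref{sec:eigstate_filtering}, I want a real even polynomial $S_\ell(x;\wt{\Delta})$ of degree $2\ell$ such that (a) $S_\ell(0;\wt{\Delta})\approx 1$, so that it acts as $+1$ on the $\lambda$-eigenspace (eigenvalue $0$ of $\wt{H}$); (b) $S_\ell(x;\wt{\Delta})\approx -1$ for $x\in\mathcal{D}_{\wt{\Delta}}$, so that it acts as $-1$ on the orthogonal complement; and (c) $|S_\ell(x;\wt{\Delta})|\leq 1$ for all $|x|\leq 1$, so that Theorem~\ref{thm:qsp_parity} applies.

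The natural candidate is $S_\ell(x;\wt{\Delta})\propto 2R_\ell(x;\wt{\Delta})-1$. Using Lemma~\ref{lem:minimax_poly}, the identity $R_\ell(0;\wt{\Delta})=1$ gives property (a), while $|R_\ell(x;\wt{\Delta})|\leq 2e^{-\sqrt{2}\ell\wt{\Delta}}$ on $\mathcal{D}_{\wt{\Delta}}$ gives $|(2R_\ell(x;\wt{\Delta})-1)-(-1)|\leq 4e^{-\sqrt{2}\ell\wt{\Delta}}$, which is property (b). The subtlety is property (c): from Lemma~\ref{lem:minimax_poly}(iii) we only know $|R_\ell|\leq 1$, and although $R_\ell\geq -2e^{-\sqrt{2}\ell\wt{\Delta}}$ on $[-1,1]$ (so $R_\ell$ never dips far below $0$), the combination $2R_\ell-1$ can reach $-1-4e^{-\sqrt{2}\ell\wt{\Delta}}$, slightly violating the QSP bound $|S_\ell|\leq 1$. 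I would therefore normalize, dividing by $1+4e^{-\sqrt{2}\ell\wt{\Delta}}$, and check that this rescaling only perturbs properties (a) and (b) by an additional $\Or(e^{-\sqrt{2}\ell\wt{\Delta}})$, which is harmless.

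With $S_\ell$ in hand, the operator bound follows exactly as before: decomposing into the $\lambda$-eigenspace and its complement and using (a)--(c) gives $\norm{S_\ell(\wt{H};\wt{\Delta})-(2P_\lambda-I)}\leq C e^{-\sqrt{2}\ell\wt{\Delta}}$ for an absolute constant $C$. Choosing $\ell=\Or((1/\wt{\Delta})\log(1/\epsilon))=\Or((\alpha/\Delta)\log(1/\epsilon))$ drives this below $\epsilon$. Since $S_\ell$ is even, of degree $2\ell$, and bounded by $1$, Theorem~\ref{thm:qsp_parity} applied to the $(1,m+1,0)$-block-encoding of $\wt{H}$ produces a $(1,m+2,0)$-block-encoding of $S_\ell(\wt{H};\wt{\Delta})$, hence a $(1,m+2,\epsilon)$-block-encoding of $R_\lambda=2P_\lambda-I$, using $2\ell=\Or((\alpha/\Delta)\log(1/\epsilon))$ queries to (controlled-)$U_H,U_H^\dagger$ and $\Or((m\alpha/\Delta)\log(1/\epsilon))$ additional gates, as claimed.

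I expect the main obstacle to be the boundedness constraint (c) rather than the approximation itself: one must pin down the precise range of $R_\ell$ on $[-1,1]$, in particular a sharp lower bound, to justify a normalization that simultaneously preserves $|S_\ell|\leq 1$ and the $\approx\pm 1$ values needed for the reflection. Everything else---the eigenspace decomposition, the exponential error bound from Lemma~\ref{lem:minimax_poly}, and the ancilla and query accounting---carries over essentially verbatim from Theorem~\ref{thm:eigenstate_filter}.
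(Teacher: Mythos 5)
Your proposal is correct and follows essentially the same route as the paper: the paper's Appendix B uses exactly the polynomial $S_\ell(x;\delta)=(2R_\ell(x;\delta)-1)/\max_{y\in[-1,1]}|2R_\ell(y;\delta)-1|$, verifies the same three properties (value near $1$ at $0$, near $-1$ on $\mathcal{D}_{\wt{\Delta}}$, bounded by $1$ in magnitude) via Lemma~\ref{lem:minimax_poly} together with the observation that $R_\ell$ only dips slightly below $0$ on $[-1,1]$, and then invokes Theorem~\ref{thm:qsp_parity} with the same degree and ancilla accounting. The only cosmetic difference is that you normalize by the explicit upper bound $1+4e^{-\sqrt{2}\ell\wt{\Delta}}$ while the paper normalizes by the exact maximum, which changes nothing in the error analysis.
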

For the proof see Appendix~\ref{sec:apdx_ref_and_theta_ref}. This reflection operator further enables us to construct a block-encoding of the $\theta$-reflection operator.
\[
P_\lambda + e^{\I \theta}(I-P_\lambda).
\]
This operator is useful in fixed-point amplitude amplification \cite{yoder2014fixed,grover2005fixed}. The cost is summarized as follows:
\begin{cor}
\label{cor:theta_reflection}
Under the same assumption as Theorem~\ref{thm:eigenstate_filter}, a $(1,m+3,\epsilon)$-block-encoding of $P_\lambda+e^{\I\theta}(I-P_\lambda)$, where $P_\lambda$ is the projection operator into the $\lambda$-eigenspace of $H$, can be constructed using $\mathcal{O}((\alpha/\Delta)\log(1/\epsilon))$ applications of (controlled-) $U_H$ and $U^{\dagger}_H$, and $\mathcal{O}((m\alpha/\Delta)\log(1/\epsilon))$ other primitive quantum gates.
\end{cor}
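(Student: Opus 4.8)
The plan is to realize the $\theta$-reflection operator as a low-degree polynomial transformation of the reflection operator $R_\lambda$ furnished by Theorem~\ref{thm:reflection}, and then to implement that transformation by a single pass of QSP, so that the query complexity is inherited unchanged from the reflection and only one extra ancilla qubit is required. First I would record the algebraic identity
\[
P_\lambda + e^{\I\theta}(I-P_\lambda) = \frac{1+e^{\I\theta}}{2}\,I + \frac{1-e^{\I\theta}}{2}\,R_\lambda =: P(R_\lambda),
\]
where $P(y)=\frac{1+e^{\I\theta}}{2}+\frac{1-e^{\I\theta}}{2}y$ is a degree-one polynomial with \emph{complex} coefficients; this holds because $R_\lambda=2P_\lambda-I$ equals $+1$ on the $\lambda$-eigenspace and $-1$ on its orthogonal complement, so $P(R_\lambda)$ has eigenvalues $1$ and $e^{\I\theta}$ as desired.

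The next step is to certify admissibility, namely $|P(y)|\le 1$ for all $y\in[-1,1]$, which is what makes a subnormalization-$1$ block-encoding possible. Writing $P=a+by$ with $a=\frac{1+e^{\I\theta}}{2}$ and $b=\frac{1-e^{\I\theta}}{2}$, one computes $a\bar b=\frac{\I}{2}\sin\theta$, so the linear term of $|P(y)|^2$ carries coefficient $2\Re(a\bar b)=0$ and
\[
|P(y)|^2=\cos^2(\theta/2)+\sin^2(\theta/2)\,y^2\le 1,
\]
with equality at $y=\pm1$. Hence $P$ is a complex polynomial of sup-norm exactly $1$ on $[-1,1]$ and $P(R_\lambda)$ is unitary, consistent with the claim of a $(1,\cdot,\cdot)$-block-encoding. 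Since the block-encoding of $R_\lambda$ in Theorem~\ref{thm:reflection} is itself obtained by applying the even polynomial $S_\ell(\cdot;\wt{\Delta})$ to $\wt{H}$ through QSP, I would compose to get the even complex polynomial $P\circ S_\ell$ in $\wt{H}$, of the same degree $2\ell=\Or((\alpha/\Delta)\log(1/\epsilon))$ and with $\|P\circ S_\ell\|_{[-1,1]}\le 1$ (using that $S_\ell$ maps $[-1,1]$ into $[-1,1]$). Implementing this single polynomial by QSP therefore consumes the same $\Or((\alpha/\Delta)\log(1/\epsilon))$ queries to (controlled-)$U_H,U_H^\dagger$ and $\Or((m\alpha/\Delta)\log(1/\epsilon))$ additional gates as the reflection. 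For the error budget I would note that $P$ is $1$-Lipschitz, since $|P'|=|1-e^{\I\theta}|/2=|\sin(\theta/2)|\le1$, so the $\epsilon$-error in the block-encoding of $R_\lambda$ propagates to at most $\epsilon$ error in $P(R_\lambda)$, yielding the stated $(1,m+3,\epsilon)$-block-encoding.

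The step I expect to be the genuine obstacle is simultaneously keeping the subnormalization exactly $1$ and accounting for the single extra ancilla. A naive two-term LCU of $I$ and $R_\lambda$ would give subnormalization $|\cos(\theta/2)|+|\sin(\theta/2)|>1$, which is \emph{not} a valid $(1,\cdot,\epsilon)$-block-encoding of the unitary target; the same inflation occurs if one splits $P$ into its real and imaginary parts and recombines them by LCU. The only route that preserves subnormalization $1$ is to implement the complex polynomial $P\circ S_\ell$ directly by QSP. Because Theorem~\ref{thm:qsp_parity} removes an ancilla only for \emph{real} definite-parity polynomials, accommodating the complex phase $e^{\I\theta}$ forces the general QSP construction of \cite{GilyenSuLowEtAl2019}, which reinstates precisely one ancilla relative to the real case; this is exactly the passage from $m+2$ to $m+3$. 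The delicate point of the write-up is thus to invoke the correct definite-parity, complex-valued QSP statement and to verify its normalization hypothesis, which reduces exactly to the sup-norm bound $\|P\circ S_\ell\|_{[-1,1]}\le 1$ established in the second step.
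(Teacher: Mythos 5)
Your algebra is sound: the identity $P_\lambda+e^{\I\theta}(I-P_\lambda)=\tfrac{1+e^{\I\theta}}{2}I+\tfrac{1-e^{\I\theta}}{2}R_\lambda$ and the bound $|P(y)|^2=\cos^2(\theta/2)+\sin^2(\theta/2)y^2\le 1$ are both correct. The gap is the final, load-bearing step: you assert that the even \emph{complex} polynomial $P\circ S_\ell$ can be implemented by a single QSP sequence with subnormalization $1$ at the cost of one extra ancilla, but you cite no theorem that delivers this, and the tools actually available do not. Theorem 1' applies only to \emph{real} polynomials of definite parity; its one extra ancilla comes from averaging a QSP circuit with its phase-conjugate, an LCU whose weights sum to $1$ only because the target is real. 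For a complex polynomial, the single-sequence QSP characterization requires a complementary polynomial $Q$ with $|P(x)|^2+(1-x^2)|Q(x)|^2=1$ identically on $[-1,1]$; at $x=\pm 1$ this forces $|P(S_\ell(\pm 1))|=1$, i.e.\ $|S_\ell(\pm1)|=1$, which fails in general (for even Chebyshev index $\ell$ one has $R_\ell(1;\wt{\Delta})>0$ and hence $|S_\ell(1)|<1$ strictly), so no such $Q$ exists and the direct implementation is impossible. The standard fallback — implement $\Re(P\circ S_\ell)$ and $\Im(P\circ S_\ell)$ separately via Theorem 1' and recombine by LCU — is exactly what the paper warns against: it yields subnormalization $\norm{\Re(P\circ S_\ell)}_\infty+\norm{\Im(P\circ S_\ell)}_\infty=1+|\sin\theta|>1$, not the claimed $(1,m+3,\epsilon)$-block-encoding. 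Your sentence ``the general QSP construction \dots reinstates precisely one ancilla'' is therefore an assertion of the hardest step, not a proof of it; you correctly identified the obstacle but did not overcome it.

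Your premise that implementing the complex polynomial is ``the only route that preserves subnormalization $1$'' is also contradicted by the paper's own proof, which avoids complex-coefficient polynomials altogether. The paper takes the $(1,m+2,\epsilon)$-block-encoding $\mathcal{U}_R$ of $R_\lambda$ from Theorem~\ref{thm:reflection} and wraps it in a one-bit phase estimation: a Hadamard on one fresh ancilla, a controlled-$\mathcal{U}_R$, and a Hadamard (approximately) write the $\pm 1$ eigenvalue of $R_\lambda$ into that ancilla; the phase gate $\ket{0}\bra{0}+e^{\I\theta}\ket{1}\bra{1}$ then imprints $e^{\I\theta}$ only on the $(I-P_\lambda)$ component; and the ancilla is uncomputed. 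This uses controlled-$\mathcal{U}_R$ a constant number of times, so the query count $\Or((\alpha/\Delta)\log(1/\epsilon))$ is preserved, the subnormalization stays $1$, and the single phase-estimation ancilla is exactly the increment from $m+2$ to $m+3$. To repair your write-up you would either need to adopt this circuit-level construction, or prove a complex-polynomial QSP statement with the required normalization — which, for this particular $P\circ S_\ell$, is not available.
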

The proof can be found in Appendix~\ref{sec:apdx_ref_and_theta_ref}.
}

\REV{In this paper we focus on obtaining the eigenstate corresponding to an eigenvalue that is known exactly. If instead of a single known eigenvalue, we want keep all eigenvalues in a certain interval, and filter out the rest, we can use a linear combination of polynomials used to approximate the sign function \cite[Lemma~14]{GilyenSuLowEtAl2019}, together with constant shift. The filtering polynomial for this kind of task can also be obtained numerically through Remez algorithm \cite{remez1934determination}, followed by a optimization based procedure to efficiently identify the phase factors. For more details we refer readers to Ref. \cite{DongMengWhaleyLin2020}.}

\REV{
\begin{rem}
In the special case where $\|H\|=1$, the target eigenvalue is 1, and we have access to a $(1,m,0)$-block-encoding of $H$, then a quadratically improved dependence on the gap can be achieved using polynomials such as \cite[Eq.~(6.113)]{Saad2003}. 
This is useful for obtaining the stationary distribution of an ergodic and reversible Markov chain because the discriminant matrix \cite{szegedy2004quantum, apers2019quantum,apers2019unified} can be block-encoded efficiently in a reflection operator, and its $1$-eigenstate is $\sum_j\sqrt{\pi_j}\ket{j}$ where $\pi=(\pi_j)$ is the stationary distribution. 
\end{rem}
}

\section{Solving QLSP: eigenstate filtering  with adiabatic quantum computing}
\label{sec:qlsp_aqc}

To define QLSP, we assume that a $d$-sparse matrix $A$ can be accessed by  oracles $O_{A,1}$, $O_{A,2}$ as
\begin{equation}
O_{A,1}\vert j,l\rangle  =  \vert j,\nu(j,l)\rangle,\quad
O_{A,2}\vert j,k,z\rangle  =  \vert j,k,A_{jk}\oplus z\rangle, 
\label{eq:oracles_A}
\end{equation}
where $j,k,l,z\in [N]$, and $\nu(j,l)$ is the row index of the $l$-th nonzero element in the $j$-th column. The right hand side vector $\ket{b}$ can be prepared with an oracle $O_B$ as
\begin{equation}
O_B\ket{0} = \ket{b}.
\label{eq:oracles_b}
\end{equation}
This is the same as the assumption used in \cite{ChildsKothariSomma2017, SubasiSommaOrsucci2019}. The oracles can be used to construct a $(d,n+2,0)$-block-encoding of $A$ \cite{BerryChildsKothari2015,ChildsKothariSomma2017}.

\REV{We assume the singular values of $A$ are contained in $[1/\kappa,1]$ for some $\kappa>1$. Therefore $\kappa$ here is an upper bound for the condition number, which is defined as the ratio between the largest and the smallest singular values. It is thus guaranteed that when $A$ is Hermitian its eigenvalues are contained in $\mc{D}_{1/\kappa}=[-1,-1/\kappa]\cup[1/\kappa,1]$.  In Ref.~\cite{ChildsKothariSomma2017} it is assumed that that $\|A\|=1$ and the condition number is exactly $\kappa$ \cite[Problem 1]{ChildsKothariSomma2017}, which is slightly stronger than the assumption we are currently using. }

\begin{rem}
\label{rem:dilation_non_hermitian}
\REV{
We can always assume without loss of generality that $A$ is Hermitian. Because when $A$ is not Hermitian we can solve an extended linear system as described in Appendix~\ref{sec:apdx_matrix_dilation}, where the coefficient matrix is
\[
\begin{pmatrix}
0 & A\\
A^\dagger  & 0
\end{pmatrix}
\]
This is a Hermitian matrix, and when $A$ is $d$-sparse, this matrix is $d$-sparse as well.
If $A$ has singular values $\{\sigma_k\}$, then the dilated Hermitian matrix has real eigenvalues $\{\pm\sigma_k\}$. Therefore the two matrices have the same condition number, and when the singular values of $A$ are contained in $[1/\kappa,1]$ the spectrum of the above dilated matrix is contained in $[-1,-1/\kappa]\cup[1/\kappa,1]$.
}
\end{rem}


We will then apply the method we developed in the last section to QLSP. To do this we need to convert QLSP into an eigenvalue problem. For simplicity we assume $A$ is Hermitian positive-definite. The indefinite case is addressed in Appendix~\ref{sec:apdx_matrix_dilation}, which \REV{uses different Hamiltonians but} only requires minor modifications. We define
\begin{equation}
\label{eq:H1}
H_1=\begin{pmatrix}
0 & AQ_b\\
Q_bA & 0
\end{pmatrix}=\ket{0}\bra{1}\otimes AQ_b + \ket{1}\bra{0}\otimes Q_bA,
\end{equation}
where $Q_b = I-\ket{b}\bra{b}$. \REV{This Hamiltonian has been used in Refs.~\cite{SubasiSommaOrsucci2019,AnLin2019}.} As discussed in Appendix~\ref{sec:apdx_block_encoding}, we can construct a $(d,n+4,0)$-block-encoding of $H_1$, denoted by $U_{H_1}$ by applying $O_B, O_{A,1}, O_{A,2}$ twice.

We may readily verify that the 0-eigenspace, \ie the null space, of $H_1$ is spanned by $\ket{0}\ket{x}=(x,0)^\top$, where $\ket{x}$ is the solution, \ie $A\ket{x}\propto\ket{b}$, and $\ket{1}\ket{b}=(0,b)^\top$\REV{, by considering the null space of $H_1^2$}. The rest of the spectrum is separated from 0 by a gap of $1/\kappa$ \cite{AnLin2019,SubasiSommaOrsucci2019}.
Therefore to apply the eigenstate filtering method, we only need an initial state\footnote{\REV{We will later discuss how to use AQC and QZE to prepare this state. Therefore it is worth pointing out that by initial state here we mean the state on which we apply the eigenstate filtering, rather than the initial state of AQC or QZE.} } with non-vanishing overlap \REV{with the target eigenstate $\ket{0}\ket{x}$} that can be efficiently prepared. We will prepare this initial state using the time-optimal adiabatic quantum computing.

\subsection{Choosing the eigenpath}
\label{sec:quantum_zeno_eigenpath}

To use adiabatic quantum computing we need to first specify the eigenpath we are going to follow.
 We define
\begin{equation}
H_0=\begin{pmatrix}
0 & Q_b\\
Q_b & 0
\end{pmatrix}=\sigma_x\otimes Q_b.
\label{eq:H0}
\end{equation}
and
\[
H(f) = (1-f)H_0 + fH_1,
\]
\REV{where $H_1$ is defined in Eq.~\eqref{eq:H1}.}

We will then evolve the system following the 0-eigenstates of each $H(f)$. These eigenstates form an eigenpath linking the initial state to the solution to the linear system. There are several important properties of the Hamiltonians $H(f)$ and of the eigenpath which we discuss below, though some of them we will only use in the algorithm based on the quantum Zeno effect.

The null space of $H(f)$ is two-dimensional, and we will pay special attention to this fact in our analysis.
The non-zero eigenvalues of $H(f)$ appear in pairs. Let \REV{$\lambda_j(f)$, $j=1,2,\ldots,N-1$} be all the positive eigenvalues of $H(f)$, and $\ket{z_j(f)}$ be the corresponding eigenvectors, then we may readily check
\[
H(f)(\sigma_z\otimes I)\ket{z_j(f)} = -\lambda_j(f)(\sigma_z\otimes I)\ket{z_j(f)}.
\]
Therefore $-\lambda_j(f)$ is also an eigenvalue of $H(f)$ with corresponding eigenvector $(\sigma_z\otimes I)\ket{z_j(f)}$, for \REV{$j=1,2,\ldots,N-1$}. Thus we have obtained all the non-zero eigenvalues and corresponding eigenvectors. 

The form of the matrices in Eqs.~\eqref{eq:H1} and \eqref{eq:H0} is important for achieving $\Or(\kappa)$ complexity in our algorithms because they ensure the gap between 0 and other eigenvalues for all $f$ is lower bounded by
\begin{equation}
\label{eq:gap_aqc}
\Delta_*(f) = 1-f+\frac{f}{\kappa}.
\end{equation}
A proof can be found in \cite{AnLin2019}.

Now we are ready to specify the eigenpath. 
\REV{For any $f$, we let $\ket{x(f)}$ be some vector such that 
\begin{equation}
\label{eq:def_xf}
((1-f)I+fA)\ket{x(f)}\propto \ket{b}.
\end{equation}
}
We can then see that the null space of $H(f)$ is spanned by $\ket{\bar{x}(f)}=\ket{0}\ket{x(f)}$ and $\ket{1}\ket{b}$.
This requirement pins down the choice for $\ket{x(f)}$ up to a time-dependent global phase. 
\REV{By requiring the phase to be geometric, i.e.
\begin{equation}
\label{eq:geometric_phase}
\braket{x(f)|\partial_f|x(f)}=0,
\end{equation}
the eigenpath $\{\ket{x(f)}\}$ becomes uniquely defined when we require $\ket{x(0)}=\ket{b}$. Note the above equation is slightly problematic in that we do not know beforehand that $\ket{x(f)}$ is differentiable. However this turns out not to be a problem because we can establish the differentiability in Appendix~\ref{sec:apdx_eigenpath}.  Furthermore, we have the estimate
\begin{equation}
\label{eq:bound_eigenpath_derivative}
\begin{aligned}
\|\partial_f \ket{x(f)}\| 
&\leq \frac{2}{\Delta_*(f)}.
\end{aligned}
\end{equation}
The derivation of the existence and uniqueness of the differentiable eigenpath, together with the estimate \eqref{eq:bound_eigenpath_derivative} are given in Appendix~\ref{sec:apdx_eigenpath}.
}

An important quantity we need to use in our analysis is the eigenpath length
\[
L=\int_0^1 \|\partial_f\ket{x(f)}\|\text{d}f,
\]
and by (\ref{eq:bound_eigenpath_derivative}) we have
\begin{equation}
\label{eq:bound_eigenpath_length}
L\leq \int_0^1 \frac{2}{\Delta_*(f)}\text{d}f = \frac{2\log(\kappa)}{1-1/\kappa}.
\end{equation}
We also define the eigenpath length $L(a,b)$ between $0<a<b<1$ and it is bounded by
\begin{equation}
\label{eq:bound_segment_eigenpath_length}
L(a,b)=\int_a^b \|\partial_f\ket{x(f)}\|\text{d}f\leq \frac{2}{1-1/\kappa}\log\left(\frac{1-(1-1/\kappa)a}{1-(1-1/\kappa)b}\right)=:L_{*}(a,b).
\end{equation}
%

\subsection{Time-optimal adiabatic quantum computing}

Here we briefly review the procedure of solving QLSP using the recently developed time-optimal AQC~\cite{AnLin2019} and the eigenpath described in the previous section \REV{that has been used in \cite{AnLin2019,SubasiSommaOrsucci2019}}.

As noted before, the null space of $H(f)$ is two-dimensional, which contains an unwanted  0-eigenvector $\ket{1}\ket{b}=(0,b)^\top$. 
However this 0-eigenvector is not accessible in the AQC time-evolution
$$
\frac{1}{T}\I \partial_s \left|\psi_T(s)\right> = H(f(s))\left|\psi_T(s)\right>, \quad    \ket{\psi_T(0)}=\ket{0}\ket{b},
$$
for scheduling function $f:[0,1]\rightarrow [0,1]$, which is a strictly increasing mapping   with $f(0) = 0, f(1) = 1$. We find  that  
\[
(\bra{1}\REV{\bra{b}}) \ket{\psi_T(s)} = 0,
\]
\REV{
for all $s\in[0,1]$. This is due to 
$$
\frac{1}{T}\I \partial_s (\bra{1}\REV{\bra{b}})\left|\psi_T(s)\right> =(\bra{1}\REV{\bra{b}}) H(f(s))\left|\psi_T(s)\right>=0,
$$
and $(\bra{1}\REV{\bra{b}}) \ket{\psi_T(0)} = 0$.
}
This fact gets rid of the problem.

The parameter $T$ needed to reach a certain target accuracy $\epsilon$ is called the runtime complexity (or simply the time complexity).
The simplest choice for the scheduling function is $f(s)=s$, which gives the ``vanilla AQC''. Besides $\ket{0}\ket{x}$, all other eigenstates of $H_1$ that can be connected to $\ket{0}\ket{b}$ through an adiabatic evolution are separated from $\ket{0}\ket{x}$ by an energy gap of at least $1/\kappa$ \cite{AnLin2019,SubasiSommaOrsucci2019}. The time complexity of vanilla AQC is at least $T\sim \Or(\kappa^2/\epsilon)$  \cite{JansenRuskaiSeiler2007,AnLin2019,AlbashLidar2018,ElgartHagedorn2012}.

By properly choosing a scheduling function $f(s)$, the time complexity of AQC can be significantly improved. There are two time-optimal scheduling functions proposed in \cite{AnLin2019}. The first method is called AQC(p). For $1<p<2$, AQC(p) adopts the schedule
\begin{equation}
    f(s) = \frac{\kappa}{\kappa - 1}\left[1-\left(1+s(\kappa^{p-1}-1)\right)^{\frac{1}{1-p}}\right]{.}
    \label{eqn:AQCSchedule_explicit}
\end{equation}
This reduces the time complexity to $\Or(\kappa/\epsilon)$, which is optimal for $\kappa$, but the scaling with respect to $\epsilon$ is the same.  The second method is called AQC(exp), which uses a different scheduling function to achieve time complexity
$\Or\left(\kappa\log^2(\kappa)\log^4\left(\frac{\log\kappa}{\epsilon}\right)\right)$. 

All AQC methods are time-dependent Hamiltonian simulation problem, which can be implemented using e.g. truncated Dyson series for simulating the time-dependent Hamiltonian \cite{LowWiebe2018}. Although AQC(exp) scales near-optimally with respect to $\kappa$ and $\epsilon$, numerical evidence indicates that the preconstant of AQC(exp) can be higher than AQC(p). Hence when a low accuracy $\epsilon\sim \Or(1)$ is needed, AQC(p) can require a smaller runtime in practice. In the discussion below, we will consider AQC(p).

The details of the time-dependent Hamiltonian simulation for AQC are discussed in Appendix~\ref{sec:apdx_aqc}, and
the query complexity for implementing AQC(p) on a gate-based quantum computer is $\wt{\Or}(\kappa/\epsilon)$.  

\subsection{Improved dependence on $\epsilon$}

We now use eigenstate filtering to accelerate AQC(p) and reduce the query complexity to $\log(1/\epsilon)$. As mentioned before, once we have access to $H_1$ defined in \eqref{eq:H1}, through the block-encoding $U_{H_1}$ constructed in Appendix~\ref{sec:apdx_block_encoding} we only need an initial state \REV{for eigenstate filtering (note that this is not the initial state of the AQC time-evolution)}:
\begin{equation}
\ket{\wt{x}_0} = \gamma_0 \ket{0}\ket{x} + \gamma_1 \ket{1}\ket{b} + \ket{\perp} 
\label{eqn:x0expand}
\end{equation}
with $|\gamma_0|=\Omega(1)$ and $\ket{\perp}$ orthogonal to the null space. The initial state $\ket{\wt{x}_0}$ can be prepared using the time-optimal AQC procedure. Again we first assume $A$ is Hermitian positive definite. To make $|\gamma_0|=\Omega(1)$ we only need to run AQC(p) to constant precision, and thus the linear dependence on precision is no longer a problem. Therefore the time complexity of AQC(p) is $\mathcal{O}(\kappa)$. However we still need to implement AQC(p) on a quantum circuit. To do this we use the time-dependent Hamiltonian simulation introduced in \cite{LowWiebe2018}, which gives a $\Or(d\kappa\log(d\kappa)/\log\log(d\kappa))$ query complexity to achieve $\Or(1)$ precision\REV{, for a $d$-sparse matrix $A$}. This procedure also needs to be repeated $\mathcal{O}(1)$ times. It should be noted that $\gamma_1$ in Eq.~\eqref{eqn:x0expand} comes entirely from the error of the Hamiltonian simulation, since AQC should ensure that the state is orthogonal to $\ket{1}\ket{b}$ for all $t$. Details on performing this time-dependent Hamiltonian simulation is given in Appendix~\ref{sec:apdx_aqc}.

Then we can run the eigenstate filtering algorithm described in Section~\ref{sec:eigstate_filtering} to precision $\epsilon$ to obtain $R_{\ell}(H_1/d;1/(d\kappa))\ket{\wt{x}_0}$. The $\ket{\perp}$ component will be filtered out, while the $\ket{0}\ket{x}$ and $\ket{1}\ket{b}$ components remain. To further remove the $\ket{1}\ket{b}$ component, we measure the first qubit. Upon getting an outcome 0, the outcome state will just be $\ket{0}\ket{x}+\mathcal{O}(\epsilon)$. The success probability of applying the eigenstate filtering is lower bounded by $|\gamma_0|^2+|\gamma_1|^2$, and the success probability of obtaining 0 in measurement is $|\gamma_0|^2/(|\gamma_0|^2+|\gamma_1|^2)+\mathcal{O}(\epsilon)$. Thus the total success probability is $\Omega(1)$. Each single application of eigenstate filtering applies $U_{H_1}$, and therefore $O_{A,1}$, $O_{A,2}$, and $O_{B}$, for $\Or(d\kappa\log(1/\epsilon))$ times. It only needs to be repeated $\Omega(1)$ times so the total query complexity of eigenstate filtering is still $\Or(d\kappa\log(1/\epsilon))$.

In eigenstate filtering we need $\Or(nd\kappa\log(1/\epsilon))$ additional primitive gates as mentioned in Theorem~\ref{thm:eigenstate_filter}. In time-dependent Hamiltonian simulation the addition number of primitive gates needed is $\Or(d\kappa(n+\log(d\kappa))\frac{\log(d\kappa)}{\log\log(d\kappa)})$. Both procedures are repeated $\Or(1)$ times and therefore in total we need 
$
\Or\left(d\kappa\left(n\log(\frac{1}{\epsilon})+(n+\log(d\kappa))\frac{\log(d\kappa)}{\log\log(d\kappa)}\right)\right)
$
additional primitive gates.

The number of qubits needed in the eigenstate filtering procedure using QSP is $\mathcal{O}(n)$ which mainly comes from the original size of the problem and block-encoding. Extra ancilla qubits introduced as a result of eigenstate filtering is only $\Or(1)$. 
In the Hamiltonian simulation $\mathcal{O}(n+\log(d\kappa))$ qubits are needed (see \REV{Appendix~\ref{sec:apdx_aqc}}). Therefore the total number of qubits needed is $\mathcal{O}(n+\log(d\kappa))$.

The procedure above can be generalized to Hermitian indefinite matrices, and general matrices \REV{that are not necessarily Hermitian (see Appendix~\ref{sec:apdx_matrix_dilation}).  As discussed in Remark~\ref{rem:dilation_non_hermitian}, for general matrices we should assume the singular values instead of eigenvalues of $A$ are contained in $[1/\kappa,1]$.}
Therefore our QLSP solver can be summarized as

\begin{thm}
\label{thm:qlsp}
 $A$ is a $d$-sparse matrix whose \REV{singular values are in $[1/\kappa,1]$} and can be queried through oracles $O_{A,1}$ and $O_{A,2}$ in \eqref{eq:oracles_A}, and $\ket{b}$ is given by an oracle $O_B$ in \eqref{eq:oracles_b}. Then  $\ket{x}\propto A^{-1}\ket{b}$ can be obtained with fidelity 
$1-\epsilon$, \REV{succeeding with probability $\Omega(1)$ with ancilla qubits measurement outcome indicating success,} using
 
 1. $\mathcal{O}\left(d\kappa(\frac{\log(d\kappa)}{\log\log(d\kappa)}+\log(\frac{1}{\epsilon}))\right)$ queries to $O_{A,1}$, $O_{A,2}$, and $O_B$,
 
 2. $
\Or\left(d\kappa\left(n\log(\frac{1}{\epsilon})+(n+\log(d\kappa))\frac{\log(d\kappa)}{\log\log(d\kappa)}\right)\right)
$ other primitive gates,

 3. $\mathcal{O}(n+\log(d\kappa))$ qubits.
\end{thm}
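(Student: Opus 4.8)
The plan is to assemble the four ingredients developed in this section: the reformulation of QLSP as the null-space problem for $H_1$, its block-encoding $U_{H_1}$, the AQC(p) initial-state preparation, and the eigenstate filtering of Theorem~\ref{thm:eigenstate_filter}. I would first treat the Hermitian positive-definite case. Recall that the null space of $H_1$ in \eqref{eq:H1} is spanned by $\ket{0}\ket{x}$ and $\ket{1}\ket{b}$, and the rest of the spectrum is separated from $0$ by a gap $1/\kappa$. Since $U_{H_1}$ is a $(d,n+4,0)$-block-encoding built from $O_B,O_{A,1},O_{A,2}$ (Appendix~\ref{sec:apdx_block_encoding}), in the notation of Theorem~\ref{thm:eigenstate_filter} we have $\alpha=d$ and $\Delta=1/\kappa$, so filtering to precision $\epsilon$ will cost $\Or(d\kappa\log(1/\epsilon))$ oracle queries.

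Second, I would prepare the initial state $\ket{\wt{x}_0}$ of the form \eqref{eqn:x0expand} with $\abs{\gamma_0}=\Omega(1)$. The key observation is that running AQC(p) only to \emph{constant} precision suffices to guarantee a nontrivial overlap, so the adiabatic evolution time is merely $\Or(\kappa)$ rather than $\Or(\kappa/\epsilon)$. Implementing this evolution by the truncated-Dyson-series time-dependent Hamiltonian simulation of \cite{LowWiebe2018} (Appendix~\ref{sec:apdx_aqc}) then costs $\Or\!\left(d\kappa\,\frac{\log(d\kappa)}{\log\log(d\kappa)}\right)$ oracle queries, repeated $\Or(1)$ times. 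I would then apply eigenstate filtering to $\ket{\wt{x}_0}$, which removes the $\ket{\perp}$ component while leaving the two null-space components intact, and finally measure the first qubit: conditioned on outcome $0$, the residual $\ket{1}\ket{b}$ is projected out and the state is $\ket{0}\ket{x}+\Or(\epsilon)$.

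The most delicate step — and the part I expect to require the most care — is the success-probability and fidelity bookkeeping. I would argue that the filtering succeeds with probability at least $\abs{\gamma_0}^2+\abs{\gamma_1}^2$, and that conditioned on success the measurement returns $0$ with probability $\abs{\gamma_0}^2/(\abs{\gamma_0}^2+\abs{\gamma_1}^2)+\Or(\epsilon)$, so the combined success probability is $\Omega(1)$ because $\abs{\gamma_0}=\Omega(1)$. Here one must also use the fact that $\gamma_1$ arises \emph{purely} from the Hamiltonian-simulation error, since exact AQC keeps the evolved state orthogonal to $\ket{1}\ket{b}$ for all times (as established for $\ket{\psi_T(s)}$ above). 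The fidelity claim then follows from the $\epsilon$ approximation quality of $R_\ell$, with the usual unnormalized-to-normalized error amplification accounted for by filtering to precision $\epsilon$. Because success is flagged by ancilla measurements, the $\Omega(1)$ probability can be boosted to $1-\delta$ by $\Or(\log(1/\delta))$ repetitions.

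Finally, I would add the query counts from the two phases, $\Or\!\left(d\kappa\,\frac{\log(d\kappa)}{\log\log(d\kappa)}\right)$ for preparation and $\Or(d\kappa\log(1/\epsilon))$ for filtering, to obtain bound~1; combine the primitive-gate counts from Theorem~\ref{thm:eigenstate_filter} and Appendix~\ref{sec:apdx_aqc} for bound~2; and tally the qubits — $\Or(n)$ for the problem and block-encoding, an $\Or(1)$ QSP ancilla, and $\Or(n+\log(d\kappa))$ for the simulation — for bound~3. The extension to Hermitian indefinite and general non-Hermitian matrices follows from the dilation constructions of Appendix~\ref{sec:apdx_matrix_dilation} together with Remark~\ref{rem:dilation_non_hermitian}, replacing the eigenvalue assumption by the singular-value assumption and changing only constants.
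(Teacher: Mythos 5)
Your proposal follows essentially the same route as the paper's own argument: AQC(p) run to constant precision (time $\Or(\kappa)$, implemented via truncated Dyson series) to obtain $|\gamma_0|=\Omega(1)$, followed by a single application of the eigenstate filtering of Theorem~\ref{thm:eigenstate_filter} with $\alpha=d$, $\Delta=1/\kappa$, a first-qubit measurement to remove the $\ket{1}\ket{b}$ component, and the same success-probability and resource accounting, with the indefinite and non-Hermitian cases deferred to the dilation constructions. The proposal is correct and matches the paper's proof in all essential respects.
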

 When the gate complexity of $O_{A,1}$, $O_{A,2}$, and $O_B$ are $\text{poly}(n)$ the total gate complexity, and therefore runtime, by the above theorem, will be $\wt{\Or}(\text{poly}(n)d\kappa\log(1/\epsilon))$.

\begin{rem}
\REV{Although in total we need $\mathcal{O}(n+\log(d\kappa))$ ancilla qubits, only $\mathcal{O}(\log(d\kappa))$ comes sources other than the block-encoding of $A$. In other words, our method only adds $\mathcal{O}(\log(d\kappa))$ ancilla qubits to those that are unavoidable as long as we use this way of block-encoding of a sparse $A$.  These extra ancilla qubits are mainly a result of using time-dependent Hamiltonian simulation.  Also, although in the theorem we assumed $A$ is a sparse matrix, we have only used this fact to build its block-encoding. Given the block-encoding of a matrix $A$ that is not necessarily sparse, the above procedure can still be carried out directly. This is also true for Theorem~\ref{thm:qlsp_zeno} which we are going to introduce later.}
\end{rem}

We present numerical results \REV{obtained on a classical computer} in Fig.~\ref{fig:qls_results} to validate the complexity estimate. In the numerical test, \REV{we solve the linear system $A\ket{x}\propto \ket{b}$, where} $A$ is formed by adding a randomly generated symmetric positive definite tridiagonal matrix $B$, whose smallest eigenvalue is very close to 0, to a scalar multiple of the identity matrix. After properly rescaling, the eigenvalues of $A$ lie in $[-1,1]$. This construction enables us to estimate condition number with reasonable accuracy without computing eigenvalues. 
The off-diagonal elements of $B$ are drawn uniformly from $[-1,0]$ and the diagonal elements are the negative of sums of two adjacent elements on the same row. The $(0,0)$ and $(N-1,N-1)$ elements of $B$ are slightly larger so that $B$ is positive definite. \REV{$\ket{b}$ is drawn from the uniform distribution on the unit sphere.} 

\REV{With $A$ and $\ket{b}$ chosen, we first run the AQC time evolution for time $\Or(\kappa)$ as described at the beginning of this section, and then apply eigenstate filtering using the polynomial $R_\ell(x;1/d\kappa)$ with degree $2\ell$. Denoting the resulting quantum state by $\ket{\wt{x}}$ we then compute the fidelity $\eta=\abs{\braket{x|\wt{x}}}$. Fig.~\ref{fig:qls_results} shows the relation between $\eta$, $\kappa$, and $\ell$ obtained in the numerical experiment.}

\begin{figure}[ht!]
\begin{center}
\includegraphics[width=0.8\textwidth]{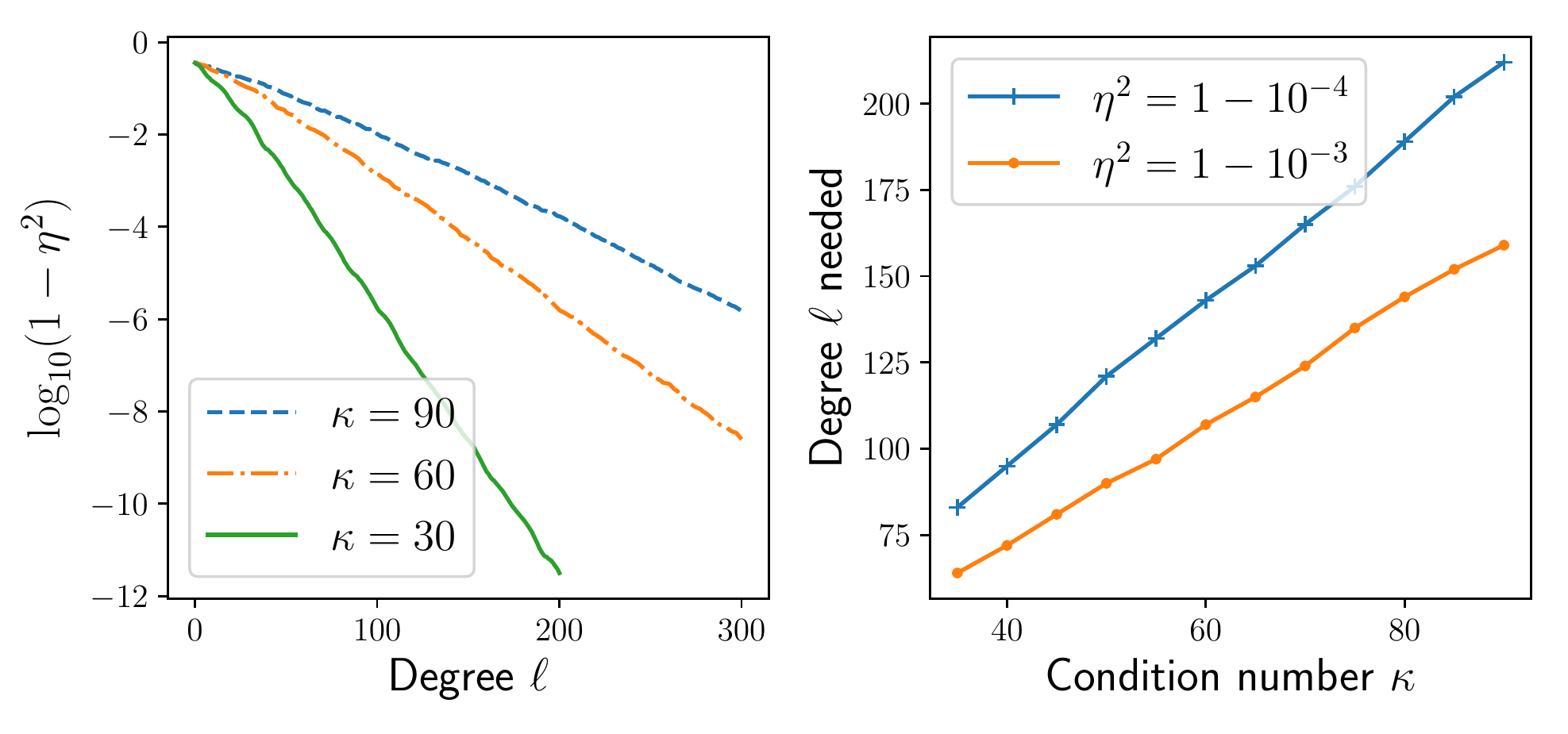}
\end{center}
\caption{Left: fidelity $\eta$ converges to 1 exponentially as $\ell$ in the eigenvalues filtering algorithm increases, for different $\kappa$. Right: the smallest $\ell$ needed to achieve fixed fidelity $\eta$ grows linearly with respect to condition number $\kappa$. The initial state in eigenstate filtering is prepared by running AQC(p) for $T=0.2\kappa$, with $p=1.5$, which achieves an initial fidelity of about 0.6.}
\label{fig:qls_results}
\end{figure}

\section{Solving QLSP: eigenstate filtering  with quantum Zeno effect}
\label{sec:quantum_zeno}

Quantum Zeno effect (QZE) \REV{is the phenomenon that frequent measurements hinders a quantum system's transition from its initial state to other states \cite{misra1977zeno,facchi2008quantum,facchi1999berry,balachandran2000quantum,burgarth2013non}.} A variant of QZE \cite[Lemma 1]{BoixoKnillSomma2009} can be viewed as a particular way for implementing adiabatic quantum computing \cite{poulin2018quantum,somma2008quantum,lemieux2020resource}, \REV{and this is what we mean by QZE throughout this work unless stated otherwise.}  The basic idea of \REV{this variant of} QZE is to follow an adiabatic path through repeated measurement, which acts as projection operators to the instantaneous eigenstate along the adiabatic path. This inspired the randomization method for performing computation based on QZE \cite{BoixoKnillSomma2009,SubasiSommaOrsucci2019}. 

In the context of solving QLSP,  again for simplicity we first assume $A$ is Hermitian positive definite. Instead of running time-dependent Hamiltonian simulation to evolve from the 0-eigenstate  of $H_0$ to the 0-eigenstate of $H_1$, we consider applying a series of projections to traverse the eigenpath. Choosing $0=f_0<f_1<\ldots<f_{M}=1$, for each $j=0,1,\ldots,M-1$, we start from the 0-eigenstate \REV{$\ket{0}\ket{x(f_j)}$} of \REV{$H(f_j)$,} \REV{where $\ket{x(f)}$ is defined in Eqs.~\eqref{eq:def_xf} and \eqref{eq:geometric_phase},} and project into the null space of $H(f_{j+1})$. In the end we obtain the 0-eigenstate of $H(1)=H_1$. This is essentially the same as performing projective measurement for each $j$ \cite{ChildsDeottoFarhiEtAl2002,BoixoKnillSomma2009,SubasiSommaOrsucci2019}. \REV{If} the projective measurements are done approximately using quantum phase estimation or phase randomization, \REV{there will be} a linear  dependence on $1/\epsilon$ in runtime, $\epsilon$ being the desired precision.

In this section we combine eigenstate filtering with Zeno-based computation to reduce the error dependence from $\Or(1/\epsilon)$ to $\Or(\log(1/\epsilon))$, thanks to the possibility of performing approximate projections with high precision. However, several issues demand our attention in the procedure outlined at the beginning of this section. First, we need to specify the choice of $\{f_j\}$, which plays an important role in the lower bound of $M$ needed to ensure at least constant success probability. Second, the null space of each $H(f_j)$ is 2-dimensional. Therefore the eigenpath is not unique, and we need to specify the eigenpath we are going to traverse, which has been done in Sec.~\ref{sec:quantum_zeno_eigenpath}, and to ensure the undesired part of the null space does not interfere with our computation.

\subsection{The algorithm}
\label{sec:quantum_zeno_alg}

\REV{As in Section~\ref{sec:qlsp_aqc}, the goal is to produce a state close to the solution state $\ket{x}$ of the QLSP with fidelity at least $1-\epsilon$ for some given $0<\epsilon<1$.}
In this section we describe the procedure of the Quantum Zeno effect state preparation. We need to choose a scheduling function
\begin{equation}
\label{eq:scheduling}
f(s)=\frac{1-\kappa^{-s}}{1-\kappa^{-1}}
\end{equation}
and define $f_j=f(s_j)$ where $s_j=j/M$. \REV{Without the scheduling we will end up with an unfavorable square dependence on the minimum spectral gap along the eigenpath \cite{ChildsDeottoFarhiEtAl2002}.} This scheduling is chosen so that
\begin{equation}
\label{eq:curve_length_reparam}
L(f_j,f_{j+1})\leq L_*(f_j,f_{j+1})=\frac{2\log(\kappa)}{M(1-1/\kappa)},
\end{equation}
which implies we are dividing \REV{the interval $[0,1]$ of $f$} into $M$ segments of equal $L_*$-length.

Before we describe the algorithm we need to first introduce some notations and block-encodings we need to use. From the block-encoding of $H_0$ and $H_1$ described in Appendix~\ref{sec:apdx_block_encoding}, we can construct $(1-f+fd,n+6,0)$-block-encoding for each $H(f)$, denoted by $U_H(f)$. This construction uses  \cite[Lemma 29]{GilyenSuLowEtAl2019}\REV{, through
\[
H(f) = (1-f+fd)\left(\bra{c}\otimes I\right)\left[ \ket{0}\bra{0}\otimes H_0 + \ket{1}\bra{1}\otimes (H_1/d) \right]\left(\ket{c}\otimes I\right),
\]
where
\[
\ket{c}=\frac{1}{\sqrt{1-f+fd}}(\sqrt{1-f}\ket{0}+\sqrt{fd}\ket{1}).
\]
We need to use $H_1/d$ instead of $H_1$ because there is a $d$ factor involved in the block-encoding of $H_1$ (see Appendix~\ref{sec:apdx_block_encoding}) , and the above equation shows we get a $1-f+fd$ factor in the block-encoding of $H(f)$ because we need to normalize the coefficient vector $\ket{c}$. For a more detailed discussion see Appendix~\ref{sec:apdx_block_encoding}.}
Applying the eigenstate filtering procedure in Section~\ref{sec:eigstate_filtering} to precision $\epsilon_P$ gives us an $(1,n+7,\epsilon_P)$-block-encoding of 
\begin{equation}
\label{eq:def_P0_bar}
\bar{P}_0(f)=\ket{0}\ket{x(f)}\bra{0}\bra{x(f)} + \ket{1}\ket{b}\bra{1}\bra{b},
\end{equation}
which we denote by $U_{P_0}(f)$. By Theorem~\ref{thm:eigenstate_filter} this uses $U_{H}(f)$ and its inverse $\Or(\frac{d}{\Delta_*(f)}\log(\frac{1}{\epsilon_P}))$ times. Note that one ancilla qubit introduced in Theorem~\ref{thm:eigenstate_filter} is redundant because we do not need to shift by a multiple of the identity matrix. By definition of block-encoding we have
\[
\Big\| \bar{P}_0(f) - (\bra{0^{n+7}}\otimes I_{n+1})U_{P_0}(f)(\ket{0^{n+7}}\otimes I_{n+1}) \Big\| \leq \epsilon_P.
\]
Here for clarity we use $I_r$ to denote the identity operator acting on $r$ qubits. Note that we need access to
\begin{equation}
\label{eq:exact_projector}
P_0(f) = \ket{x(f)}\bra{x(f)},
\end{equation} 
\REV{which is the projection operator onto $\ket{x(f)}$,} instead of $\bar{P}_0(f)$, \REV{which is the projection operator onto $\ket{0}\ket{x(f)}$. We now consider how to approximate $P_0(f)$.} Because of the fact 
\[
P_0(f) = (\bra{0}\otimes I_n)\bar{P}_0(f)(\ket{0}\otimes I_n),
\]
we denote 
\begin{equation}
\label{eq:def_wtP0}
\wt{P}_0(f)=(\bra{0^{n+7}}\bra{0}\otimes I_n)U_{P_0}(f)(\ket{0^{n+7}}\ket{0}\otimes I_n)
\end{equation}
and $\wt{P}_0(f)$ approximates $P_0(f)$ by the following inequalities:
\[
\begin{aligned}
\|\wt{P}_0(f)-P_0(f)\| &=\Big\| (\bra{0}\otimes I_n) \left( (\bra{0^{n+7}}\otimes I_1\otimes I_n)U_{P_0}(f)(\ket{0^{n+7}}\otimes I_1\otimes I_n) - \bar{P}_0(f) \right) (\ket{0}\otimes I_n) \Big\| \\
&\leq \Big\|  (\bra{0^{n+7}}\otimes I_1\otimes I_n)U_{P_0}(f)(\ket{0^{n+7}}\otimes I_1\otimes I_n) - \bar{P}_0(f) \Big\| \\
&\leq \epsilon_P.
\end{aligned}
\]
Therefore $U_{P_0}(f)$ is an $(1,n+8,\epsilon_P)$-block-encoding of $P_0(f)$.

\REV{As discussed in Section~\ref{sec:quantum_zeno_eigenpath}, the eigenpath we want to follow is $\{\ket{0}\ket{x(f)}\}$. 
However the approximate projection using eigenstate filtering only allows us to approximately follow this eigenpath. We denote the approximate states by $\ket{\wt{x}(f_j)}\approx \ket{x(f_j)}$, and will take into account the error of this approximation in our analysis.}


With the block-encoding of $P_0(f)$ we can describe the algorithm is as follows:
\begin{enumerate}
\item \REV{Given $0<\epsilon<1$ and $\kappa>1$ as well as the oracles mentioned at the beginning of Section~\ref{sec:qlsp_aqc}.} Set $M=\REV{\lceil \frac{4\log^2(\kappa)}{(1-1/\kappa)^2} \rceil}$, $\epsilon_P=\frac{1}{162M^2}$.
\item Prepare $\ket{\wt{x}(0)}=\ket{b}$. Let $j=1$.
\item Apply the $(1,n+8, \epsilon_P)$-block-encoding $U_{P_0}(f_j)$ of $P_0(f_j)$\REV{, constructed using eigenstate filtering with a polynomial of sufficiently high degree constructed in Lemma~\ref{lem:minimax_poly},} to $\ket{0^{n+8}}\ket{\wt{x}(f_{j-1})}$ to get $U_{P_0}(f_j)(\ket{0^{n+8}}\ket{\wt{x}(f_{j-1})})$.
\item Measure the $n+8$ ancilla qubits. 
\begin{itemize}
 \item[\REV{(a)}] If not all outputs are 0 then abort and return to Step 2. 
 \item[\REV{(b)}] If all outputs are 0, and further $j< M-1$, then let $\ket{\wt{x}(f_j)}$ be the state in the main register that has not been measured, let $j\leftarrow j+1$, and go to Step 3. If all outputs are 0 and $j=M-1$ then go to next step.
\end{itemize}
\item Apply the \REV{$(1,n+8, \epsilon/4)$}-block-encoding $U_{P_0}(1)$ of $P_0(1)$ to $\ket{0^{n+8}}\ket{\wt{x}(f_{M-1})}$ to get $U_{P_0}(f_j)(\ket{0^{n+8}}\ket{\wt{x}(f_{M-1})})$.
\item Measure the $n+8$ ancilla qubits. 
\begin{itemize}
    \item[\REV{(a)}] If not all outputs are 0 then abort and return to Step 2. 
    \item[\REV{(b)}] If all outputs are 0, then output $\ket{\wt{x}(1)}$ in the main register.
\end{itemize}
\end{enumerate}

\REV{Here $\ket{\wt{x}(f_j)}$ are defined recursively in Steps 3 and 4 in the algorithm, starting with $\ket{\wt{x}(0)}=\ket{b}$. We can write down the recursion more concisely:
\begin{equation}
\label{eq:approx_xf}
\ket{\wt{x}(f_{j})}=\frac{\wt{P}_0(f_j)\ket{\wt{x}(f_{j-1})}}{\|\wt{P}_0(f_j)\ket{\wt{x}(f_{j-1})}\|}.
\end{equation}
Going from $\ket{\wt{x}(f_{j-1})}$ to $\ket{\wt{x}(f_j)}$ has a success probability $\|\wt{P}_0(f_j)\ket{\wt{x}(f_{j-1})}\|^2$. We will show in the next section as well as in Appendix~\ref{sec:apdx_success_probability} that the the final success probability, which is the product of the success probabilities of these individual steps, does not go to 0.  We emphasize that $\{\ket{\wt{x}(f)}\}$ is defined only for $f=f_j$ rather than arbitrary $f\in [0,1]$. We use this notation only to be consistent with the notation $\ket{x(f)}$.}

\begin{rem}[Choice of precision parameters]
\REV{There are two precision parameters involved in the above discussion: $\epsilon$ and $\epsilon_P$. Here $\epsilon$ is the target accuracy specified as part of our task, while $\epsilon_P$ is a parameter that is chosen by the algorithm according to Step 1, and is used only to ensure that the success probability is lower bounded by a constant. Also note that in the last step with $j=M$ (Steps 5 and 6), we set the target accuracy to be \REV{$\epsilon/4$} instead of $\epsilon_P$ in the previous steps. In fact, the errors of eigenstate filtering for $j=1,2,\ldots,M-1$ do not directly contribute to the final error. Rather, they only directly affect the success probability. 
When the overlap $|\braket{\wt{x}(f_{M-1})|x(1)}|$ is lower bounded by a constant away from 0, as we will show in Lemma~\ref{lem:bound_overlap}, the final error is entirely controlled by the accuracy of the final eigenstate filtering for $j=M$, which is in turn controlled by the parameter $\epsilon/4$.}
\REV{In this way we ensure, as will be shown in the next section, that }the output $\ket{\wt{x}(1)}$ satisfies
\[
|\braket{\wt{x}(1)|x}| \geq 1-\epsilon.
\]
\end{rem}


\subsection{Success probability, fidelity, and complexities}
\label{sec:zeno_qlsp_estimates}

\REV{In this section we discuss the success probability of the algorithm described in the previous section, prove the fidelity of the output state is lower bounded by $1-\epsilon$ for the given $\epsilon$ when $\epsilon_P$ and $M$ are chosen as in Step 1 of the algorithm, and finally estimate the query and gate complexities. }

We first give a lower bound for success probability assuming \REV{for simplicity} each projection is done without error, \ie $\epsilon_P=0$. \REV{This is done so that we do not need to distinguish between eigenstates and approximate eigenstates produced using eigenstate filtering, thus making the derivation less technical.} A rigorous lower bound\REV{, assuming a finite $\epsilon_P>0$,} will be given in Appendix~\ref{sec:apdx_success_probability}. Under this assumption we have
\[
\psucc = \prod_{j=1}^{M}\|P_0(f_j)\ket{x(f_{j-1})}\|^2 = \prod_{j=1}^{M}|\braket{x(f_j)|x(f_{j-1})}|^2.
\]
Since
\begin{align}
& |\braket{x(f_j)|x(f_{j-1})}| \geq 1-\frac{1}{2}\|\ket{x(f_{j-1})}-\ket{x(f_j)}\|^2, \label{eq:xf_overlap}\\ 
& \|\ket{x(f_{j-1})}-\ket{x(f_j)}\| \leq L(f_{j-1},f_{j}) \leq L_*(f_{j-1},f_j), \label{eq:xf_distance}
\end{align}
we have
\[
\begin{aligned}
\psucc &\geq \left( \prod_{j=1}^{M} \left(1-\frac{1}{2}\|\ket{x(f_{j-1})}-\ket{x(f_j)}\|^2\right) \right)^2 \\
&\geq \left(1-\frac{2\log^2(\kappa)}{M^2(1-1/\kappa)^2}\right)^{2M} \\
&\geq \left( 1-\frac{2\log^2(\kappa)}{M(1-1/\kappa)^2} \right)^2 \\
&\geq \frac{1}{4},
\end{aligned}
\]
where the we have used Eq.~\eqref{eq:curve_length_reparam}. This inequality holds for $M\geq \frac{4\log^2(\kappa)}{(1-1/\kappa)^2}$ as required in the previous section.

Therefore we have shown the success probability is lower bounded by $1/4$. The success probability when taking into account errors in each approximate projection\REV{, or in other words when we choose $\epsilon_P=1/162M^2$ according to our algorithm rather than setting it to 0,} is still lower bounded by a constant, which is proved in Appendix~\ref{sec:apdx_success_probability}. 

\REV{We then analyze the fidelity and complexities of our algorithm. Here we no longer assume $\epsilon_P=0$, and the following discussion is therefore rigorous.}  In Appendix~\ref{sec:apdx_success_probability} it is shown that 
\[
|\braket{\wt{x}(f_j)|x(f_{j+1})}|\geq 1-\frac{1}{2M}-4\epsilon_P-2\sqrt{2\epsilon_P} \geq 
\frac{1}{2},\quad j=0,1,\ldots,M-1,
\]
for $\epsilon_P\leq 1/128$ and $M\REV{\geq\frac{4\log^2(\kappa)}{(1-1/\kappa)^2}}\geq 4$. \REV{Therefore $|\braket{\wt{x}(f_{M-1})|x(f_M)}|\geq 1/2$,} which allows us to bound the error as,
\REV{\begin{equation}
\begin{aligned}
|\braket{x|\wt{x}(1)}|  
&= |\braket{\wt{x}(f_{M})|x(f_{M})}|  \\ 
&= \frac{|\braket{\wt{x}(f_{M-1})|\wt{P}_0(f_{M})|x(f_{M})}|}{\|\wt{P}_0(f_{M})\ket{\wt{x}(f_{M-1})}\|} \\
&\geq  \frac{|\braket{\wt{x}(f_{M-1})|{P}_0(f_{M})|x(f_{M})}|-\epsilon/4}{\|{P}_0(f_{M})\ket{\wt{x}(f_{M-1})}\|+\epsilon/4} \\
&= \frac{|\braket{\wt{x}(f_{M-1})|x(f_{M})}|-\epsilon/4}{|\braket{\wt{x}(f_{M-1})|x(f_{M})}|+\epsilon/4}
 \\
&\geq 1-\frac{\epsilon/2}{|\braket{\wt{x}(f_{M-1})|x(f_{M})}|} \\
&\geq 1-\epsilon.
\end{aligned}
\end{equation}
The derivation is similar to that of Eq.~\eqref{eq:fidelity_time_j}, and we have used the fact that $\|\wt{P}_0(f_M)-P_0(f_M)\|\leq \epsilon/4$ because in Step 5 our algorithm in the previous section sets the eigenstate filtering accuracy to be $\epsilon/4$ instead of $\epsilon_P$.}
Therefore the state $\ket{\wt{x}(1)}$ prepared in this way has a fidelity at least $1-\epsilon$.

We then estimate the computational costs. At each $j$ we need to apply an $(1,n+8,\epsilon_P)$-block-encoding \REV{$U_{P_0}(f_j)$ of $P_0(f_j)$} to $\ket{\wt{x}(f_{j-1})}$ obtained form the last step. From the analysis in Appendix~\ref{sec:apdx_success_probability} we need \REV{$\epsilon_P\leq 1/162 M^2$}. Therefore we need to apply $U_H(f_{j})$ and its inverse $\Or\left(\frac{1-f_j+df_j}{\Delta_*(f_j)}\log(\frac{1}{\epsilon_P})\right)$ times. In total for $j=1,2,\ldots,M-1$ the number of queries to $U_H(f)$ is of the order
\begin{equation}
\label{eq:complexity_first_M_minus_1}
\begin{aligned}
\log\left(\frac{1}{\epsilon_P}\right)\sum_{j=1}^{M-1}\frac{1-f(s_j)+f(s_j)d}{1-f(s_j)+f(s_j)/\kappa} &\leq \log\left(\frac{1}{\epsilon_P}\right)M\int_0^1 \frac{1-f(s)+f(s)d}{1-f(s)+f(s)/\kappa}\text{d}s \\
&= \log\left(\frac{1}{\epsilon_P}\right)M\left( \frac{d\kappa-1}{\log(\kappa)} \REV{-  \frac{d-1}{1-1/\kappa}} \right),
\end{aligned}
\end{equation}
\REV{for a $d$-sparse matrix $A$ and $\kappa$ is the condition number of $A$.}
Then in the last step for $j=M$\REV{, which is Step~5 in the algorithm in Section~\ref{sec:quantum_zeno_alg},} we need to achieve accuracy \REV{$\epsilon/4$ for the eigenstate filtering}. Therefore we need to apply the block-encoding $U_{P_0}(1)$ with $\Or(d\kappa\log(\frac{1}{\epsilon}))$ queries to $U_H(1)$. As $M=\Or(\log^2(\kappa))$, adding the query complexity of the last step to (\ref{eq:complexity_first_M_minus_1}), and using the fact $\epsilon_P=\Or(1/M^2)$, gives us the total query complexity of a single run  
\begin{equation}
\label{eq:complexity_single_run}
\Or\left(d\kappa\left(\log(\kappa)\log\log(\kappa)+\log(1/\epsilon)\right)\right).
\end{equation}
Because the success probability is $\Omega(1)$, the procedure needs to be run for an expected $\Or(1)$ times to be successful, and therefore the total complexity remains the same. Since $U_H(f)$ queries $O_{A,1}$, $O_{A,2}$, and $O_B$ each $\Or(1)$ times, Eq. (\ref{eq:complexity_single_run}) is also the query complexity to these oracles.

Because the only thing we need to do in this method to solve QLSP is to repeatedly use QSP to do projection, no additional qubits are involved for time-dependent Hamiltonian simulation as in the previous AQC-based method. The total number of qubits is therefore $\Or(n)$. The number of additional primitive gates required can be estimated similarly to the number of queries, which scales as
$
\Or\left(nd\kappa\left(\log(\kappa)\log\log(\kappa)+\log(\frac{1}{\epsilon})\right)\right).
$

For the case when $A$ is indefinite, we use a different pair of $H_0$ and $H_1$ as discussed in Appendix~\ref{sec:apdx_matrix_dilation}. \REV{The generalization to non-Hermitian matrices is the same as for Theorem~\ref{thm:qlsp}, and it can be found in Appendix~\ref{sec:apdx_matrix_dilation} as well.} All other procedures are almost exactly the same. We summarize the results in the following theorem:

\begin{thm}
\label{thm:qlsp_zeno}
 $A$ is a $d$-sparse matrix \REV{whose singular values are in $[1/\kappa,1]$} and can be queried through oracles $O_{A,1}$ and $O_{A,2}$ in (\ref{eq:oracles_A}), and $\ket{b}$ is given by an oracle $O_B$. Then  $\ket{x}\propto A^{-1}\ket{b}$ can be obtained with fidelity $1-\epsilon$, \REV{succeeding with probability $\Omega(1)$ with ancilla qubits measurement outcome indicating success,} using
 
 1. $\Or\left(d\kappa\left(\log(\kappa)\log\log(\kappa)+\log(\frac{1}{\epsilon})\right)\right)$ queries to $O_{A,1}$, $O_{A,2}$, and $O_B$,
 
 2. $
\Or\left(nd\kappa\left(\log(\kappa)\log\log(\kappa)+\log(\frac{1}{\epsilon})\right)\right)
$ other primitive gates,

 3. $\Or(n)$ qubits.
\end{thm}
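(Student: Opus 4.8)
The plan is to realize the quantum Zeno algorithm of Section~\ref{sec:quantum_zeno_alg} and certify, in turn, its success probability, the fidelity of its output, and its resource costs, then reduce the general case to the Hermitian positive-definite case by dilation. The starting point is the reformulation of QLSP into an eigenvalue problem through $H_1$ of Eq.~\eqref{eq:H1}, whose null space contains the target $\ket{0}\ket{x}$, together with the eigenpath $\{\ket{0}\ket{x(f)}\}$ formed by the $0$-eigenstates of $H(f)$. First I would assemble the block-encoding $U_H(f)$ of $H(f)$ from the block-encodings of $H_0$ and $H_1$ (Appendix~\ref{sec:apdx_block_encoding}), obtaining a $(1-f+fd,n+6,0)$-block-encoding, and feed it into Theorem~\ref{thm:eigenstate_filter} to produce the approximate projector $U_{P_0}(f)$. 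Using the gap bound $\Delta_*(f)=1-f+f/\kappa$ of Eq.~\eqref{eq:gap_aqc}, each such projector to accuracy $\epsilon_P$ costs $\Or\big((1-f+fd)/\Delta_*(f)\cdot\log(1/\epsilon_P)\big)$ queries to $U_H(f)$.

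Second, I would lower bound the success probability. Under the idealization $\epsilon_P=0$, $\psucc$ factors as $\prod_{j=1}^M |\braket{x(f_j)|x(f_{j-1})}|^2$, and combining the elementary overlap bound \eqref{eq:xf_overlap} with the eigenpath-length bound \eqref{eq:xf_distance} and the equal-$L_*$-length property \eqref{eq:curve_length_reparam} of the scheduling function \eqref{eq:scheduling} yields $\psucc\geq 1/4$ as soon as $M\geq 4\log^2(\kappa)/(1-1/\kappa)^2$. The more delicate task, deferred to Appendix~\ref{sec:apdx_success_probability}, is to show that the true algorithm with $\epsilon_P=1/(162M^2)>0$ still has success probability bounded below by a constant; the subtlety is that the iterates $\ket{\wt{x}(f_j)}$ of Eq.~\eqref{eq:approx_xf} are no longer exact eigenstates, so one must control how the per-step filtering error of size $\epsilon_P$ propagates through $M$ projections without the product of step successes collapsing. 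I expect this propagation analysis to be the main obstacle: the choice $\epsilon_P=\Or(1/M^2)$ is precisely what keeps the accumulated error of order $M\sqrt{\epsilon_P}$ a constant, and the key auxiliary estimate is the overlap bound $|\braket{\wt{x}(f_j)|x(f_{j+1})}|\geq 1-\tfrac{1}{2M}-4\epsilon_P-2\sqrt{2\epsilon_P}\geq \tfrac12$.

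Third, granted that overlap bound, I would establish the fidelity of the output. Only the final projection at $j=M$ is performed to accuracy $\epsilon/4$; the intermediate errors affect only the success probability, not the final state. Writing $\ket{\wt{x}(1)}=\wt{P}_0(f_M)\ket{\wt{x}(f_{M-1})}/\|\wt{P}_0(f_M)\ket{\wt{x}(f_{M-1})}\|$ and comparing $\wt{P}_0(f_M)$ with the exact $P_0(f_M)=\ket{x(f_M)}\bra{x(f_M)}$ up to $\epsilon/4$, the chain of inequalities displayed just before the theorem uses $|\braket{\wt{x}(f_{M-1})|x(f_M)}|\geq 1/2$ to conclude $|\braket{x|\wt{x}(1)}|\geq 1-\epsilon$.

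Finally I would tally the costs. Summing the per-step query counts over $j=1,\dots,M-1$ and bounding the sum by the corresponding integral gives \eqref{eq:complexity_first_M_minus_1}, of order $\log(1/\epsilon_P)\,M\,(d\kappa/\log\kappa)$; since $M=\Or(\log^2\kappa)$ and $\epsilon_P=\Or(1/M^2)$ this is $\Or(d\kappa\log(\kappa)\log\log(\kappa))$, and adding the final step's $\Or(d\kappa\log(1/\epsilon))$ produces the single-run query complexity \eqref{eq:complexity_single_run}. Because $\psucc=\Omega(1)$, an expected $\Or(1)$ repetitions suffice, leaving the count unchanged; the gate count picks up the extra factor $n$ from Theorem~\ref{thm:eigenstate_filter}, and the qubit count stays $\Or(n)$ since no time-dependent Hamiltonian simulation (hence no extra clock register) is required. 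To remove the simplifying assumptions I would invoke the dilations of Appendix~\ref{sec:apdx_matrix_dilation} to pass from Hermitian positive-definite $A$ to the indefinite and then non-Hermitian cases (Remark~\ref{rem:dilation_non_hermitian}), which preserve sparsity and the condition number and hence leave all complexities unchanged up to constants.
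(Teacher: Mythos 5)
Your proposal is correct and follows essentially the same route as the paper: the same scheduling function and choice of $M$ and $\epsilon_P$, the same two-stage success-probability argument (idealized $\epsilon_P=0$ first, then the rigorous propagation analysis of Appendix~\ref{sec:apdx_success_probability} hinging on the overlap bound $|\braket{\wt{x}(f_j)|x(f_{j+1})}|\geq 1/2$), the same fidelity chain exploiting that only the final projection is run to accuracy $\epsilon/4$, the same integral bound for the query tally, and the same dilation reduction for indefinite and non-Hermitian $A$. You also correctly identify the genuinely delicate point, namely that $\epsilon_P=\Or(1/M^2)$ is what keeps the accumulated error of order $M\sqrt{\epsilon_P}$ bounded by a constant.
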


\REV{The reason we put requirement on the singular values of $A$ instead of its eigenvalues is stated in Remark~\ref{rem:dilation_non_hermitian}.}
Just like in the case of AQC-based QLSP algorithm, here if we have $\Or(\text{poly}(n))$ gate complexity for the oracles $O_{A,1}$, $O_{A,2}$, and $O_B$, then the total gate complexity will be $\wt{\Or}(\text{poly}(n)d\kappa\log(1/\epsilon))$. \REV{Although we use $\Or(n)$ qubits in total, the extra ancilla qubits we introduce in this method is in fact only $\Or(1)$. This is a further improvement from the $\Or(\log(d\kappa))$ ancilla qubits in the AQC-based QLSP algorithm.}

\REV{We remark that there is the possibility to further slightly improve  by a $\log(\kappa)$ factor (ignoring $\log\log$ terms) the asymptotic complexity  
of our QZE-based QLSP solver by using the fixed-point amplitude amplification to go from $\ket{x(f_j)}$ to $\ket{x(f_{j+1})}$ for each $j$, as discussed in \cite[Corollary 1]{wocjan2008speedup}. The bounds in this paper for many constant factors involved, particular those used in estimating the success probability of the QZE-based QLSP solver, are rather loose. However this does not concern us very much because we care mainly about the asymptotic complexity. Tighter estimates can be helpful for the actual implementation of our methods. }

\section{Discussion}\label{sec:conclusion}

In this paper, we have developed a quantum eigenstate filtering algorithm based on quantum signal processing (QSP). Our algorithm achieves the optimal query complexity among all polynomial-based eigenstate filtering methods, and uses a minimal amount of ancilla qubits. We demonstrate the usage of the eigenstate filtering method to solve quantum linear system problems (QLSP) with near-optimal complexity with respect to both the condition number $\kappa$ and the accuracy $\epsilon$. In the case when the precise value of $\kappa$ is not known \textit{a priori}, the knowledge of an upper bound of $\kappa$ would suffice.

The problem of directly  targeting at the solution $A^{-1}\ket{b}  $ is that a $(\beta,m,\epsilon)$ block-encoding of $A^{-1}$ requires at least $\beta\ge\kappa$ to make sure that $\norm{A^{-1}/\beta} \le 1$. Therefore the probability of success \REV{in the worst case} is already $\Omega(\kappa^{-2})$, and the number of \REV{rounds of amplitude amplification} needed is already $\Or(\kappa)$. \REV{Therefore to achieve near-optimal complexity, this approach can only query the block-encoding of $A$ for $\Or(\operatorname{polylog}(\kappa))$ times. To our best knowledge, there is no known method to achieve this for general matrices. However this might be possible for matrices with special structures and will be studied in future work.} 

Motivated by the success of AQC, our algorithm  views QLSP as an eigenvalue problem, which can be implemented via $P \ket{\wt{x}_0}$, where $P$ is an approximate projection operator, and $P \ket{\wt{x}_0}$ encodes the solution $\ket{x}$. The advantage of such a filtering procedure is that $P$ is a projector and $\norm{P}=1$. Hence its $(\beta,m,\epsilon)$ block-encoding only requires $\beta\sim\Or(1)$. Therefore assuming $\Or(1)$ overlap between $\ket{\wt{x}_0}$ and the solution vector, which can be satisfied by running the time-optimal AQC\ to constant precision, the probability of success of the filtering procedure is already $\Omega(1)$ without any amplitude amplification procedure. This accelerates the query complexity of the recently developed time-optimal AQC from $\wt{\Or}(\kappa/\epsilon)$ to $\wt{\Or}(\kappa\log(1/\epsilon))$. The efficient gate-based implementation of AQC still requires a time-dependent Hamiltonian simulation procedure (shown in Appendix \ref{sec:apdx_aqc}).
We then demonstrate that the dependence on the time-dependent Hamiltonian simulation procedure can be removed, using an algorithm based on the quantum Zeno effect, and the complexity is  $\wt{\Or}(\kappa\log(1/\epsilon))$. Both algorithms have constant probability of success, and can prepare the solution in terms of a pure state.



It is worth noting that the eigenstate filtering method developed in this paper works only for the case when the eigenvalue corresponding to the desired eigenstate is known exactly, which is satisfied in the eigenvalue formulation of QLSP. In order to implement the QSP-based eigenstate filtering procedure, one still needs to find the phase factors associated with the block encoding $\wt{U}$. For a given polynomial  $R_\ell(\cdot,\Delta)$, the phase factors are obtained on a classical computer in time that is polynomial in the degree and the logarithm of precision \cite[Theorems 3-5]{GilyenSuLowWiebe2018Long}. However, this procedure requires solution of all roots of a high degree polynomial, which can be unstable  for the range of polynomials $\ell\sim 100$ considered here. The stability of such procedure has recently been improved by Haah \cite{Haah2019}, though the number of bits of precision needed still scales as $\Or(\ell \log(\ell/\epsilon))$. \REV{Significant progress has been achieved recently, enabling robust computation of phase factors for polynomials of degrees ranging from thousands to tens of thousands \cite{chao2020finding,DongMengWhaleyLin2020}.}
We note that these phase factors in the eigenvalue filtering procedure only depend on $\wt{\Delta}$ and $\ell$, and therefore can be reused for different matrices once they are obtained on a classical computer. 




\section*{Acknowledgements}
 This work was partially supported by the Department of Energy under Grant No. {DE}-{SC0017867}, the Quantum Algorithm Teams Program under Grant No. DE-AC02-05CH11231, the Google Quantum Research Award (L.L.), and by the Air Force Office of Scientific Research under award number FA9550-18-1-0095 (L.L. and Y.T.). We thank Dong An, Yulong Dong, Nathan Wiebe for helpful discussions. \REV{We also thank the anonymous reviewers for helpful suggestions on improving the presentation of this paper and the applications of eigenstate filtering discussed at the end of Section~\ref{sec:eigstate_filtering}.}

\bibliographystyle{abbrvnat}
\bibliography{qlsp}

\clearpage
\pagebreak
\widetext
\appendix
\setcounter{equation}{0}
\setcounter{figure}{0}
\setcounter{table}{0}
\makeatletter
\renewcommand{\theequation}{A\arabic{equation}}
\renewcommand{\thefigure}{A\arabic{figure}}


\section{Block-encoding}
\label{sec:apdx_block_encoding}


The technique of block-encoding has been recently  discussed extensively \cite{GilyenSuLowEtAl2019, LowChuang2019}. Here we discuss how to construct block-encoding for $H-\lambda I$ which is used in eigenstate filtering, and $Q_b$, $H_0$, and $H_1$ which are used in QLSP and in particular the Hamiltonian simulation of AQC. We first introduce a simple technique we need to use repeatedly.

Given $U_A$, an $(\alpha,\REV{m},0)$-block-encoding of $A$ where $\alpha>0$, we want to construct a block encoding of $A+cI$ for some $c\in\CC$. This is in fact a special case of the linear combination of unitaries (LCU) technique introduced in \cite{ChildsKothariSomma2017}. Let 
\[
Q=\frac{1}{\sqrt{\alpha+|c|}}\left(
\begin{array}{cc}
\sqrt{|c|} & -\sqrt{\alpha} \\
\sqrt{\alpha} & \sqrt{|c|}
\end{array}
\right)
\]
and $\ket{q}=Q\ket{0}$. Since $(\bra{0^m}\otimes I) U_A (\ket{0^m}\otimes I) = A/\alpha$, we have
\[
(\bra{q}\bra{0^m}\otimes I) (\ket{0}\bra{0}\otimes e^{i\theta}I + \ket{1}\bra{1}\otimes U_A) (\ket{q}\ket{0^m}\otimes I) = \frac{1}{\alpha+|c|}(A+cI),
\]
where $\theta = \mathrm{arg} (c)$. Therefore Fig.~\ref{fig:circuit_shift} gives an $(\alpha+|c|,\REV{m}+1,0)$-block-encoding of $e^{-i\theta}(A+cI)$.

\begin{figure}[ht!]
\centering
\includegraphics[width=0.4\textwidth]{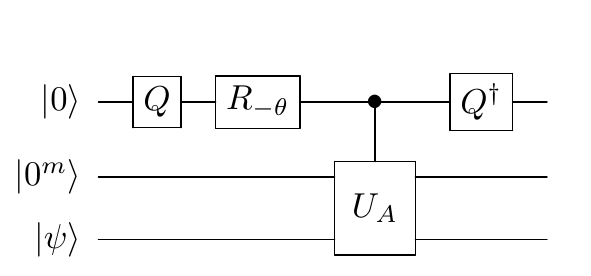}
\caption{Quantum circuit for block-encoding of $e^{-i\theta}(A+cI)$, where $c=e^{i\theta}|c|$. $R_{-\theta}\REV{=\ket{0}\bra{0}+e^{-i\theta}\ket{1}\bra{1}}$ is a phase shift gate. The three registers are the ancilla qubit for $Q$ and $\ket{q}$, the ancilla register of $U_A$, and the main register, respectively.}
\label{fig:circuit_shift}
\end{figure}

Therefore we may construct an $(\alpha+|\lambda|,m+1,0)$-block-encoding of $H-\lambda I$. We remark that since $\lambda \in \RR$, \REV{we can replace the phase shift gate with a Pauli-$Z$ gate when $\lambda>0$}. This is at the same time a $(1,m+1,0)$-block-encoding of $\wt{H}=(H-\lambda I)/(\alpha+|\lambda|)$.

Now we construct a block-encoding of $Q_b=I-\ket{b}\bra{b}$ with $\ket{b}=O_B\ket{0}$. 
Let $S_0=I-2\ket{0^{\REV{n}}}\bra{0^{\REV{n}}}$ be the reflection operator about the hyperplane orthogonal to $\ket{0^{\REV{n}}}$. Then $S_b = O_B S_0 O_B^\dagger = I-2\ket{b}\bra{b}$ is the reflection about the hyperplane orthogonal to $\ket{b}$. Note that $Q_b = (S_b+I)/2$. Therefore we can use the technique illustrated in Fig.~\ref{fig:circuit_shift} to construct a $(1,1,0)$-block-encoding of $Q_b$. Here $\ket{q}=\ket{+}=\frac{1}{\REV{\sqrt{2}}} (\ket{0}+\ket{1})$. Since $H_0 = \sigma_x \otimes Q_b$, we naturally obtain a $(1,1,0)$-block-encoding of $H_0$. We denote the block-encoding as $U_{H_0}$.

For the block-encoding of $H_1$, first note that
\[
H_1 = \left(
\begin{array}{cc}
I & 0 \\
0 & Q_b 
\end{array}
\right)
\left(
\begin{array}{cc}
0 & A \\
A & 0 
\end{array}
\right)
\left(
\begin{array}{cc}
I & 0 \\
0 & Q_b 
\end{array}
\right).
\]
From the block-encoding of $Q_b$, we can construct the block-encoding of controlled-$Q_b$ by replacing all gates with their controlled counterparts. The block matrix in the middle is $\sigma_x \otimes A$. For a $d$-sparse matrix $A$, we have a $(d,n+2,0)$-block-encoding of $A$, and therefore we obtain a $(d,n+2,0)$-block-encoding of $\sigma_x \otimes A$. Then we can use the result for the product of block-encoded matrix  \cite[Lemma 30]{GilyenSuLowEtAl2019} to obtain a $(d,n+4,0)$-block-encoding  of $H_1$, denoted by $U_{H_1}$.

\REV{The block-encodings of $H_0$ and $H_1$ allow us to block-encode linear combinations of them as well. We need access to $H(f)=(1-f)H_0+fH_1$ which is used extensively in Section~\ref{sec:quantum_zeno}. This is done through \cite[Lemma 29]{GilyenSuLowEtAl2019}. When applying the lemma we need the state preparation pair $(P_L,P_R)$ such that
\[
P_L\ket{0} = P_R\ket{0} = \frac{1}{\sqrt{1-f+fd}}(\sqrt{1-f}\ket{0}+\sqrt{fd}\ket{1}).
\]
The presence of the factor $d$ is because $H_1$ is subnormalized by a factor of $d$ in its block-encoding. By this lemma we obtain a $(1-f+fd,n+6,0)$-block-encoding of $H(f)$. Here $1-f+fd$ comes from the normalizing factor in the state preparation pair, and $n+6$ is the sum of the numbers of ancilla qubits used in the block-encodings of $H_0$ and $H_1$, plus one additional qubit used for the state preparation pair.}

\section{Implementing the reflection operator and $\theta$-reflection operator}
\label{sec:apdx_ref_and_theta_ref}

\REV{In this appendix we prove Theorem~\ref{thm:reflection} and Corollary~\ref{cor:theta_reflection} by constructing the quantum circuits. In both the theorem and the corollary we assume, as in Theorem~\ref{thm:eigenstate_filter}, that $H$ is a Hermitian matrix and $U_H$ is an $(\alpha,m,0)$-block-encoding of $H$. Also $\lambda$ is an eigenvalue of $H$ that is separated from the rest of the spectrum by a gap $\Delta$.}

\REV{We first prove Theorem~\ref{thm:reflection} by constructing the circuit for the reflection operator
\[
R_\lambda = 2P_\lambda-I,
\]
where $P_\lambda$ is the projection operator into the $\lambda$-eigenspace of $H$.
To do this we use the following polynomial
\[
S_\ell(x;\delta) = \frac{2R_\ell(x;\delta)-1}{\max_{y\in [-1,1]}|2R_\ell(y;\delta)-1|}.
\]
The first thing we should notice about this polynomial is that it is even and therefore can be implemented via QSP by Theorem~\ref{thm:qsp_parity}. The normalization is done so that we have $|S_\ell(x;\delta)|\leq 1$ for all $x\in[-1,1]$. Because $-\epsilon\leq \min_{y\in [-1,1]}R_\ell(y;\delta)<0$ and $\max_{y\in[-1,1]}R_\ell(y;\delta)=1$,
we have 
\[
1\leq \max_{y\in [-1,1]}|2R_\ell(y;\delta)-1| \leq 1+2\epsilon.
\]
Therefore
\begin{equation}
\label{eq:bound_Sl_1}
-1-2\epsilon\leq S_\ell(x;\delta) \leq \frac{-1+2\epsilon}{1+2\epsilon}\leq -1+4\epsilon,\quad x\in\mathcal{D}_{\delta},
\end{equation}
and
\begin{equation}
\label{eq:bound_Sl_2}
1-2\epsilon\leq \frac{1}{1+2\epsilon}\leq S_\ell(0;\delta) \leq 1.
\end{equation}
Now for $H$, we define $\wt{H}=(H-\lambda I)/(\alpha+|\lambda|)$ and $\wt{\Delta}=\Delta/2\alpha$ as done in the proof of of Theorem~\ref{thm:eigenstate_filter}. Then applying the polynomial $S_\ell(x;\wt{\Delta})$ to $\wt{H}$, because all eigenvalues of $\wt{H}$ are contained in $\mathcal{D}_{\wt{\Delta}}\cup\{0\}$, they are mapped to either close to $1$ or close to $-1$. Thus by Eqs.~\eqref{eq:bound_Sl_1} and \eqref{eq:bound_Sl_2} we have
\[
\|S_\ell(\wt{H};\wt{\Delta})-R_\lambda\|\leq 4\epsilon.
\]
Since $S_\ell(x;\wt{\Delta})$ is a real even polynomial that takes value in $[-1,1]$ when $x\in[-1,1]$, we can implement a $(1,m+2,0)$-block-encoding of $S_\ell(\wt{H};\wt{\Delta})$ through QSP by Theorem~\ref{thm:qsp_parity}. We denote this block-encoding by $\mathcal{U}_R$. We have
\[
\|(\bra{0^{m+2}}\otimes I)\mathcal{U}_R (\ket{0^{m+2}}\otimes I) - R_\lambda\|=\|S_\ell(\wt{H};\wt{\Delta}) - R_\lambda\| \leq 4\epsilon.
\]
Therefore $\mathcal{U}_R$ is an $(1,m+2,4\epsilon)$-block-encoding of $R_\lambda$. Thus we have proved Theorem~\ref{thm:reflection}.
}

\REV{We then prove Corollary~\ref{cor:theta_reflection} by constructing a block-encoding of the $\theta$-reflection operator
\[
P_\lambda + e^{\I \theta}(I-P_\lambda).
\]
One might be tempted to directly find a polynomial to approximate this matrix function. However such a polynomial would have complex coefficients, and we would need to apply QSP to the real and imaginary parts separately. This in turn needs an extra LCU step to add the two parts up, resulting in reduced success probability. Therefore instead of using a new polynomial, we use the block-encoding $\mathcal{U}_R$ we have already constructed, and then apply a 1-bit phase estimation on it. This enables u s to distinguish between the $\lambda$-eigenspace and its orthogonal complement, since all the eigenvalues of $R_\lambda$ are either $1$ or $-1$. We then apply the phase factor $e^{\I\theta}$ only to the correct subspace. Finally we uncompute the additional ancilla qubit. The circuit takes the following form, as shown in Figure~\ref{fig:theta_ref}:
}

\begin{figure}[h]
    \centering
    \includegraphics[width=0.7\textwidth]{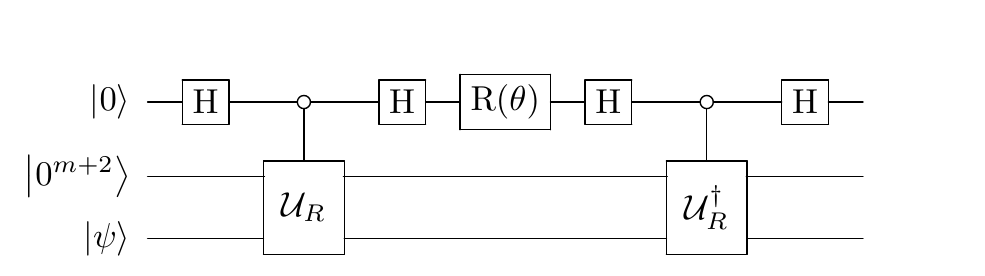}
    \caption{\REV{The quantum circuit for implementing the $\theta$-reflection operator. $\mathrm{H}$ is the Hadamard gate and $\mathrm{R}(\theta)=\ket{0}\bra{0}+e^{\I\theta}\ket{1}\bra{1}$ is the phase-shift gate.}}
    \label{fig:theta_ref}
\end{figure}

\REV{
We introduced one additional ancilla qubit in the initial state $\ket{0}$, and the second register in the above circuit is for the ancilla qubits in Theorem~\ref{thm:reflection}. The last register is the main register prepared in the state $\ket{\psi}$ on which we want to apply the operator $P_\lambda+e^{\I\theta}(I-P_\lambda)$. Thus we have proved Corollary~\ref{cor:theta_reflection}.
}

\section{Gate-based implementation of time-optimal adiabatic quantum computing}
\label{sec:apdx_aqc}

\REV{In Theorem~\ref{thm:qlsp} we used an adiabatic time evolution to prepare an initial state for eigenstate filtering. In this appendix we discuss how to implement this time evolution on a gate-based quantum computer.}
Consider the adiabatic evolution
$$
\frac{1}{T}\I \partial_s \left|\psi_T(s)\right> = H(f(s))\left|\psi_T(s)\right>, \quad                        \ket{\psi_T(0)}=\ket{0}\ket{b},
$$
Where $H(f)=(1-f)H_0+fH_1$ for $H_0$ and $H_1$ defined in (\ref{eq:H0}) and (\ref{eq:H1}). It is proved in \cite{AnLin2019,SubasiSommaOrsucci2019} that the gap between $0$ and the rest of the eigenvalues of $H(f)$ is lower bounded by $1-f+f/\kappa$. With this bound  the scheduling \eqref{eqn:AQCSchedule_explicit} in the AQC(p) scheme results in $\mathcal{O}(\kappa/\epsilon)$ runtime complexity to solve QLSP. As mentioned before, the fact that the $0$-eigenspace of $H(f(s))$ is two dimensional is not a problem because $\ket{\psi_T(t)}$ is orthogonal to $\ket{1}\ket{b}$ for all $t$.

In order to carry out AQC\ efficiently using a gate-based implementation, we use the recently developed time-dependent Hamiltonian simulation method based on truncated Dyson series introduced in \cite{LowWiebe2018}. In Hamiltonian simulation, several types of input models for the Hamiltonian are in use. Hamiltonians can be input as a linear combination of unitaries \cite{BerryChildsCleveEtAl2015}, using its sparsity structure \cite{AharonovTaShma2003,LowChuang2017}, or using its block-encoding \cite{LowChuang2019,LowWiebe2018}. For a time-dependent Hamiltonian Low and Wiebe designed an input model based on block-encoding  named HAM-T \cite[Definition 2]{LowWiebe2018}, as a block-encoding of $\sum_s\ket{s}\bra{s}\otimes H(s)$ where $s$ is a time step and $H(s)$ is the Hamiltonian at this time step.

In the gate-based implementation of the time-optimal AQC, we construct HAM-T in Fig.~\ref{fig:circuit_ham_t}. We need to use the block-encodings $U_{H_0}$ and $U_{H_1}$ introduced in Appendix \ref{sec:apdx_block_encoding}, which requires $n_0=1$ and $n_1=n+4$ ancilla qubits, respectively.
Our construction of HAM-T satisfies
\begin{equation}
(\bra{s} \bra{0^{l+1+n_0}} \otimes I \otimes \bra{0^{n_1+1}}) \text{HAM-T} 
(\ket{s} \ket{0^{l+1+n_0}} \otimes I \otimes \ket{0^{n_1+1}}) = H(f(s))/d,
\end{equation}
for any $s\in \mathcal{S}=\{j/2^l:j=0,1,\ldots,2^l-1\}$. 
\begin{figure}[ht!]
\centering
\includegraphics[width=0.6\textwidth]{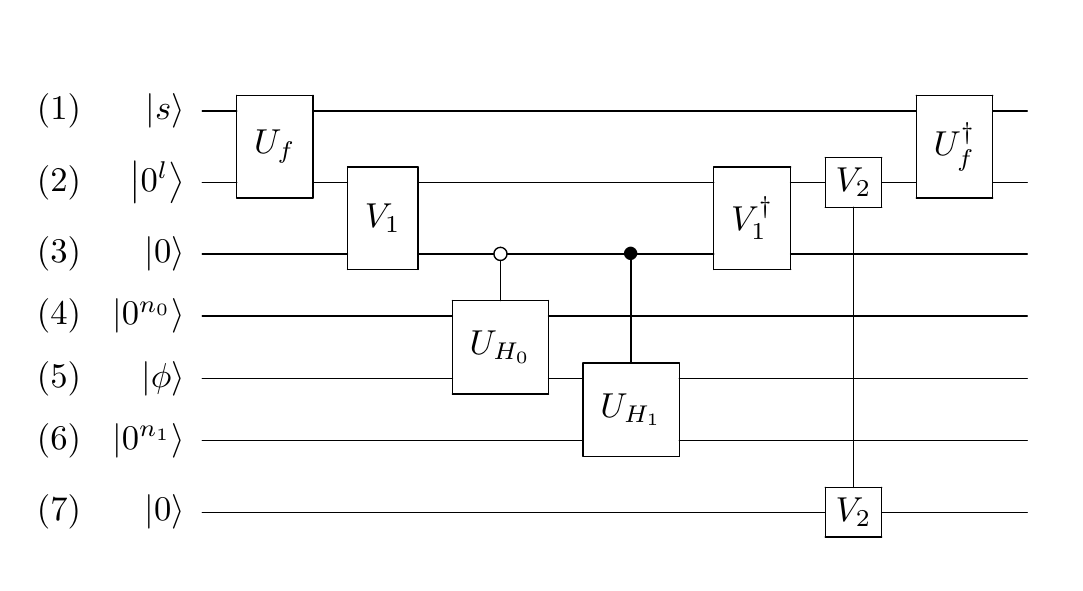}
\caption{Quantum circuit for HAM-T. The registers from top to bottom are: (1) input register for $s$ (2) register for storing $f(s)$ (3) register for a control qubit (4) ancilla register for $U_{H_0}$ (5) main register for input state $\ket{\phi}$ (6) ancilla register for $U_{H_1}$ (7) register for changing normalizing factor from $\alpha(s)$ to $d$. }
\label{fig:circuit_ham_t}
\end{figure}

In this unitary HAM-T we also need the unitary
\begin{equation}
U_{f}\ket{s}\ket{z} = \ket{s}\ket{z\oplus f(s)}
\end{equation}
to compute the scheduling function needed in the time-optimal AQC, and the unitaries
\begin{equation}
\begin{aligned}
V_1 &= \sum_{s\in\mathcal{S}} \ket{s}\bra{s}\otimes\frac{1}{\sqrt{1-s+ds}}\left(
\begin{array}{cc}
\sqrt{1-s} & -\sqrt{ds} \\
\sqrt{ds} & \sqrt{1-s}
\end{array}
\right) \\
V_2 &= \sum_{s\in\mathcal{S}} \ket{s}\bra{s}\otimes\left(
\begin{array}{cc}
\frac{\alpha(s)}{d} & -\sqrt{1-\left(\frac{\alpha(s)}{d}\right)^2} \\
\sqrt{1-\left(\frac{\alpha(s)}{d}\right)^2} & \frac{\alpha(s)}{d}
\end{array}
\right), \\
\end{aligned}
\end{equation}
where $\alpha(s) = 1-s+ds$. Here $V_1$ is used for preparing the linear combination $(1-f(s))U_{H_0}+f(s)U_{H_1}$. Without $V_2$ the circuit would be a $(\alpha(s),l+n_0+n_1+2,0)$-block-encoding of $\sum_s \ket{s}\bra{s}\otimes H(s)$, but with $V_2$ it becomes a $(d,l+n_0+n_1+2,0)$-block-encoding, so that the normalizing factor is time-independent, as is required for the input model in \cite{LowWiebe2018}.

For the AQC with positive definite $A$ we have $n_0=1$ and $n_1=n+4$. For the Hermitian indefinite case we have $n_0=2$ and $n_1=n+4$. The increase of $n_0$ from $1$ to $2$ is due to the additional operation of linear combination of matrices. For $H_1$ we can perform one less matrix-matrix multiplication, and hence the value of $n_1$ remains unchanged (see Appendix \ref{sec:apdx_matrix_dilation}).

Following \cite[Corollary 4]{LowWiebe2018}, we may analyze the different components of costs in the Hamiltonian simulation of AQC. For time evolution from $s=0$ to $s=1$,  HAM-T is a $(dT,l+n_0+n_1+2,0)$-block-encoding of $\sum_s \ket{s}\bra{s}\otimes TH(s)$. With the scheduling function given in \cite{AnLin2019} we have $\|TH(s)\|=\mathcal{O}(Td)$ and $\|\frac{\mathrm{d(}TH(s))}{\mathrm{d}s}\|=\mathcal{O}(dT\kappa^{p-1})$. We choose $p=1.5$ and by \cite[Theorem 1]{AnLin2019} we have $T=\mathcal{O}(\kappa)$. We only need to simulate up to constant precision, and therefore we can set $l=\mathcal{O}(\log(d\kappa))$. The costs are then 
\begin{enumerate}
\item Queries to HAM-T: $\mathcal{O}\left(d\kappa\frac{\log(d\kappa)}{\log\log(d\kappa)}\right)$,
\item Primitive gates: $\mathcal{O}\left(d\kappa(n+\log(d\kappa))\frac{\log(d\kappa)}{\log\log(d\kappa)}\right)$,
\item Qubits: $\mathcal{O}(n+\log(d\kappa))$.
\end{enumerate}

\section{The matrix dilation method}
\label{sec:apdx_matrix_dilation}

\REV{In Theorem~\ref{thm:qlsp} and Theorem~\ref{thm:qlsp_zeno},} in order to extend the time-optimal AQC method, \REV{and the QZE-based method} to Hermitian indefinite matrices, we follow \cite[Theorem 2]{AnLin2019}, where $H_0$ and $H_1$\REV{, as constructed in Ref.~\cite{SubasiSommaOrsucci2019}} are given by
\begin{equation}
\begin{aligned}
H_0 &= \sigma_{+}\otimes [(\sigma_z \otimes I_N)Q_{+,b}] + \sigma_{-}\otimes [Q_{+,b}(\sigma_z \otimes I_N)], \\
H_1 &= \sigma_{+}\otimes [(\sigma_{\REV{x}} \otimes A)Q_{+,b}] + \sigma_{-}\otimes [Q_{+,b}(\sigma_{\REV{x}} \otimes A)].
\end{aligned}
\label{eqn:Hdilation}
\end{equation}
Here $\sigma_{\pm}=(\sigma_x \pm i\sigma_{y})\REV{/2}$ and $Q_{+,b}=I_{\REV{2}N} - \ket{+}\ket{b}\bra{+}\bra{b}$. The dimension of the dilated matrices $H_0,H_1$ is $4N$. The lower bound for the gap of $H(f)$ then becomes $\sqrt{(1-f)^2+f^2/\kappa^{\REV{2}}}$ \cite{SubasiSommaOrsucci2019}. However in order to simplify our analysis we give a weaker lower bound
\[
\Delta_*(f) = \frac{1}{\sqrt{2}}\left(1-f+\frac{f}{\kappa}\right),
\] 
\REV{which differs from the gap lower bound in \eqref{eq:gap_aqc} by a factor of $\sqrt{2}$.} The initial state is $\ket{0}\ket{-}\ket{b}$\REV{, where $\ket{-}=\frac{1}{\sqrt{2}}(\ket{0}-\ket{1})$,} and the goal is to obtain $\ket{0}\ket{+}\ket{x}$. \REV{In the AQC-based QLSP solver,} after running the AQC we can remove the second qubit by measuring it with respect to the $\{\ket{+},\ket{-}\}$ basis and accepting the result corresponding to $\ket{+}$. The resulting query complexity remains unchanged. We remark that the matrix dilation here is only needed for AQC. The eigenstate filtering procedure can still be applied to the original matrix of dimension $2N$. \REV{The same is true for the QZE-based method.}

For a general matrix, we may first consider the extended linear system. Define   
 an extended QLSP $\mathfrak{A}\ket{\mathfrak{x}} = \ket{\mathfrak{b}}$
in dimension $2N$ where
\begin{equation*}
    \mathfrak{A} = \sigma_+ \otimes A + \sigma_- \otimes A^{\dagger}
    =\left(\begin{array}{cc}
        0 & A \\
        A^\dagger & 0
    \end{array}\right), \quad  \ket{\mathfrak{b}} = \REV{\ket{0,b}}.
\end{equation*}
Here $\mathfrak{A}$ is a Hermitian matrix of dimension $2N$, 
with condition number $\kappa$ and $\|\mathfrak{A}\| = 1$, 
and $\ket{\mathfrak{x}} = \ket{1,x}$ solves the extended QLSP. 
Therefore the time-optimal AQC \REV{and the QZE procedure} can be applied to the Hermitian matrix $\mathfrak{A}$
to prepare an $\epsilon$-approximation of $x$. 
The dimension of the corresponding $H_0,H_1$ matrices is $8N$. Again the matrix dilation method used in Eq. \eqref{eqn:Hdilation} is not needed for the eigenstate filtering step. 


\section{Optimality of the Chebyshev filtering polynomial}
\label{sec:apdx_optimal_poly}

In this section we prove Lemma~\ref{lem:minimax_poly}. We define 
\[
Q_\ell(x;\Delta) = T_{\ell}\left(-1+2\frac{x^2-\Delta^2}{1-\Delta^2}\right),
\]
then $R_{\ell}(x;\Delta)=Q_{\ell}(x;\Delta)/Q_{\ell}(0;\Delta)$. \REV{Here $T_\ell(x)$ is the $\ell$-th Chebyshev polynomial of the first kind and $0<\Delta<1$.} We need to use the following lemma\REV{, which is similar to the well-known result discussed in \cite[Proposition 2.4]{sachdeva2013faster}, \cite[Theorem 6.25]{Saad2003}, and \cite[Theorem 7]{erdos1947some}}:
\begin{lem}
\label{lem:minimax_poly_2}
For any $p(x)\in\mathbb{P}_{2\ell}[x]$ satisfying $|p(x)|\leq1$ for
all $x\in\mathcal{D}_{\Delta}$, \REV{where $\mathcal{D}_\Delta=[-1,-\Delta]\cup[\Delta,1]$,} $|Q_{\ell}(x;\Delta)|\geq|p(x)|$ for all $x\notin\mathcal{D}_{\Delta}$.\end{lem}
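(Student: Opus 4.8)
The plan is to establish this via the classical Chebyshev comparison argument, that is, equioscillation of $Q_\ell$ combined with root counting, adapted to the set $\mathcal{D}_\Delta$ through the quadratic change of variables built into $Q_\ell$. First I would record the structural facts about $Q_\ell(x;\Delta)=T_\ell(g(x))$, where $g(x)=-1+2\frac{x^2-\Delta^2}{1-\Delta^2}$: it is a real polynomial of degree $2\ell$, even in $x$, and the map $g$ sends $x\in\mathcal{D}_\Delta$ (i.e. $x^2\in[\Delta^2,1]$) onto $y\in[-1,1]$, while it sends every $x\notin\mathcal{D}_\Delta$ (so $|x|<\Delta$ or $|x|>1$) to $|y|>1$. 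Consequently $|Q_\ell|\le 1$ on $\mathcal{D}_\Delta$, whereas $|Q_\ell(x)|=|T_\ell(g(x))|>1>0$ for every $x\notin\mathcal{D}_\Delta$; in particular $Q_\ell$ never vanishes outside $\mathcal{D}_\Delta$, which legitimizes the division performed below.

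The second ingredient is the equioscillation of $Q_\ell$ on $\mathcal{D}_\Delta$. Writing $y_k=\cos(k\pi/\ell)$ for the $\ell+1$ extremal points of $T_\ell$ and setting $x_k=\sqrt{\Delta^2+(y_k+1)(1-\Delta^2)/2}\in[\Delta,1]$, one checks $Q_\ell(\pm x_k)=(-1)^k$. Listing these $2\ell+2$ nodes in increasing order along the real line,
\[
-x_0<-x_1<\cdots<-x_\ell<x_\ell<x_{\ell-1}<\cdots<x_0,
\]
the sign of $Q_\ell$ is $(-1)^k$ at both $\pm x_k$, so the only consecutive pair carrying equal signs is the central one at $\pm\Delta=\pm x_\ell$; thus $Q_\ell$ exhibits exactly $2\ell$ sign alternations across $\mathcal{D}_\Delta$.

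Now I would argue by contradiction. Suppose $|p(x_*)|>|Q_\ell(x_*)|$ for some $x_*\notin\mathcal{D}_\Delta$, set $\lambda=p(x_*)/Q_\ell(x_*)$ (so $|\lambda|>1$, and $\lambda$ is real since $p$ is real with $p(0)=1$), and consider $s(x)=Q_\ell(x)-\lambda^{-1}p(x)\in\mathbb{P}_{2\ell}[x]$. By construction $s(x_*)=0$. At each node $\pm x_k\in\mathcal{D}_\Delta$ we have $|Q_\ell(\pm x_k)|=1$ while $|\lambda^{-1}p(\pm x_k)|\le|\lambda^{-1}|<1$ using $|p|\le 1$ on $\mathcal{D}_\Delta$; hence $s(\pm x_k)$ inherits the sign of $Q_\ell(\pm x_k)=(-1)^k$. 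The $2\ell$ sign alternations therefore force $2\ell$ distinct roots of $s$ lying in the open subintervals of $\mathcal{D}_\Delta$, while $x_*\notin\mathcal{D}_\Delta$ is one further distinct root, giving at least $2\ell+1$ roots for a polynomial of degree at most $2\ell$. This is impossible unless $s\equiv 0$; but $s\equiv 0$ means $p=\lambda Q_\ell$ with $|\lambda|>1$, so $|p(x_0)|=|\lambda|>1$ at the node $x_0=1\in\mathcal{D}_\Delta$, again contradicting the hypothesis $|p|\le 1$ on $\mathcal{D}_\Delta$. Either branch yields a contradiction, proving $|Q_\ell(x)|\ge|p(x)|$ for all $x\notin\mathcal{D}_\Delta$.

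The step I expect to be the main obstacle is the bookkeeping at the central gap: because the sign pattern does not alternate across $(-\Delta,\Delta)$, a naive count produces only $2\ell$ roots from the nodes, which merely ties the degree. The decisive observation is that the guaranteed extra root $x_*$ lies exactly outside $\mathcal{D}_\Delta$ (either in the central gap or beyond $\pm 1$) and is therefore distinct from all $2\ell$ interior sign-change roots, pushing the total strictly above $2\ell$. Verifying rigorously that these $2\ell+1$ roots are genuinely distinct, and dispatching the degenerate case $s\equiv 0$ separately, is where the argument must be watertight; the real-coefficient hypothesis keeps $\lambda$ real so that the sign comparison at the nodes is meaningful.
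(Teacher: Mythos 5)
Your proposal is correct and follows essentially the same route as the paper: form the auxiliary polynomial $Q_\ell - \lambda^{-1}p$, extract $2\ell$ distinct sign-change roots from the equioscillation of $Q_\ell$ on $\mathcal{D}_\Delta$ (noting the non-alternating central pair at $\pm\Delta$), add the root at the exterior point $x_*$ to exceed the degree bound, and dispatch the degenerate case $s\equiv 0$ by evaluating at $x=1$. The only blemish is the aside that $\lambda$ is real ``since $p$ is real with $p(0)=1$''; the hypothesis $p(0)=1$ is not part of this lemma, and realness of $\lambda$ already follows from $p$ and $Q_\ell$ being real polynomials evaluated at the real point $x_*$.
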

\begin{proof}
We prove by contradiction. If there
exists $q(x)\in\mathbb{P}_{2\ell}[x]$ such that $|q(x)|\leq1$ for all
$x\in\mathcal{D}_\Delta$ and there exists $y\notin\mathcal{D}_\Delta$ such that
$|q(y)|>|Q_{\ell}(\REV{y};\Delta)|$, then letting $h(x)=Q_{\ell}(x;\Delta)-q(x)\frac{Q_{\ell}(y;\Delta)}{q(y)}$,
we want to show $h(x)$ has at least $2\ell+1$ distinct zeros.

First note that there exist $-1=y_{1}<y_{2}<\cdots<y_{\ell+1}=1$ such
that $|T_{\ell}(y_{j})|=1$, and $T_{\ell}(y_{j})T_{\ell}(y_{j+1})=-1$. Therefore
there exist $\Delta=x_{1}<x_{2}<\cdots<x_{\ell+1}=1$ such that $|Q_{\ell}(\pm x_{j};\Delta)|=1$,
and $Q_{\ell}(x_{j};\Delta)Q_{\ell}(x_{j+1};\Delta)=-1$. In other words, $Q_{\ell}(\cdot;\Delta)$ maps
each $(x_{j},x_{j+1})$ and $(-x_{j+1},-x_{j})$ to $(-1,1)$, and
the mapping is bijective for each interval. Because $|\frac{Q_{\ell}(y;\Delta)}{q(y)}|<1$,
there exists $z_{j},w_{j}\in(x_{j},x_{j+1})$ for each $j$ such that
$h(z_{j})=h(-w_{j})=0$. Therefore $\{z_{j}\}$ and $\{-w_{j}\}$
give us $2\ell$ distinct zeros. Another zero can be found at $y$ as
$h(y)=Q_{\ell}(y)-Q_{\ell}(y)=0$. Therefore there are $2\ell+1$ distinct
zeros. 

However $h(x)$ is of degree at most $2\ell$. This shows $h(x)\equiv0$.
This is clearly impossible since $h(1)=Q_{\ell}(1;\Delta)-q(1)\frac{Q_{\ell}(y;\Delta)}{q(y)}=1-q(1)\frac{Q_{\ell}(y;\delta)}{q(y)}>0$. 
\end{proof}

\REV{
Lemma~\ref{lem:minimax_poly_2} shows that for any $y\notin\mathcal{D}_\Delta$,
\[
\max_{\substack{p(x)\in\mathbb{P}_{2\ell}[x] \\ |p(x)|\leq 1,\forall x\in \mathcal{D}_\Delta}} |p(y)|=|Q_\ell(y;\Delta)|.
\]
This is equivalent to
\[
\max_{\substack{p(x)\in\mathbb{P}_{2\ell}[x] }} \frac{|p(y)|}{\max_{x\in \mathcal{D}_\Delta}|p(x)|}=|Q_\ell(y;\Delta)|,
\]
which is in turn equivalent to
\[
\min_{\substack{p(x)\in\mathbb{P}_{2\ell}[x] }} \frac{\max_{x\in \mathcal{D}_\Delta}|p(x)|}{|p(y)|}=\frac{1}{|Q_\ell(y;\Delta)|},
\]
and
\[
\min_{\substack{p(x)\in\mathbb{P}_{2\ell}[x] \\ |p(y)|\leq 1 }} \max_{x\in \mathcal{D}_\Delta}|p(x)|=\frac{1}{|Q_\ell(y;\Delta)|}.
\]
}
This implies (i) of Lemma~\ref{lem:minimax_poly}\REV{: we only need to set $y=0$ and observe that 
\[
 \max_{x\in \mathcal{D}_\Delta}|R_{\ell}(x;\Delta)|= \frac{1}{|Q_\ell(0;\Delta)|},
\]
since the Chebyshev polynomials take value between $[-1,1]$ on the interval $[-1,1]$. From the above discussion we may derive a more general result, that $R_\ell(x;\Delta)$ solves the following minimax problem:
\[
\underset{\substack{p(x)\in\mathbb{P}_{2\ell}[x] \\ p(y)=R_\ell(y;\Delta)}}{\mathrm{minimize}}  \max_{x\in\mathcal{D}_{\Delta}}|p(x)|.
\]
}To prove (ii) \REV{of Lemma~\ref{lem:minimax_poly}}, we need to use the following lemma, \REV{which directly follows from \cite[Eq.~(6.112)]{Saad2003}}:
\begin{lem}
Let $T_{\ell}(x)$ be the $\ell$-th Chebyshev polynomial, then
\[
T_{\ell}(1+\delta)\geq\frac{1}{2}e^{\ell\sqrt{\delta}}
\]
for $0\leq\delta\leq3-2\sqrt{2}$.\end{lem}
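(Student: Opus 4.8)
The plan is to use the hyperbolic representation of the Chebyshev polynomials on $[1,\infty)$, namely $T_\ell(\cosh\theta)=\cosh(\ell\theta)$ for $\theta\ge 0$. Writing $1+\delta=\cosh\theta$ with $\theta=\operatorname{arccosh}(1+\delta)\ge 0$, we immediately obtain
\[
T_\ell(1+\delta)=\cosh(\ell\theta)=\frac{e^{\ell\theta}+e^{-\ell\theta}}{2}\ge \frac12 e^{\ell\theta}.
\]
Hence it suffices to show $\theta\ge\sqrt{\delta}$, since then $e^{\ell\theta}\ge e^{\ell\sqrt{\delta}}$ and the claim follows. Equivalently, one may start from the closed form $T_\ell(y)=\tfrac12\bigl[(y+\sqrt{y^2-1})^\ell+(y-\sqrt{y^2-1})^\ell\bigr]$, drop the nonnegative second term, and reduce to $y+\sqrt{y^2-1}\ge e^{\sqrt{\delta}}$ with $y=1+\delta$; since $y+\sqrt{y^2-1}=e^{\theta}$ this is the same inequality in exponentiated form.

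Next I would reduce $\theta\ge\sqrt{\delta}$ to an elementary inequality. Because $\cosh$ is strictly increasing on $[0,\infty)$ and $\theta=\operatorname{arccosh}(1+\delta)$, the bound $\theta\ge\sqrt{\delta}$ is equivalent to $1+\delta\ge\cosh(\sqrt{\delta})$. Substituting $s=\sqrt{\delta}$, so that the range $0\le\delta\le 3-2\sqrt2=(\sqrt2-1)^2$ becomes $0\le s\le\sqrt2-1$, this is exactly
\[
h(s):=1+s^2-\cosh s\ \ge\ 0,\qquad 0\le s\le \sqrt2-1.
\]

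Finally I would establish $h\ge 0$ by a two-step monotonicity argument. One computes $h(0)=0$, then $h'(s)=2s-\sinh s$ with $h'(0)=0$, and $h''(s)=2-\cosh s$. The second derivative is nonnegative precisely when $\cosh s\le 2$, i.e.\ for $0\le s\le\operatorname{arccosh}2=\ln(2+\sqrt3)\approx 1.317$. Since $\sqrt2-1\approx 0.414$ lies comfortably inside this range, $h''\ge 0$ on the whole interval of interest; consequently $h'$ is nondecreasing with $h'(0)=0$, so $h'\ge 0$, whence $h$ is nondecreasing with $h(0)=0$, giving $h\ge 0$ as required.

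The only real point to watch is this last step: the constant $3-2\sqrt2$ enters solely through the requirement that $h''=2-\cosh s\ge 0$ across the interval, and the monotonicity chain would break the moment $s$ exceeded $\operatorname{arccosh}2$. Because $\sqrt2-1$ sits well below that threshold the estimate is not tight, and in fact the stated inequality persists on the larger range $0\le\delta\le(\operatorname{arccosh}2)^2$; the conservative bound $3-2\sqrt2$ is simply what the application in Lemma~\ref{lem:minimax_poly} needs.
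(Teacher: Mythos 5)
Your proof is correct, and its first half coincides exactly with the paper's: both write $T_\ell(1+\delta)=\tfrac12\left(z^{\ell}+z^{-\ell}\right)$ with $z=e^{\theta}=1+\delta+\sqrt{2\delta+\delta^{2}}$, discard the nonnegative reciprocal term, and reduce everything to the single inequality $\theta=\operatorname{arccosh}(1+\delta)\geq\sqrt{\delta}$. Where you diverge is in proving that inequality. The paper bounds $z\geq 1+\sqrt{2\delta}$ and invokes $\log(1+\sqrt{2\delta})\geq\sqrt{2\delta}-\delta\geq\sqrt{\delta}$, the last comparison $(\sqrt{2}-1)\sqrt{\delta}\geq\delta$ being precisely what forces $\delta\leq(\sqrt2-1)^2=3-2\sqrt2$. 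You instead prove the equivalent statement $1+s^{2}\geq\cosh s$ for $s=\sqrt{\delta}$ by a second-derivative test, which only requires $\cosh s\leq 2$, i.e.\ $s\leq\operatorname{arccosh}2$. Your route is slightly longer but strictly stronger: it shows the lemma actually holds on the larger range $0\leq\delta\leq(\operatorname{arccosh}2)^{2}\approx 1.73$, whereas the paper's logarithm estimate is what pins the constant at $3-2\sqrt2\approx 0.17$. For the use made of this lemma in proving part (ii) of Lemma~\ref{lem:minimax_poly}, where $\delta=2\Delta^{2}\leq 1/6$, either range is ample, so the difference is cosmetic there, but your observation that the constant is an artifact of the proof rather than of the statement is a worthwhile one.
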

\begin{proof}
The Chebyshev polynomial can be rewritten as $T_{\ell}(x)=\frac{1}{2}(z^{\ell}+\frac{1}{z^{\ell}})$ for $x=\frac{1}{2}(z+\frac{1}{z})$.
Let $x=1+\delta$, then $z=1+\delta\pm\sqrt{2\delta+\delta^{2}}$.
The choice of $\pm$ does not change the value of $x$, so we choose $z=1+\delta+\sqrt{2\delta+\delta^{2}}\geq1+\sqrt{2\delta}$.
Since $\log(1+\sqrt{2\delta})\geq\sqrt{2\delta}-\delta\geq\sqrt{\delta}$ for $0\leq\delta\leq 3-2\sqrt{2}$,
we have $z^{\ell}\geq e^{\ell\sqrt{\delta}}$. Thus $T_{\ell}(x)\geq\frac{1}{2}e^{\ell\sqrt{\delta}}$.
\end{proof}

We use this lemma to prove (ii). Since $|T_\ell(-1+2\frac{-\Delta^2}{1-\Delta^2})|\geq T_\ell(1+2\Delta^2)$, when $\Delta^2\leq 1/12$, we have $2\Delta^2\leq 1/6< 3-2\sqrt{2}$. Thus by the above lemma we have $|T_\ell(-1+2\frac{-\Delta^2}{1-\Delta^2})|\geq \frac{1}{2}e^{\ell\sqrt{2\Delta^{\REV{2}}}}$. Since $|T_\ell(-1+2\frac{x^2-\Delta^2}{1-\Delta^2})|\leq 1$ for $x\in\mathcal{D}_\Delta$, we have the inequality in (ii). (iii) follows straightforwardly from the monotonicity of Chebyshev polynomials outside of $[-1,1]$.

\section{\REV{Properties of the eigenpath}}
\label{sec:apdx_eigenpath}



\REV{
In this section we construct a smooth one-parameter family of normalized quantum states $\{\ket{x(f)}\}$ satisfying Eqs.~\eqref{eq:def_xf} and \eqref{eq:geometric_phase}. $\{\ket{0}\ket{x(f)}\}$ then gives an eigenpath of the one-parameter family of Hamiltonians $\{H(f)\}$. We also prove the inequality \eqref{eq:bound_eigenpath_derivative}.
}

\REV{
We define 
\[
\ket{y(f)} = ((1-f)I+fA)^{-1}\ket{b}.
\]
Then $\braket{y(f)|y(f)}\geq 1$ because $\|(1-f)I+fA\|\leq 1$. Also $\ket{y(f)}$ is a smooth function of $f$ for $f\in(0,1)$ because $(1-f)I+fA$ is invertible in this interval, under the assumption that $A$ is Hermitian positive-definite with eigenvalues in $[1/\kappa,1]$. We construct $\ket{x(f)}$ through
\[
\ket{x(f)} = c(f)\ket{y(f)},
\]
with $c(f)$ solving the following ODE
\begin{equation}
    c'(f) = -c(f)\frac{\braket{y(f)|\partial_f|y(f)}}{\braket{y(f)|y(f)}},\quad c(0) = 1.
\end{equation}
This is a linear ODE and the right-hand side depends smoothly on $f$. Therefore the solution exists and is unique for  $f\in [0,1]$. It then follows that this construction of $\ket{x(f)}$ satisfies \eqref{eq:def_xf}. Since 
\[
\partial_f \ket{x(f)} = c'(f)\ket{y(f)} + c(f)\partial_f \ket{y(f)},
\]
we have
\[
\bra{x(f)} \partial_f \ket{x(f)} = c^*(f) [c'(f)\braket{y(f)|y(f)} + c(f)\bra{y(f)}\partial_f \ket{y(f)}]=0.
\]
Therefore Eq.~\eqref{eq:geometric_phase} is satisfied, and this in turn ensures $\ket{x(f)}$ is normalized. In this way we have constructed $\{\ket{x(f)}\}$ that satisfies all the requirements in Section~\ref{sec:quantum_zeno_eigenpath}. For $H(f)=(1-f)H_0+fH_1$, where $H_0$ and $H_1$ are defined in Eqs.~\eqref{eq:H0} and \eqref{eq:H1} respectively, we can see $H(f)\ket{\bar{x}(f)}=0$ where $\ket{\bar{x}(f)}=\ket{0}\ket{x(f)}$. Therefore $\{\ket{\bar{x}(f)}\}$ is a smooth eigenpath. 
}

\REV{
If there is another eigenpath $\{\ket{0}\ket{w(f)}\}$ satisfying $\braket{w(f)|\partial_f|w(f)}=0$, then it follows that $((1-f)I+fA)\ket{w(f)}\propto \ket{b}$. Therefore $\ket{w(f)}=e^{\I\theta(f)}\ket{x(f)}$ for some differentiable $\theta(f)$. By the geometric phase condition we can show $e^{\I\theta(f)}=1$ for all $f$ by also taking into account the initial condition $\ket{w(f)}=\ket{b}$,  and therefore $\ket{w(f)}=\ket{x(f)}$. This proves uniqueness.
}

\REV{
Now we denote by $\varepsilon_j(f)$ the eigenvalues of $H(f)$. The corresponding eigenstates are denoted by $\ket{w_j(f)}$. Because $((1-f)I+fA)\ket{x(f)}\propto \ket{b}$, we have $H(f)\ket{\bar{x}(f)}=0$. Since $\ket{x(f)}$, and as a result $\ket{\bar{x}(f)}$, is differentiable, taking derivative with respect to $f$ we have
\[
H'(f)\ket{\bar{x}(f)} + H(f)\partial_f \ket{\bar{x}(f)} = 0.
\]
Therefore
\[
\bra{w_j(f)}H'(f)\ket{\bar{x}(f)} + \bra{w_j(f)}H(f)\partial_f \ket{\bar{x}(f)} = 0.
\]
And this leads to
\[
 \bra{w_j(f)}\partial_f \ket{\bar{x}(f)} = -\frac{\bra{w_j(f)}H'(f)\ket{\bar{x}(f)}}{\varepsilon_j(f)}
\]
for any $j$ such that $\varepsilon_j(f)\neq 0$. The null space of $H(f)$ is spanned by $\ket{1}\ket{b}$ and $\ket{\bar{x}(f)}$. We have $(\bra{1}\bra{b})\ket{\bar{x}(f)}=\braket{1|0}\braket{b|x(f)}=0$, and $\braket{\bar{x}(f)|\partial_f|\bar{x}(f)}=0$ because of the geometric phase condition \eqref{eq:geometric_phase}. Since all $\ket{w_j(f)}$ such that $\varepsilon_j(f)\neq 0$, together with $\ket{1}\ket{b}$ and $\ket{\bar{x}(f)}$ form a basis of the Hilbert space, we have
\[
\partial_f \ket{\bar{x}(f)} = -\sum_{j:\varepsilon_j(f)\neq 0}\frac{\ket{w_j(f)}\bra{w_j(f)}H'(f)\ket{\bar{x}(f)}}{\varepsilon_j(f)}
\]
Therefore
\[
\begin{aligned}
\|\partial_f \ket{\bar{x}(f)}\|^2 & = \sum_{j:\varepsilon_j(f)\neq 0}\frac{|\bra{w_j(f)}H'(f)\ket{\bar{x}(f)}|^2}{\varepsilon_j^2(f)} \\
&\leq \frac{1}{\Delta_*(f)^2}\sum_{j:\varepsilon_j(f)\neq 0} |\bra{w_j(f)}H'(f)\ket{\bar{x}(f)}|^2 \\
&\leq \frac{1}{\Delta_*(f)^2}\|H'(f)\ket{\bar{x}(f)}\|^2
\end{aligned}
\]
From the definition of $H(f)$ it can be seen that $\|H'(f)\|\leq 2$. Therefore we have proved the inequality \eqref{eq:bound_eigenpath_derivative}.
}

\section{Success probability of Quantum Zeno effect QLSP algorithm}
\label{sec:apdx_success_probability}

\REV{In this appendix we rigorously prove a constant success probability lower bound for the QZE-based QLSP algorithm in Theorem~\ref{thm:qlsp_zeno}. In Section~\ref{sec:zeno_qlsp_estimates} we gave a simpler but non-rigorous proof of a constant success probability lower bound by assuming the projection for each $H(f_j)$ is done without error, i.e. $\epsilon_P=0$. Here we do not make such an assumption and show we can still find such a lower bound.}  
We will need to use the following elementary inequality, which can be easily proved using induction.
\begin{lem}
\label{lem:elementary_ineq}
If $0<a_j<1$, $0<b_j<1$, $j=0,1,2,\ldots,R-1$, then
\[
\prod_{j=0}^{\REV{R}-1} (a_j-b_j) \geq \prod_{j=0}^{\REV{R}-1} a_j - \sum_{j=0}^{\REV{R}-1} b_j.
\]
\end{lem}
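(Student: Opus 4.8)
The plan is to prove the inequality by induction on the number of factors $R$, introducing the shorthand $P_R=\prod_{j=0}^{R-1}(a_j-b_j)$, $A_R=\prod_{j=0}^{R-1}a_j$, and $B_R=\sum_{j=0}^{R-1}b_j$, so that the claim reads $P_R\ge A_R-B_R$. Before starting the induction I would record two elementary consequences of the hypotheses $0<a_j<1$ and $0<b_j<1$: first, $0<A_R<1$ and $B_R>0$; second, each factor satisfies $|a_j-b_j|<1$, so that $|P_R|<1$ and in particular $P_R<1$. This last bound is the quiet workhorse of the argument.

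The base case $R=1$ is immediate, since then $P_1=a_0-b_0=A_1-B_1$ and equality holds. For the inductive step I would assume $P_R\ge A_R-B_R$ and pass to $R+1$ factors, writing $P_{R+1}=(a_R-b_R)P_R$, $A_{R+1}=a_RA_R$, and $B_{R+1}=B_R+b_R$. The crucial point is that I must \emph{not} multiply the induction hypothesis by $a_R-b_R$ directly, because that factor may be negative and would flip the inequality; instead I would multiply the hypothesis only by the genuinely positive quantity $a_R$, obtaining $a_RP_R\ge a_RA_R-a_RB_R$, and then split $P_{R+1}=a_RP_R-b_RP_R$.

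This reduces the desired inequality $P_{R+1}\ge a_RA_R-B_R-b_R$ to showing
\[
(1-a_R)B_R+b_R(1-P_R)\ge 0 .
\]
Here I would conclude by observing that both summands are nonnegative: $1-a_R>0$ together with $B_R>0$ handles the first, while $b_R>0$ together with the preliminary bound $P_R<1$ handles the second. The main obstacle is exactly the sign of the factor $a_R-b_R$, which defeats the naive approach of multiplying inequalities termwise; the resolution is to peel off only the positive part $a_R$ and to lean on the a priori bound $P_R<1$ to control the leftover $-b_RP_R$ contribution. Everything else is routine bookkeeping.
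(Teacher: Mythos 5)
Your proof is correct, and it follows exactly the route the paper indicates: the paper states only that the lemma ``can be easily proved using induction'' and omits the details, which your argument supplies. The key observations you make -- multiplying the induction hypothesis only by the positive factor $a_R$ rather than by the possibly negative $a_R-b_R$, and using the a priori bound $P_R<1$ to control the leftover $-b_RP_R$ term -- are precisely what is needed to make the induction go through.
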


\REV{We first recall the definition of the sequence of quantum states $\{\ket{x(f_j)}\}$, with each $\ket{x(f_j)}$ defined through \eqref{eq:def_xf} and \eqref{eq:geometric_phase}, satisfying $H(f_j)\ket{0}\ket{x(f_j)}=0$, and the sequence of quantum states $\{\ket{\wt{x}(f_j)}\}$, with each $\ket{\wt{x}(f_j)}$ defined recursively by \eqref{eq:approx_xf}.}
We need to use the following bound for the overlap between $\ket{x(f_j)}$ and $\ket{x(f_{j+1})}$ derived from Eqs. \eqref{eq:xf_overlap} \eqref{eq:xf_distance} and \eqref{eq:curve_length_reparam}.
\begin{equation}
\label{eq:xf_overlap_new}
|\braket{x(f_j)|x(f_{j+1})}| \geq 1-\frac{1}{2}\|\ket{x(f_{j+1})}-\ket{x(f_j)}\|^2\geq 1-\frac{2\log^2(\kappa)}{M^2(1-1/\kappa)^2}.
\end{equation}

With these tools we will first bound several overlaps in the following lemma
\begin{lem}
\label{lem:bound_overlap}
When $M\geq \REV{\frac{4\log^2(\kappa)}{(1-1/\kappa)^2}}$ and $\epsilon_P\leq \frac{1}{128}$, we have for $j=0,1,\ldots,M-1$: 

(i) $|\braket{x(f_j)|x(f_{j+1}})|\geq 1-\frac{1}{2M}$, 

(ii) $|\braket{x(f_j)|\wt{x}(f_j)}|\geq 1-4\epsilon_P$, 

(iii) $|\braket{\wt{x}(f_j)|x(f_{j+1})}|\geq 1-\frac{1}{2M}-4\epsilon_P-2\sqrt{2\epsilon_P}$.
\end{lem}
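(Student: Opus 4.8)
The plan is to prove the three bounds in an interlocking fashion: (i) is an immediate consequence of the eigenpath estimates already in hand, while (ii) and (iii) are obtained together by induction on $j$, the recursion \eqref{eq:approx_xf} and the block-encoding error $\|\wt{P}_0(f_j)-P_0(f_j)\|\leq \epsilon_P$ being the only inputs beyond (i).

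First I would dispose of part (i). Inequality \eqref{eq:xf_overlap_new} already gives $|\braket{x(f_j)|x(f_{j+1})}|\geq 1-\frac{2\log^2(\kappa)}{M^2(1-1/\kappa)^2}$, and substituting the hypothesis $M\geq \frac{4\log^2(\kappa)}{(1-1/\kappa)^2}$ bounds the subtracted term by $\frac{1}{2M}$, which is exactly (i). No projection error enters here. I would also record the elementary fact that this same hypothesis forces $M\geq 4$, since $\frac{\log\kappa}{1-1/\kappa}\geq 1$ for all $\kappa>1$.

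For (ii) and (iii) I would induct on $j$, exploiting that $P_0(f_j)=\ket{x(f_j)}\bra{x(f_j)}$ is the rank-one projector onto $\ket{x(f_j)}$. The base case $j=0$ is immediate: since $\ket{\wt{x}(f_0)}=\ket{b}=\ket{x(0)}$, the overlap in (ii) is exactly $1$, and (iii) reduces to (i). For the inductive step I would first derive (ii)$_j$ from (iii)$_{j-1}$. Writing $c_j=|\braket{x(f_j)|\wt{x}(f_{j-1})}|$, the exact projection obeys $P_0(f_j)\ket{\wt{x}(f_{j-1})}=\ket{x(f_j)}\braket{x(f_j)|\wt{x}(f_{j-1})}$, so both $\|P_0(f_j)\ket{\wt{x}(f_{j-1})}\|$ and $|\braket{x(f_j)|P_0(f_j)|\wt{x}(f_{j-1})}|$ equal $c_j$; replacing $P_0$ by $\wt{P}_0$ perturbs numerator and denominator of the normalized overlap by at most $\epsilon_P$, giving
\[
|\braket{x(f_j)|\wt{x}(f_j)}| \geq \frac{c_j-\epsilon_P}{c_j+\epsilon_P} = 1-\frac{2\epsilon_P}{c_j+\epsilon_P}.
\]
The inductive hypothesis (iii)$_{j-1}$ supplies $c_j\geq 1-\frac{1}{2M}-4\epsilon_P-2\sqrt{2\epsilon_P}$, and inserting $M\geq 4$ and $\epsilon_P\leq 1/128$ shows $c_j>\frac{1}{2}$, whence $\frac{2\epsilon_P}{c_j+\epsilon_P}<4\epsilon_P$ and (ii)$_j$ follows.

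Finally I would obtain (iii)$_j$ from (ii)$_j$ and (i)$_j$ by a direct decomposition. Writing $\ket{\wt{x}(f_j)}=\alpha\ket{x(f_j)}+\beta\ket{x(f_j)^\perp}$ with $|\alpha|=|\braket{x(f_j)|\wt{x}(f_j)}|\geq 1-4\epsilon_P$, the orthogonal weight obeys $|\beta|=\sqrt{1-|\alpha|^2}\leq\sqrt{8\epsilon_P}=2\sqrt{2\epsilon_P}$, and a triangle inequality yields
\[
|\braket{\wt{x}(f_j)|x(f_{j+1})}| \geq |\alpha|\,|\braket{x(f_j)|x(f_{j+1})}| - |\beta| \geq (1-4\epsilon_P)(1-\tfrac{1}{2M}) - 2\sqrt{2\epsilon_P},
\]
which exceeds $1-\frac{1}{2M}-4\epsilon_P-2\sqrt{2\epsilon_P}$ after discarding a positive cross term. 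The main obstacle is the bookkeeping that closes the induction: the estimate (ii)$_j$ is only useful once the normalization denominator $c_j$ is bounded away from zero, and that lower bound comes precisely from (iii)$_{j-1}$, so the delicate point is to verify that the prescribed constants $M\geq 4$ and $\epsilon_P\leq 1/128$ keep $c_j$ safely above $\frac{1}{2}$ at every step, thereby preventing the accumulated errors from derailing the recursion.
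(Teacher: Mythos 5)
Your proposal is correct and follows essentially the same route as the paper's proof: (i) from the eigenpath bound, then a mutual induction in which the exact projector $P_0(f_j)$ is perturbed by $\epsilon_P$ in numerator and denominator to get (ii) from (iii)$_{j-1}$, and a $P_0(f_j)$/$(I-P_0(f_j))$ decomposition to get (iii) from (ii); the paper merely phrases the induction as a recurrence for $\nu_j = 1-|\braket{x(f_j)|\wt{x}(f_j)}|$ rather than as mutually dependent statements. The constants and threshold ($c_j>1/2$ under $M\geq 4$, $\epsilon_P\leq 1/128$) match the paper's.
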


\begin{proof}
(i) derives directly from (\ref{eq:xf_overlap_new}). We then want to derive (ii) and (iii) inductively. First we have
\[
\begin{aligned}
|\braket{\wt{x}(f_j)|x(f_{j+1})}| &= |\braket{\wt{x}(f_j)|P_0(f_j)|x(f_{j+1})}
+\braket{\wt{x}(f_j)|I-P_0(f_j)|x(f_{j+1})}| \\
&\geq |\braket{x(f_j)|x(f_{j+1})}|\cdot|\braket{x(f_j)|\wt{x}(f_j)}| - \|(I-P_0(f_j))\ket{\wt{x}(f_j)}\|.
\end{aligned}
\]
Because
\[
\|(I-P_0(f_j))\ket{\wt{x}(f_j)}\|^2=1-|\braket{x(f_j)|\wt{x}(f_j)}|^2,
\]
we then have
\[
|\braket{\wt{x}(f_j)|x(f_{j+1})}| \geq |\braket{x(f_j)|x(f_{j+1})}|\cdot|\braket{x(f_j)|\wt{x}(f_j)}| - \sqrt{1-|\braket{x(f_j)|\wt{x}(f_j)}|^2}.
\]
We denote
\[
|\braket{x(f_j)|\wt{x}(f_j)}| = 1-\nu_j,
\]
then
\begin{equation}
\label{eq:overlap_type_3}
\begin{aligned}
|\braket{\wt{x}(f_j)|x(f_{j+1})}| &\geq (1-\frac{1}{2M})(1-\nu_j)-\sqrt{1-(1-\nu_j)^2} \\
&\geq 1-\frac{1}{2M}-\nu_j-\sqrt{2\nu_j}.
\end{aligned}
\end{equation}

We now bound $\nu_{j+1}$ using $|\braket{\wt{x}(f_j)|x(f_{j+1})}|$. First \REV{using the fact that $\|\wt{P}_0(f_{j})-P_0(f_{j})\|\leq\epsilon_P$ where the approximate projection operator $\wt{P}_0(f_j)$ is defined in Eq.~\eqref{eq:def_wtP0},} we have 
\begin{equation}
\label{eq:fidelity_time_j}
\begin{aligned}
|\braket{\wt{x}(f_{j+1})|x(f_{j+1})}| &=  \frac{|\braket{\wt{x}(f_{j})|\wt{P}_0(f_{j+1})|x(f_{j+1})}|}{\|\wt{P}_0(f_{j+1})\ket{\wt{x}(f_j)}\|} \\
&\geq \frac{|\braket{\wt{x}(f_{j})|P_0(f_{j+1})|x(f_{j+1})}|-\epsilon_P}{\|P_0(f_{j+1})\ket{\wt{x}(f_j)}\|+\epsilon_P} \\
&= \frac{|\braket{\wt{x}(f_j)|x(f_{j+1})}|-\epsilon_P}{|\braket{\wt{x}(f_j)|x(f_{j+1})}|+\epsilon_P}
 \\
&\geq 1-\frac{2\epsilon_P}{|\braket{\wt{x}(f_j)|x(f_{j+1})}|}. \\
\end{aligned}
\end{equation}
This leads to
\begin{equation}
\label{eq:recurrence_nu}
\nu_{j+1} \leq \frac{2\epsilon_P}{|\braket{\wt{x}(f_j)|x(f_{j+1})}|} 
\leq \frac{2\epsilon_P}{1-\frac{1}{2M}-\nu_j-\sqrt{2\nu_j}},
\end{equation}
which establishes a recurrence relation for $\nu_j$. Because $\nu_0=0$, $M\REV{\geq \frac{4\log^2(\kappa)}{(1-1/\kappa)^2}} \geq 4$ and $\epsilon_P\leq\frac{1}{128}$, we can prove inductively that $\nu_j\leq \frac{1}{32}$. Taking this into (\ref{eq:recurrence_nu}) we have
\[
\nu_{j+1}\leq 4\epsilon_P,
\]
which proves (ii). Taking this into (\ref{eq:overlap_type_3}) we have (iii).

\end{proof}

\REV{
An immediate corollary of (iii) in the above lemma is
\begin{equation}
\label{eq:overlap_bound_1/2}
|\braket{\wt{x}(f_j)|x(f_{j+1})}|\geq 1-\frac{1}{2M}-4\epsilon_P-2\sqrt{2\epsilon_P} \geq \frac{1}{2},
\end{equation}
for $j=0,1,\ldots,M-1$, $M\geq 4$, and $\epsilon_P\leq 1/128$.}
With these tools we are now ready to estimate the success probability $\psucc$. We have
\begin{equation}
\label{eq:success_prob_bound_step_1}
\begin{aligned}
\psucc &= \prod_{j=0}^{M-1}\|\wt{P}_0(f_{j+1})\ket{\wt{x}(f_j)}\|^2 \\
&\geq \left( \prod_{j=0}^{M-2}\left(\|P_0(f_{j+1})\ket{\wt{x}(f_j)}\| - \epsilon_P\right) \right)^2 \REV{\left(\|P_0(1)\ket{\wt{x}(f_{M-1})}\| - \frac{\epsilon}{4}\right)^2}  \\
&\geq \REV{\frac{1}{16}}\left( \prod_{j=0}^{M-\REV{2}}\left(\|P_0(f_{j+1})\ket{\wt{x}(f_j)}\| - \epsilon_P\right) \right)^2 \\
&\REV{\geq \frac{1}{16}\left( \prod_{j=0}^{M-1}\left(\|P_0(f_{j+1})\ket{\wt{x}(f_j)}\| - \epsilon_P\right) \right)^2} \\
&\geq \REV{\frac{1}{16}}\left( \prod_{j=0}^{M-1}\|P_0(f_{j+1})\ket{\wt{x}(f_j)}\| - M\epsilon_P \right)^2.
\end{aligned}
\end{equation}
In the last line we have used Lemma~\ref{lem:elementary_ineq}. In the second line the $j=M-1$ case is treated differently because in the last step we need to attain \REV{$\epsilon/4$} precision \REV{for eigenstate filtering. We bound the success probability of the last step using
\[
\left(\|P_0(1)\ket{\wt{x}(f_{M-1})}\| - \frac{\epsilon}{4}\right)^2 
=\left(\|\braket{x(f_M)|\wt{x}(f_{M-1})}\| - \frac{\epsilon}{4}\right)^2 
\geq \left(\frac{1}{2}-\frac{1}{4}\right)^2=\frac{1}{16},
\]
where we have used Eq.~\eqref{eq:overlap_bound_1/2} for $j=M-1$.
} 
This inequality motivates us to bound $\prod_{j=0}^{M-1}\|P_0(f_{j+1})\ket{\wt{x}(f_j)}\|$, for which, by Lemma~\ref{lem:bound_overlap}, we have
\begin{equation}
\label{eq:success_prob_component_1}
\begin{aligned}
\prod_{j=0}^{M-1}\|P_0(f_{j+1})\ket{\wt{x}(f_j)}\| &= \prod_{j=0}^{M-1}|\braket{\wt{x}(f_j)|x(f_{j+1})}| \\
&\geq \left(1-\frac{1}{2M}-4\epsilon_P-2\sqrt{2\epsilon_P}\right)^M \\
&\geq \frac{1}{2}-M(4\epsilon_P+2\sqrt{2\epsilon_P}).
\end{aligned}
\end{equation}
In Lemma~\ref{lem:bound_overlap} we have required that $\epsilon_P\leq \frac{1}{128}$ and $M\geq \REV{\frac{4\log^2(\kappa)}{(1-1/\kappa)^2}}\geq 4$. Therefore when we further require $\epsilon_P\leq \frac{1}{162M^2}$ we have
\[
\prod_{j=0}^{M-1}\|P_0(f_{j+1})\ket{\wt{x}(f_j)}\| \geq \frac{1}{4}.
\]
Substituting this into (\ref{eq:success_prob_bound_step_1}) we have
\[
\psucc \geq \REV{\frac{1}{16}}\left(\frac{1}{4}-M\epsilon_P\right)^2 \geq  \REV{\frac{1}{16}}\left(\frac{1}{4}-\frac{1}{162M}\right)^2\geq \REV{\frac{1}{400}},
\]
since $M\geq 4>1$. \REV{We remark that because we mostly only care about the asymptotic complexity we did not bound this probability very tightly, and this bound may be a very loose one. The actual success probability can be much larger than this and can be further increased by optimizing the choice of $M$ and $\epsilon_P$. }

\end{document}